\documentclass[12pt]{article}
\usepackage[utf8]{inputenc}

\usepackage{natbib}
\newfam\hebfam
\font\tmp=rcjhbltx at11pt \textfont\hebfam=\tmp

\edef\declfam{\ifcase\hebfam 
     0\or1\or2\or3\or4\or5\or6\or7\or8\or9\or A\or B\or C\or D\or E\or F\fi}

\mathchardef\tsadi   = "0\declfam 6C

\usepackage{appendix}
\usepackage{listings}
\usepackage{color} 
\usepackage{tikz}
\usetikzlibrary{matrix,backgrounds}
\usepackage{pgfplots}
\pgfplotsset{compat=1.16}
\usepackage{amsfonts}
\usepackage{enumitem}
\usepackage{mathtools, nccmath}
\usepackage{multicol}
\usepackage{arydshln}
\usepackage{authblk}
\usepackage{multirow}
\usepackage{graphicx}
\usepackage{amssymb}
\usepackage{cancel}
\usepackage{booktabs}
\usepackage{ulem}
\usepackage{tikz-network}
\usepackage{setspace}
\usepackage[left=1.2in,right=1.2in,top=1.2in,bottom=1.2in]{geometry}
\usepackage{verbatim}
\usepackage{graphicx}
\graphicspath{ {./figures/} }
\usepackage{subcaption}
\usepackage{amsmath}
\usepackage{lineno}
\usepackage{amsthm}
\usepackage{pgfkeys}
\newtheorem{theorem}{Theorem}

\newtheorem{lemma}{Lemma}

\newtheorem{assumption}{Assumption}
\newtheorem{proposition}{Proposition}

\usepackage{setspace}
\usepackage{caption}
\title{The Cointegrated Matrix Autoregressive Model\thanks{We thank Paolo Paruolo, Luca Barbaglia,  Alain Hecq, Tomas Del Barrio Castro, Annika Camehl, and the participants to the Third Italian Conference on Economic Statistics (2025, Naples), the Computational and Financial Econometrics Conference 2025 (London), and the 33$^{rd}$ Symposium of the Society of for Nonlinear Dynamics and Econometrics (2026, Lisbon) for stimulating comments and suggestions. Both authors acknowledge financial support from project MUR 2022C799SX - \textit{PRICE: A New Paradigm for High-Frequency Finance}, funded by the European Union - Next Generation EU, Mission 4 Component 2, CUP C53D23002680001.}}

\author[a]{Emanuele Lopetuso\thanks{Corresponding author: University of Padova, Department of Statistical Sciences, e-mail: emanuele.lopetuso@unipd.it}}
\author[a]{Massimiliano Caporin}

\affil[a]{Department of Statistical Sciences, University of Padova, Italy} 
\date{}

\begin{document}

\maketitle

\begin{abstract}
Traditional econometric analyzes represent observations as vectors despite the inherent complexity of empirical data structures. When data are organized along dual classification dimensions, a matrix representation provides a more natural and interpretable framework. Building on recent advances in matrix autoregressive (MAR) modeling, this study introduces a novel error correction representation tailored for matrix-structured data. Through comparative analysis with existing methodologies, we demonstrate two critical advancements. First, the proposed model preserves the interpretative foundations of conventional cointegration analysis, with coefficients that explicitly capture dynamics rooted in adjustment toward steady-state positions. Second, in contrast to previous formulations, our error correction framework allows for an equivalent matrix autoregressive representation, preserving the fundamental structure of the data in both specifications. This ensures that the matrix representation reflects an intrinsic characteristic of the data.
\\
\textbf{Keywords}:
 Matrix AutoRegression; Cointegration \\
\end{abstract}


\section{Introduction}

Multivariate time-series data have traditionally been treated as vectors despite the fact that the relationships among variables can give rise to more complex structures. Recent studies have focused on observations arranged in matrix form, extending the vector autoregressive (hereafter VAR) model to its matrix counterpart, commonly referred to as the matrix autoregressive (hereafter MAR) model.

Matrix-valued time series naturally arise when data are collected at the intersection of two classifications. In economics, a common example is the analysis of macroeconomic indicators across countries or regions. For example, \cite{wang2009bayesian} examines the simultaneous evolution of employment statistics in US states and industrial sectors, while \cite{zhang2024additive} investigates relationships among the main macroeconomic indicators in different countries. In such cases, it is natural to structure the data as matrices, with the economic indicators on the rows and the corresponding regional entities on the columns (or vice versa).

Another prominent application involves the analysis of bilateral trade flows, where the matrix representation effectively captures import-export relationships between countries. This framework facilitates the modeling of trade dynamics, as illustrated in \cite{chen2019modeling}.

In finance, stock market trends exhibit dependencies not only over time and across economic variables but also across different sectors, such as manufacturing and transportation. Furthermore, measures of financial connectedness are often derived from sequences of adjacency matrices that represent networks of financial assets; \cite{billio2021matrix}. These applications have spurred the development of high-dimensional methods for matrix-valued time series, including autoregressive models, \cite{billio2023bayesian}, matrix panel regression models, \cite{kapetanios2021estimation}, and matrix factor models, \cite{chen2023statistical}.

The utility of matrix-valued time series extends beyond economics. In spatio-temporal analysis, where classifications often involve geographical and temporal dimensions, the matrix framework provides a natural representation (e.g., \citet{sun2023matrix} and \citet{yu2024dynamic}). Neuroscience is another domain where matrix-valued data is prevalent. Functional magnetic resonance (fMRI) data, for example, capture interactions across different brain regions, with dependencies often visualized through matrices representing connectivity patterns; \cite{samadi2025matrix}.

The autoregressive representations are of particular interest in economics. The MAR model, as proposed in \cite{chen2021autoregressive}, is expressed as:
\begin{equation}\label{MAR}
    X_t = \Lambda X_{t-1} \Psi' + E_t,
\end{equation}
where $X_t$ is an $m \times n$ matrix, $\Lambda$ and $\Psi$ are square matrices of dimensions $m$ and $n$, respectively, and $E_t$ is an $m \times n$ white noise matrix. The vectorization of \eqref{MAR} yields the following:
\begin{equation}\label{VAR}
    \text{vec}(X_t) = (\Psi \otimes \Lambda) \text{vec}(X_{t-1}) + \text{vec}(E_t),
\end{equation}
which corresponds to the classical VAR form under specific (Kronecker-based) parameter restrictions.

The matrix representation offers two main advantages. First, it provides a significant reduction in dimensionality, decreasing the number of coefficients from $n^2m^2$ to $n^2 + m^2$. Second, it preserves the inherent structure of the data, facilitating the interpretation of relationships between rows and columns. This paper focuses on the latter advantage, examining the interpretability of MAR coefficients in the context of non-stationary and cointegrated systems.

Multivariate cointegrated systems, such as the vector error correction model (hereafter VECM), are widely valued for their economic interpretability. The VECM is represented as:
\begin{equation}\label{VECM}
    \Delta Y_t = \alpha \beta' Y_{t-1} + \sum_{i=1}^k \Gamma_i \Delta Y_{t-i} + \varepsilon_t,
\end{equation}
where $Y_t$ is a $k-$dimensional vector, $\alpha$ and $\beta$ are $k \times r$ matrices, and $r<k$ is the cointegration rank. Economically, the adjustment matrix $\alpha$ captures the behavior of agents responding to equilibrium errors by driving the economic variables back to the steady-state position defined by $\beta' Y = 0$. Specifically, $\beta$ characterizes the relations to which the variables are attracted, while $\alpha$ quantifies the direction and speed of their adjustment in response to deviations from equilibrium.

The connection between the coefficients of the VECM and the economic dynamics allows us to test numerous relevant economic hypotheses (see \citet{johansen1991estimation}). In particular, hypotheses about the cointegration matrix $\beta$ enable the assessment of the equilibrium relationships suggested in the macroeconomic literature. Furthermore, we can interpret hypotheses on the adjustment matrix $\alpha$ as assertions about adjustment mechanisms, providing insights into which variables remain unaffected by disequilibrium conditions, possibly establishing non-causality relationships (see, for example, \citet{juselius2007taking} and \citet{juselius2017real}).\footnote{A null row in the adjustment matrix $\alpha$ implies weak exogeneity for the long-run coefficients $\beta$, as defined in \cite{engle1983exogeneity} (see \citet{johansen1992testing} for a formal proof).} Moreover, testing restrictions on $\alpha$ allows the identification of the variables serving as driving trends.

This paper extends the analytical framework of cointegrated systems to matrix-valued time series. This topic has recently received attention, with a proposal put forward by \cite{li2024cointegrated} and the work of \cite{hecq2024detecting}. In their modeling strategy, the cointegrated matrix structure is obtained by imposing restrictions on the coefficients of a vector error-correction model and is driven by a bilinear equilibrium condition. We take a different perspective and derive the error-correction representation directly from an underlying matrix autoregressive (MAR) specification. This distinction is important. In our framework, the matrix structure is preserved both in the autoregressive representation in levels and in the corresponding error-correction form, so that the matrix nature of the model is intrinsic to the data-generating process, rather than being a consequence of a particular parametrization. Moreover, our model is driven by two sets of linear equilibrium conditions acting separately but jointly, over rows and over columns. Further, we show that existing cointegrated models for matrix-valued time series do not generally share this property, as their matrix form depends on whether the model is written in autoregressive or error-correction form.

The second contribution of the paper is interpretative. In our formulation, the left and right cointegration matrices admit separate economic interpretations, as they characterize long-run equilibrium relations across the rows and columns of the observed matrix process, respectively. This feature is not available in existing alternatives, where the equilibrium condition is intrinsically bilinear and therefore combines row and column effects into a single object. As a result, those models do not allow one to disentangle the equilibrium structure along the two dimensions of the data. Our approach, instead, makes it possible to study row and column steady states separately, thereby providing a more flexible and informative framework for testing economically meaningful hypotheses, in close analogy with the role played by cointegration vectors in the standard VECM literature.

We provide further contributions related to the estimation of the model. We discuss both the identification of the two cointegration ranks that characterized the cointegrated MAR model and the estimation of the cointegration matrices. In addition, we also tackle an inferential perspective and introduce approaches for testing hypotheses on the cointegration and adjustment matrices. Finally, we present an empirical analysis that examines the relationship between industrial production and inflation expectations in a selected group of European countries, demonstrating the practical application of the proposed methodology. The goodness of fit of the model is assessed by verifying the stationarity of the estimated cointegration relationships and by comparing the matrix-based estimation with the standard vector representation. Furthermore, we illustrate that employing a conventional Vector Error Correction Model does not facilitate an interpretation in terms of parallel structures. This highlights the improvement in interpretability achieved through our proposed model.

The paper is structured as follows. Section 2 proposes the model. Section 3 clarifies our contribution to the literature by comparing our model with existing formulations and highlighting the strengths of our approach. Section 4 presents an estimation algorithm and outlines procedures for testing restrictions on model coefficients and rank identification, while Section 5 evaluates the performance of these methods through Monte Carlo experiments. Section 6 provides an empirical illustration, and Section 7 concludes. Proofs of the lemmas and theorems are provided in Appendix A.

\section{Cointegrated-MAR model}

Consider the matrix autoregressive process of order one
\begin{equation}\label{MAR_uno}
    X_t = \Lambda X_{t-1}\Psi' + E_t,
\end{equation}
where $\Lambda\in\mathbb{R}^{m\times m}$, $\Psi\in\mathbb{R}^{n\times n}$, $E_t$ is a $m\times n$ white-noise process, and $X_t$ is a $m\times n$ observation of a matrix-valued time series.

\begin{assumption}\label{Ass_1}
Let $\lambda_1,\dots,\lambda_m$ and $\mu_1,\dots,\mu_n$ denote the eigenvalues of
$\Lambda$ and $\Psi$, respectively. Assume that:
\begin{enumerate}
    \item there exist integers $0<r_1<m$ and $0<r_2<n$ such that $\Lambda$ has a semisimple unit eigenvalue with algebraic multiplicity $m-r_1$ and $\Psi$ has a semisimple unit eigenvalue with algebraic multiplicity $n-r_2$;

    \item every other eigenvalue of $\Lambda$ and $\Psi$ lies strictly inside the unit disk, that is,
    \[
        |\lambda_i|<1 \quad \text{for all } \lambda_i\neq 1,
        \qquad
        |\mu_j|<1 \quad \text{for all } \mu_j\neq 1.
    \]
\end{enumerate}
\end{assumption}

\begin{theorem}\label{T01}
Let $x_t:=\operatorname{vec}(X_t)$. If Assumption \eqref{Ass_1} holds, then $x_t$ is a $I(1)$ cointegrated process, with
$(m-r_1)(n-r_2)$ common stochastic trends and cointegration rank
\[
mn-(m-r_1)(n-r_2)=nr_1+mr_2-r_1r_2.
\]
\end{theorem}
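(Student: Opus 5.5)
The plan is to work with the vectorized form \eqref{VAR}, $x_t = A x_{t-1} + e_t$ with $A := \Psi\otimes\Lambda$ and $e_t := \operatorname{vec}(E_t)$. We then check the conditions of Johansen's Granger representation theorem for a VAR(1) in error-correction form, $\Delta x_t = \Pi x_{t-1} + e_t$ with $\Pi := A - I_{mn}$. All properties of $A$ will be read off from $\Lambda$ and $\Psi$ through the Kronecker structure.

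The first step is spectral. Under Assumption \ref{Ass_1}, both $\Lambda$ and $\Psi$ admit a Jordan decomposition in which the unit eigenvalue occupies a diagonal block, because it is semisimple:
\[
\Lambda = P_\Lambda \begin{pmatrix} I_{m-r_1} & 0 \\ 0 & J_\Lambda \end{pmatrix} P_\Lambda^{-1}, \qquad \Psi = P_\Psi \begin{pmatrix} I_{n-r_2} & 0 \\ 0 & J_\Psi \end{pmatrix} P_\Psi^{-1},
\]
where $J_\Lambda$ and $J_\Psi$ have spectral radius strictly less than one. Then
\[
A = (P_\Psi\otimes P_\Lambda)\bigl(D_\Psi\otimes D_\Lambda\bigr)(P_\Psi\otimes P_\Lambda)^{-1},
\]
where $D_\Lambda$ and $D_\Psi$ denote the two block-diagonal middle factors. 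Expanding $D_\Psi\otimes D_\Lambda$ into four blocks (after a permutation) gives:
\begin{itemize}
\item the block $I_{n-r_2}\otimes I_{m-r_1}$, of size $(m-r_1)(n-r_2)$;
\item the blocks $I\otimes J_\Lambda$ and $J_\Psi\otimes I$, whose spectral radii are $\rho(J_\Lambda)<1$ and $\rho(J_\Psi)<1$;
\item the block $J_\Psi\otimes J_\Lambda$, whose eigenvalues are products $\lambda_i\mu_j$ of modulus strictly less than one.
\end{itemize}
Hence the eigenvalues of $A$ are the products $\lambda_i\mu_j$. One subtle point needs checking here: a product $\lambda_i\mu_j$ with $\lambda_i,\mu_j\neq 1$ can never equal $1$ in modulus, since both factors have modulus $<1$. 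Likewise a product of $1$ with a non-unit eigenvalue stays inside the disk. So the unit root of $A$ is semisimple with multiplicity exactly $(m-r_1)(n-r_2)$, and every other eigenvalue lies strictly inside the unit disk.

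The second step translates this into the Johansen conditions. The rank of $\Pi$ is $mn-(m-r_1)(n-r_2)$, because the unit eigenvalue is semisimple and so the geometric multiplicity equals the algebraic one. Expanding gives $mn-(m-r_1)(n-r_2)=nr_1+mr_2-r_1r_2$. Write $\Pi=\alpha\beta'$ with $\alpha,\beta$ of full column rank $r:=nr_1+mr_2-r_1r_2$. The I(1) condition requires $\alpha_\perp'\beta_\perp$ to be nonsingular; equivalently, $\Pi$ has index one (no Jordan blocks of size larger than one at eigenvalue $0$). This follows directly from semisimplicity of the unit root of $A$, because $\mathbb{R}^{mn}=\ker\Pi\oplus\operatorname{im}\Pi$. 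The remaining roots of $\det(I-Az)=0$ lie outside the unit circle by the first step.

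The final step invokes Granger's representation theorem: $\Delta x_t$ and $\beta' x_t$ are stationary, and
\[
x_t = C\sum_{s\le t} e_s + \text{stationary}, \qquad C=\beta_\perp(\alpha_\perp'\beta_\perp)^{-1}\alpha_\perp',
\]
with $\operatorname{rank} C=(m-r_1)(n-r_2)$. Hence $x_t$ is I(1) with that many common trends and cointegration rank $r$. The common trends could even be made explicit as $\beta_\perp=\beta_{\Psi\perp}\otimes\beta_{\Lambda\perp}$. The main obstacle I expect is the careful justification of semisimplicity and index one for the Kronecker product, including the case of complex or non-diagonalizable stable parts; the block Jordan argument above is meant to handle that case. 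One also needs $e_t$ to be i.i.d. (or satisfy suitable moment conditions) with nonsingular covariance, so that the trends are genuinely $(m-r_1)(n-r_2)$-dimensional; I would state this as implicit in the white-noise assumption.
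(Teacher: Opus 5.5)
Your proposal is correct and follows essentially the same route as the paper: bring $\Lambda$ and $\Psi$ to block-diagonal form around their semisimple unit eigenvalue, read off the spectrum of $\Psi\otimes\Lambda$ and the semisimplicity of its unit eigenvalue from the four Kronecker blocks, and then invoke Johansen's Theorem 4.2 with $\operatorname{rank}(\Pi)=mn-(m-r_1)(n-r_2)$. Your explicit check that semisimplicity makes $\Pi$ of index one, and hence $\alpha_\perp'\beta_\perp$ nonsingular, only spells out a step that the paper asserts directly.
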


Theorem \ref{T01} shows that the vectorized process $x_t=\operatorname{vec}(X_t)$ is cointegrated. Therefore, the matrix autoregressive representation in \eqref{MAR_uno}, together with Assumption \ref{Ass_1}, gives rise to a cointegrated model with MAR representation, which we call a Cointegrated Matrix Autoregressive model (C-MAR).

We now study the possibility of a matrix error-correction representation. From \eqref{MAR_uno}, one obtains the standard vector error-correction form
\begin{equation}\label{VECM_2}
    \operatorname{vec}(\Delta X_t)
    =
    \big(\Psi\otimes\Lambda-I_{mn}\big)\operatorname{vec}(X_{t-1})
    + \operatorname{vec}(E_t).
\end{equation}

Consider the following lemma.
\begin{lemma}\label{L1}
Under Assumption \ref{Ass_1}, the matrix
\[
\Psi\otimes\Lambda-I_{mn}
\]
does not admit a single Kronecker product decomposition. More precisely, there do not exist matrices
\[
A^*\in\mathbb{R}^{n\times n},\qquad B^*\in\mathbb{R}^{m\times m}
\]
such that
\[
\Psi\otimes\Lambda-I_{mn}=A^*\otimes B^*.
\]
\end{lemma}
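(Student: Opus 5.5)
Our plan is to argue by contradiction using the spectrum of $\Psi\otimes\Lambda - I_{mn}$. Suppose $\Psi\otimes\Lambda-I_{mn}=A^*\otimes B^*$. The eigenvalues of $\Psi\otimes\Lambda$ are the products $\mu_j\lambda_i$ (with multiplicities), so those of the left-hand side are $\mu_j\lambda_i-1$, while those of $A^*\otimes B^*$ are products $a_jb_i$ of eigenvalues $a_j$ of $A^*$ and $b_i$ of $B^*$. We will compare two invariants on each side: the rank (equivalently the dimension of the null space) and the structure of the nonzero spectrum.

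First, we compute the rank of the left-hand side. Under Assumption \ref{Ass_1} the unit eigenvalues of $\Lambda,\Psi$ are semisimple. We block-diagonalise (via Jordan form) $\Lambda=P\operatorname{diag}(I_{m-r_1},J_\Lambda)P^{-1}$ and $\Psi=Q\operatorname{diag}(I_{n-r_2},J_\Psi)Q^{-1}$ with $\rho(J_\Lambda),\rho(J_\Psi)<1$. Then $\Psi\otimes\Lambda-I$ is similar to a block-diagonal matrix whose blocks are $0$ (size $(m-r_1)(n-r_2)$), $J_\Lambda-I$, $J_\Psi-I$ (with appropriate multiplicities), and $J_\Psi\otimes J_\Lambda-I$, all of the latter being nonsingular because no product of eigenvalues equals $1$ (each has modulus $<1$ unless both factors are $1$). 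Hence
\[
\operatorname{rank}(\Psi\otimes\Lambda-I_{mn})=mn-(m-r_1)(n-r_2)=nr_1+mr_2-r_1r_2,
\]
consistent with Theorem \ref{T01}. On the other side, $\operatorname{rank}(A^*\otimes B^*)=\operatorname{rank}(A^*)\operatorname{rank}(B^*)=ab$ with $a\le n$, $b\le m$.

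Second, we show this rank cannot be of product form compatible with the nullity. Since $\operatorname{nullity}(A^*\otimes B^*)=mn-ab$, we need $ab=mn-(m-r_1)(n-r_2)$. Rank equality alone may not contradict (e.g.\ numerical coincidences), so we strengthen using the semisimple zero eigenvalue: the left-hand side has $0$ as a semisimple eigenvalue with nullspace $\mathcal N=\ker(\Psi-I)\otimes\ker(\Lambda-I)$, a tensor of subspaces of dimensions $n-r_2$ and $m-r_1$. The null space of $A^*\otimes B^*$ is $\ker A^*\otimes\mathbb R^m+\mathbb R^n\otimes\ker B^*$. For these to coincide, the latter must be a pure tensor product of subspaces; a sum $U\otimes\mathbb R^m+\mathbb R^n\otimes V$ is a pure tensor $S\otimes T$ only if $U=0$ or $V=0$ (or trivial cases), in which case $\mathcal N=\mathbb R^n\otimes V$ or $U\otimes\mathbb R^m$, forcing $n-r_2=n$ or $m-r_1=m$, i.e.\ $r_2=0$ or $r_1=0$, contradicting $0<r_1<m$, $0<r_2<n$. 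The same conclusion also follows from the range: the column space of the left side contains $\ker(\Psi-I)\otimes\operatorname{range}(\Lambda-I)$ and $\operatorname{range}(\Psi-I)\otimes\ker(\Lambda-I)$, whereas the range of $A^*\otimes B^*$ is a pure tensor $\operatorname{range}A^*\otimes\operatorname{range}B^*$.

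The main obstacle is the subspace lemma: proving rigorously that $U\otimes\mathbb R^m+\mathbb R^n\otimes V=S\otimes T$ forces $U=0$ or $V=0$ when the total space is proper and nonzero. We would handle it by choosing bases adapted to $U$ and $V$ and comparing dimensions: the sum has dimension $um+nv-uv$, and if it equals $S\otimes T$ then, since it contains $\mathbb R^n\otimes V$ and $U\otimes\mathbb R^m$, we get $S=\mathbb R^n$ (when $V\ne0$) and $T=\mathbb R^m$ (when $U\ne0$). So both nonzero would give the whole space, meaning $ab=0$, i.e.\ $\Psi\otimes\Lambda=I$, impossible under Assumption \ref{Ass_1}. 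Edge cases ($A^*$ or $B^*$ invertible) reduce directly to the rank/nullity contradiction above.
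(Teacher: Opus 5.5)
Your argument is correct, but it takes a different route from the paper. The paper works directly with the matrices. It writes $I_{mn}=I_n\otimes I_m$ and picks linear functionals $f,g$ on $\mathbb{R}^{n\times n}$ that separate $\Psi$ from $I_n$. Applying $f\otimes\mathrm{id}$ and $g\otimes\mathrm{id}$ gives $\Lambda=f(A^*)B^*$ and $-I_m=g(A^*)B^*$. So a Kronecker factorization forces $\Psi$ or $\Lambda$ to be a scalar multiple of the identity, which $0<r_1<m$, $0<r_2<n$ excludes.

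You compare kernels instead. On one side, $\ker(\Psi\otimes\Lambda-I_{mn})=\ker(\Psi-I_n)\otimes\ker(\Lambda-I_m)$ is a product of two proper nonzero subspaces. On the other, $\ker(A^*\otimes B^*)=\ker A^*\otimes\mathbb{R}^m+\mathbb{R}^n\otimes\ker B^*$ is a ``cross''.

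The paper's proof is shorter and more general. It uses only the multiplicity part of Assumption~\ref{Ass_1}, not semisimplicity or stability, and it is reused for Lemma~\ref{L4}. Yours genuinely needs both: otherwise some product $\mu_j\lambda_i$ could equal $1$, or Jordan blocks at $1$ could make the kernel strictly larger than $\ker(\Psi-I_n)\otimes\ker(\Lambda-I_m)$ and not a pure tensor. In exchange, your route locates the obstruction in the common-trend space $\theta_\perp\otimes\gamma_\perp$, the same $H$ used in the proof of Theorem~\ref{T2}. Your containment step ($x\otimes v\in S\otimes T$ with $v\neq 0$ $\Rightarrow$ $x\in S$) is proved by the paper's own device: apply $\mathrm{id}\otimes\phi$ with $\phi(v)=1$.

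One correction to your case analysis. The case where exactly one of $A^*,B^*$ is invertible does not reduce to a rank/nullity contradiction. For example, take $m=3$, $n=2$, $r_1=r_2=1$, $A^*$ invertible and $\operatorname{rank}B^*=2$; then $\operatorname{rank}(A^*\otimes B^*)=4=mn-(m-r_1)(n-r_2)$. Organize the cases as follows:
\begin{itemize}
\item $\ker B^*\neq 0$ forces $\ker(\Psi-I_n)=\mathbb{R}^n$, i.e.\ $r_2=0$;
\item $\ker A^*\neq 0$ forces $\ker(\Lambda-I_m)=\mathbb{R}^m$, i.e.\ $r_1=0$;
\item if both kernels are trivial, then $\ker(A^*\otimes B^*)=0$, while $\dim\ker(\Psi\otimes\Lambda-I_{mn})=(m-r_1)(n-r_2)>0$.
\end{itemize}
With this split, the ``both nonzero gives the whole space'' detour is unnecessary.
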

It follows from Lemma \ref{L1} that the impact matrix of the VECM associated with a C-MAR model cannot be written in matrix form.

To find an alternative matrix representation, note that under Assumption \ref{Ass_1},
\[
\Pi_1:=\Lambda-I_m,
\qquad
\Pi_2:=\Psi-I_n
\]
have ranks $r_1$ and $r_2$, respectively. Hence, rank factorizations of the form
\[
\Pi_1=\tau\gamma',
\qquad
\Pi_2=\varphi\theta',
\]
are admissible, with $\tau,\gamma\in\mathbb{R}^{m\times r_1}$ and $\varphi,\theta\in\mathbb{R}^{n\times r_2}$ of full column rank. It follows that
\begin{equation}\label{C-MAR}
    X_t = (I_m + \tau \gamma') X_{t-1} (I_n + \theta \varphi') + E_t.
\end{equation}

Solving for $\Delta X_t=X_t-X_{t-1}$ yields the error-correction representation
\begin{equation}\label{ECC-MAR}
    \Delta X_t = \tau \gamma' X_{t-1} + X_{t-1} \theta \varphi' + \tau \gamma' X_{t-1} \theta \varphi' + E_t.
\end{equation}
We refer to model \eqref{ECC-MAR} as the error-correction representation of the C-MAR model (hereafter, ECC-MAR). 

As emphasized in the introduction, cointegrated models allow for an interpretation in terms of adjustment toward a long-run equilibrium. In a standard VECM, the equilibrium condition is given by $\beta'x_t=0$. In fact, although $x_t$ is $I(1)$, the linear combination $\beta'x_t$ is $I(0)$. This means that the system reverts toward the position $\beta'x_t=0$ despite the variables being non-stationary.

To understand whether ECC-MAR carries the same interpretation, we study the existence of mean-reverting linear combinations associated with the ECC-MAR coefficients $\gamma$ and $\theta$.

\begin{theorem}\label{Teo_3}
Under Assumption \ref{Ass_1}, the matrix-valued processes
\[
\gamma'X_t \in \mathbb{R}^{r_1\times n},
\qquad
X_t\theta \in \mathbb{R}^{m\times r_2}
\]
in \eqref{ECC-MAR} are both $I(0)$.
\end{theorem}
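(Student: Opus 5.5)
Since $\Lambda$ and $\Psi$ are diagonalizable on the unit eigenspace (semisimple unit root), I will work with the spectral decomposition of each factor. The goal is to show that premultiplying by $\gamma'$ removes every unit-root component.

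\textbf{Step 1: spectral projectors.} Let $P_\Lambda$ denote the spectral projector of $\Lambda$ onto the unit eigenspace, along the sum of the remaining generalized eigenspaces. Semisimplicity gives
\[
\Lambda = P_\Lambda + \Lambda_s, \qquad \Lambda_s = \Lambda(I_m-P_\Lambda),
\]
with $P_\Lambda\Lambda_s=\Lambda_s P_\Lambda=0$. The spectral radius of $\Lambda_s$ is strictly below one. Define $P_\Psi$ and $\Psi_s$ analogously.

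The key algebraic fact is that $\gamma'P_\Lambda=0$. Indeed, the range of $P_\Lambda'$ is the left null space of $\Lambda-I_m=\tau\gamma'$. Since $\tau$ has full column rank, the left null space of $\tau\gamma'$ equals the null space of $\gamma'$ viewed as columns, so $\gamma' v=0$ for every right unit eigenvector $v$. Because $P_\Lambda$ maps into the span of these eigenvectors (semisimple case), $\gamma'P_\Lambda=0$. Symmetrically, $P_\Psi\theta=0$: write $\Psi-I_n=\varphi\theta'$, so $\Psi'-I_n=\theta\varphi'$, and $\theta$ annihilates the right unit eigenvectors of $\Psi'$.

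\textbf{Step 2: decompose $X_t$.} Iterate \eqref{MAR_uno} from an initial value $X_0$ to obtain
\[
X_t=\Lambda^tX_0\Psi'^{\,t}+\sum_{j=0}^{t-1}\Lambda^jE_{t-j}\Psi'^{\,j}.
\]
Using $\Lambda^j=P_\Lambda+\Lambda_s^j$ and $\Psi'^{\,j}=P_\Psi'+\Psi_s'^{\,j}$, expand each product into four terms:
\[
P_\Lambda E P_\Psi', \qquad P_\Lambda E\Psi_s'^{\,j}, \qquad \Lambda_s^jEP_\Psi', \qquad \Lambda_s^jE\Psi_s'^{\,j}.
\]

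\textbf{Step 3: apply $\gamma'$.} Premultiplying by $\gamma'$ kills the first two terms. What remains is
\[
\gamma'X_t=\gamma'\Lambda_s^tX_0\Psi'^{\,t}+\sum_{j=0}^{t-1}\gamma'\Lambda_s^jE_{t-j}\big(P_\Psi'+\Psi_s'^{\,j}\big).
\]
The coefficients $\gamma'\Lambda_s^j(P_\Psi'+\Psi_s'^{\,j})$ are bounded by $C\rho^j$ with $\rho<1$, because $P_\Psi'+\Psi_s'^{\,j}=\Psi'^{\,j}$ is bounded in $j$ (the unit part is semisimple). Vectorizing,
\[
\operatorname{vec}(\gamma'X_t)=\sum_j\big(\Psi^j\otimes\gamma'\Lambda_s^j\big)\operatorname{vec}(E_{t-j}),
\]
a linear process with absolutely summable (geometrically decaying) coefficients. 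The initial-value term decays geometrically. Hence $\gamma'X_t$ is $I(0)$, asymptotically stationary or exactly stationary under a suitable initialization.

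The case of $X_t\theta$ is symmetric, using $P_\Psi'\theta=0$ (equivalently $\theta'P_\Psi=0$ by transposing).

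\textbf{Main obstacle.} The delicate step is Step 1: correctly identifying the annihilation relation. The rank factorization by itself only says $\gamma'$ spans the row space of $\Lambda-I_m$. One must use semisimplicity to guarantee two things: that the unit eigenspace exactly equals $\ker(\Lambda-I_m)$, and that $\Lambda^j$ stays bounded on it. Without semisimplicity, Jordan blocks would make $\gamma'P_\Lambda$-type terms grow polynomially, and the conclusion would fail.

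I would also verify, via Theorem \ref{T01}'s trend count, that $\gamma'X_t$ still loads on the trends only through the stable factor $\Lambda_s$. This confirms there are no residual $I(1)$ components.
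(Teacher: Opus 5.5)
Your proof is correct, but it takes a different route from the paper. The paper never iterates the recursion. It premultiplies the MAR(1) by $\gamma'$ and uses $\gamma'\Lambda=(I_{r_1}+\gamma'\tau)\gamma'$ to show that $\gamma'X_t$ is itself a MAR(1): $\gamma'X_t=(I_{r_1}+\gamma'\tau)\gamma'X_{t-1}\Psi'+\gamma'E_t$. Because $\tau\gamma'$ and $\gamma'\tau$ share their nonzero eigenvalues, the spectrum of $I_{r_1}+\gamma'\tau$ is exactly the non-unit eigenvalues of $\Lambda$. Hence $\rho(I_{r_1}+\gamma'\tau)\,\rho(\Psi)<1$, and the stationarity criterion of Chen, Xiao and Yang (2021) applies. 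The case of $X_t\theta$ is symmetric, giving a MAR(1) with factors $\Lambda$ and $I_{r_2}+\varphi'\theta$.

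Your spectral-projector argument instead builds the explicit moving-average representation $\operatorname{vec}(\gamma'X_t)=\sum_j(\Psi^j\otimes\gamma'\Lambda_s^j)\operatorname{vec}(E_{t-j})$, plus a geometrically vanishing initial-value term. This buys you three things: the argument is self-contained, it handles arbitrary initial conditions explicitly, and it shows exactly where semisimplicity enters. That entry point is $\operatorname{range}P_\Lambda=\ker(\Lambda-I_m)$, which is equivalent to the nonsingularity of $\gamma'\tau$ that the paper uses tacitly when identifying the spectrum of $I_{r_1}+\gamma'\tau$.

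The paper's route is shorter, and it gives a structural insight yours lacks: each equilibrium error is again a stable MAR(1) driven by the other factor. It also uses only $\rho(\Psi)\le 1$ for the $\gamma'X_t$ claim, so semisimplicity of $\Psi$ plays no role there. Your bound does not need boundedness of $\Psi^j$ either, since polynomial growth would be absorbed by the geometric decay of $\Lambda_s^j$.

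Step 1 needs a repair, although the relations you end up using are right. Your justification confuses left and right null spaces.
\begin{itemize}
\item The left null space of $\tau\gamma'$ is $\ker\tau'$, not $\ker\gamma'$.
\item The statements $P_\Psi\theta=0$ and ``$\theta$ annihilates the right unit eigenvectors of $\Psi'$'' (which span $\ker\varphi'$) are false in general.
\end{itemize}
The clean argument is as follows. By semisimplicity and full column rank of $\tau$, $\operatorname{range}P_\Lambda=\ker(\Lambda-I_m)=\ker(\tau\gamma')=\ker\gamma'$, so $\gamma'P_\Lambda=0$. Likewise $\operatorname{range}P_\Psi=\ker(\varphi\theta')=\ker\theta'$, so $\theta'P_\Psi=0$, which is the relation you actually use at the end.

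Also read $\Lambda_s^0$ as $I_m-P_\Lambda$, so that $\Lambda^j=P_\Lambda+\Lambda_s^j$ holds at $j=0$. This is harmless, because $\gamma'(I_m-P_\Lambda)=\gamma'$.
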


It follows from Theorem \ref{Teo_3} that processes $\gamma'X_t$ and $X_t\theta$ are mean reverting. Consequently, $\gamma'X_t$ and $X_t\theta$ can be interpreted as two distinct equilibrium errors: one pertaining to the relations among the rows of $X_t$, and the other pertaining to the relations among the columns of $X_t$.

If both $\gamma'X_{t-1}=0$ and $X_{t-1}\theta=0$, then the deterministic error-correction component in \eqref{ECC-MAR} vanishes, and therefore $\mathbb{E}(\Delta X_t\mid \mathcal{F}_{t-1})=0.$ Thus, if both $\gamma'X_t=0$ and $X_t\theta=0$, there is no systematic tendency for the process to move away from its current position. In this case, the deterministic error-correction component vanishes, so the system is in equilibrium. Consequently, $\gamma$ and $\theta$ play an analogous role to the cointegration matrix $\beta$ in the standard VECM, and may be interpreted as row and column cointegration matrices, respectively.

Similarly, matrices $\tau$ and $\varphi$ play a role analogous to the adjustment matrix in a VECM. They capture the speed of adjustment toward the long-run equilibrium and indicate which variables do not respond to disequilibrium conditions.

To illustrate the adjustment mechanism implied by ECC-MAR, rewrite \eqref{ECC-MAR} as
\[
    \Delta X_t = \tau \big(\gamma' X_{t-1} + \gamma' X_{t-1} \theta \varphi'\big)
    + X_{t-1} \theta \varphi' + E_t.
\]
The first term in parentheses reflects the row disequilibrium at time $t-1$. The second component,
$\gamma' X_{t-1} \theta \varphi'$, captures the row disequilibrium induced by the adjustment mechanism acting on the columns of $X_t$. Indeed, $X_{t-1}\theta\varphi'$ represents the system's response to deviations from the column equilibrium condition $X_{t-1}\theta=0$, while $\gamma'X_{t-1}\theta\varphi'$ measures the row disequilibrium generated by that column correction. Therefore, the adjustment toward row equilibrium is driven by the composite disequilibrium term $\gamma' X_{t-1} + \gamma' X_{t-1}\theta\varphi'.$ The matrix $\tau$ determines how such deviations are corrected over time.

A symmetric argument applies to the column disequilibrium. In particular, the adjustment coefficient $\varphi$ governs the response to the composite disequilibrium term $X_{t-1}\theta + \tau\gamma'X_{t-1}\theta.$

In conclusion, the third term in \eqref{ECC-MAR}, namely $\tau\gamma'X_{t-1}\theta\varphi',$ reconciles the two parallel adjustment mechanisms: one targeting $\gamma'X_t=0$ and the other targeting $X_t\theta=0$. These two adjustment mechanisms may otherwise point in different directions, and the interaction term ensures their consistency within a unified matrix error-correction structure.

Finally, we study the relationship between ECC-MAR and its vector counterpart. Vectorizing \eqref{ECC-MAR}, we obtain
\begin{equation}\label{VECMECC-MAR}
    \operatorname{vec}(\Delta X_t)
    =
    \big(I_n \otimes \tau \gamma'
    + \varphi \theta' \otimes I_m
    + \varphi \theta' \otimes \tau \gamma'\big)\operatorname{vec}(X_{t-1})
    + \operatorname{vec}(E_t).
\end{equation}

The following theorem proves that the impact matrix of the VECM generated by the ECC-MAR representation has a reduced rank and provides a representation of the cointegration and adjustment matrices.

\begin{theorem}\label{T2}
The cointegration space of the VECM generated by the ECC-MAR representation is spanned by
\begin{equation}\label{cointegration}
    \beta = \left[\, I_n \otimes \gamma \;,\; \theta \otimes \gamma_\perp \,\right].
\end{equation}
A corresponding adjustment matrix is
\begin{equation}\label{adjustementt}
    \alpha =
    \left[\,
    I_n \otimes \tau + (\varphi\theta')\otimes(\bar{\gamma}+\tau)
    \;,\;
    \varphi\otimes\bar{\gamma}_\perp
    \,\right],
\end{equation}
where $\bar a = a(a'a)^{-1}$ for any matrix $a$ with full-column-rank.
\end{theorem}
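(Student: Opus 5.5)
We prove the statement by exhibiting an explicit factorization $\Pi=\alpha\beta'$ of the impact matrix in \eqref{VECMECC-MAR},
\[
\Pi:=I_n\otimes\tau\gamma'+\varphi\theta'\otimes I_m+\varphi\theta'\otimes\tau\gamma' \;=\; \Psi\otimes\Lambda-I_{mn},
\]
with $\alpha$ and $\beta$ both of full column rank $nr_1+r_2(m-r_1)=nr_1+mr_2-r_1r_2$. By Theorem \ref{T01}, this number equals the cointegration rank, i.e. $\operatorname{rank}(\Pi)$. Any such factorization identifies $\operatorname{sp}(\beta)$ as the row space of $\Pi$, which is the cointegration space, and makes $\alpha$ a corresponding adjustment matrix.

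\textbf{Step 1: rewrite $\Pi$ using the complement of $\gamma$.} Insert the projection identity $I_m=\bar\gamma\gamma'+\bar\gamma_\perp\gamma_\perp'$, valid for any orthogonal complement $\gamma_\perp$ with $\bar\gamma_\perp=\gamma_\perp(\gamma_\perp'\gamma_\perp)^{-1}$, into the middle term. This gives
\[
\varphi\theta'\otimes I_m=(\varphi\theta'\otimes\bar\gamma)(I_n\otimes\gamma') + (\varphi\otimes\bar\gamma_\perp)(\theta\otimes\gamma_\perp)'.
\]
Here we use the mixed-product rule $(A\otimes B)(C\otimes D)=AC\otimes BD$. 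For the first piece we write $\varphi\theta'\otimes\bar\gamma\gamma'=(\varphi\theta'\otimes\bar\gamma)(I_n\otimes\gamma')$. For the second we write $\varphi\theta'\otimes\bar\gamma_\perp\gamma_\perp'=(\varphi\otimes\bar\gamma_\perp)(\theta'\otimes\gamma_\perp')$.

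\textbf{Step 2: factor the remaining terms.} The first and third terms of $\Pi$ factor directly:
\[
I_n\otimes\tau\gamma'=(I_n\otimes\tau)(I_n\otimes\gamma'),\qquad \varphi\theta'\otimes\tau\gamma'=(\varphi\theta'\otimes\tau)(I_n\otimes\gamma').
\]
Collecting all terms that multiply $(I_n\otimes\gamma')$ gives the first block
\[
I_n\otimes\tau+\varphi\theta'\otimes(\bar\gamma+\tau)
\]
of \eqref{adjustementt}. The term multiplying $(\theta\otimes\gamma_\perp)'$ gives the second block $\varphi\otimes\bar\gamma_\perp$. Hence $\Pi=\alpha\beta'$ with $\alpha,\beta$ as stated.

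\textbf{Step 3: full column rank of $\beta$.} This is the main point to check. The block $I_n\otimes\gamma$ has rank $nr_1$ and the block $\theta\otimes\gamma_\perp$ has rank $r_2(m-r_1)$. Their column spaces are orthogonal, since
\[
(I_n\otimes\gamma)'(\theta\otimes\gamma_\perp)=\theta\otimes\gamma'\gamma_\perp=0.
\]
Therefore $\beta$ has full column rank $nr_1+r_2(m-r_1)$.

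\textbf{Step 4: full column rank of $\alpha$.} Since $\Pi=\alpha\beta'$ with $\beta$ of full column rank, $\operatorname{rank}(\alpha)\ge\operatorname{rank}(\Pi)$. By Theorem \ref{T01}, $\operatorname{rank}(\Pi)=nr_1+mr_2-r_1r_2$, which is exactly the number of columns of $\alpha$. So $\alpha$ has full column rank.

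Then $\operatorname{sp}(\beta)$ coincides with the row space of $\Pi$, which is the cointegration space. The only delicate points are bookkeeping with transposes in the Kronecker factors and invoking Theorem \ref{T01} for the rank count.
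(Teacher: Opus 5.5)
Your proof is correct, but it runs in the opposite direction from the paper's. The paper first identifies the cointegration space through the null space of $\Pi$. It checks that $\Pi(\theta_\perp\otimes\gamma_\perp)=0$ and uses Theorem~\ref{T01} to conclude that $\theta_\perp\otimes\gamma_\perp$ spans all of $\mathcal N(\beta')$. It then observes that $[I_n\otimes\gamma,\theta\otimes\gamma_\perp]$ is orthogonal to this span. Only after that does it recover the adjustment matrix as $\alpha=\Pi\bar\beta=\Pi\beta(\beta'\beta)^{-1}$, using the block-diagonal structure of $\beta'\beta$.

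You instead verify the identity $\Pi=\alpha\beta'$ directly, by inserting $I_m=\bar\gamma\gamma'+\bar\gamma_\perp\gamma_\perp'$ into the middle term. You then close the argument by rank counting: orthogonal blocks give the rank of $\beta$, and Theorem~\ref{T01} gives the rank of $\alpha$.

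Your route makes ``corresponding adjustment matrix'' literal. In the paper, $\Pi\bar\beta\beta'=\Pi$ holds only because the span of $\beta$ was already shown to be the row space of $\Pi$. Your route also checks explicitly that $\beta$ has full column rank, which the paper leaves implicit in the invertibility of $\beta'\beta$. It also shows where the $\bar\gamma$ term in the first block of $\alpha$ comes from.

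The paper's route delivers the explicit complement $\beta_\perp=\theta_\perp\otimes\gamma_\perp$, i.e.\ the common-trend directions. This explains why $\beta$ has the stated form rather than taking it as given.

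Both arguments rely on Theorem~\ref{T01}, and hence on Assumption~\ref{Ass_1}, for $\operatorname{rank}(\Pi)$. One minor remark: in your Step 4 the inequality $\operatorname{rank}(\alpha)\ge\operatorname{rank}(\Pi)$ holds for any product $\alpha\beta'$, so full column rank of $\beta$ is not needed there.
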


\subsection{Cointegrated-MAR of order p}

The natural extension of the ECC-MAR representation in equation \eqref{ECC-MAR} is
\begin{equation}\label{ECC-MAROrderP}
    \Delta X_t=\tau \gamma'X_{t-1}+X_{t-1}\theta \varphi'+\tau\gamma'X_{t-1}\theta \varphi'+\sum_{i=1}^{p-1}\Gamma_{1,i }\Delta X_{t-i}\Gamma'_{2,i}+E_t.
\end{equation}

The correspondence between the ECC-MAR above and the C-MAR($p$) is not straightforward as for the case where the autoregressive form is of order 1, and requires assumptions on the relations among the MAR coefficients. To illustrate, consider the MAR($p$) model
\begin{equation}\label{C-MARp}
    X_t=\sum_{i=1}^p \Lambda_i X_{t-i} \Psi'_i+E_t
\end{equation}
and its alternative representation
\begin{equation*}
    X_t=\sum_{i=1}^p \Lambda_i X_{t-1} \Psi'_i-\sum_{j=1}^{p-1}\sum_{i=j+1}^{p}\Lambda_i \Delta X_{t-j} \Psi'_i+E_t
\end{equation*}
Vectorizing, one obtains
\begin{equation*}
    \text{vec}(X_t)=\Bigg[\sum_{i=1}^p \Psi_i \otimes \Lambda_i \Bigg]\text{vec}(X_{t-1}) -\sum_{j=1}^{p-1}\Bigg[\sum_{i=j+1}^p \Psi_i \otimes \Lambda_i \Bigg] \text{vec}(\Delta X_{t-j}) +\text{vec}(E_t)
\end{equation*}
To obtain the ECC-MAR representation in \eqref{ECC-MAROrderP}, it is necessary that $\sum_{i=j}^p \Psi_i \otimes \Lambda_i=A_j \otimes B_j$ $\forall j=\{1,2,\dots,p\}$. In other words, it is necessary that the sum of these Kronecker products is itself a Kronecker product. 

This is not generally the case, as the rank condition illustrated in Lemma \ref{L1} is not ensured for the sum of Kronecker products. The following Theorem illustrates a condition ensuring that the sum of Kronecker products is itself a Kronecker product. We first introduce two Lemmas.
\begin{lemma}\label{L2}
    If $A_1,\dots,A_n$ are all proportional to the same matrix,
i.e. $A_i = \lambda_i A_1$ for some scalars $\lambda_i$, then $C \otimes D=\sum_{i}^n A_i \otimes B_i$. Similarly, if $B_1,\dots,B_n$ are all proportional to the same matrix,
i.e., $B_i = \delta_i B_1$, then $C \otimes D=\sum_{i}^n A_i \otimes B_i$.
\end{lemma}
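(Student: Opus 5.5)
The plan is to prove the claim by bilinearity of the Kronecker product, exhibiting $C$ and $D$ explicitly. In the first case, where $A_i=\lambda_i A_1$ for all $i$ (with $\lambda_1=1$), I would use the scalar rule $(\lambda A)\otimes B=A\otimes(\lambda B)$ to move each scalar onto the second factor:
\[
\sum_{i=1}^n A_i\otimes B_i=\sum_{i=1}^n (\lambda_i A_1)\otimes B_i=\sum_{i=1}^n A_1\otimes(\lambda_i B_i).
\]
Then, by distributivity of $\otimes$ over addition in its second argument, $A\otimes B+A\otimes B'=A\otimes(B+B')$, applied $n-1$ times, the sum collapses to
\[
\sum_{i=1}^n A_i\otimes B_i = A_1\otimes\Big(\sum_{i=1}^n\lambda_iB_i\Big).
\]
The choice $C:=A_1$ and $D:=\sum_{i=1}^n\lambda_i B_i$ therefore gives the statement. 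All $B_i$ have the same dimensions, so $D$ is well defined.

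The second case, $B_i=\delta_i B_1$, is symmetric. Moving the scalars to the first factor and using distributivity in the first argument gives
\[
\sum_{i=1}^n A_i\otimes B_i=\Big(\sum_{i=1}^n\delta_iA_i\Big)\otimes B_1 .
\]
So $C:=\sum_i\delta_iA_i$ and $D:=B_1$.

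I do not expect a real obstacle; the argument only uses these two elementary identities. One step worth recording is entrywise: the $(k,l)$ block of $A\otimes B$ is $a_{kl}B$, which is linear in both $A$ and $B$. Both identities follow immediately from this. The degenerate case $A_1=0$ (or $B_1=0$) also needs a remark. If some $A_i\neq 0$ is taken as the reference matrix, the same construction applies. If all $A_i=0$, the sum is $0=0\otimes D$ for any $D$, so the statement still holds.

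For the use that follows in the paper, I would also note the consequence for the MAR($p$) setting. If all $\Psi_i$ are proportional, or all $\Lambda_i$ are, then every partial sum $\sum_{i=j}^p\Psi_i\otimes\Lambda_i$ is a single Kronecker product $A_j\otimes B_j$. This is exactly the condition required for the representation \eqref{ECC-MAROrderP}.
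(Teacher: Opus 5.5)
Your proof is correct and follows the paper's argument: both use bilinearity of the Kronecker product to write $\sum_i A_i\otimes B_i = A_1\otimes\big(\sum_i\lambda_i B_i\big)$, and symmetrically $\big(\sum_i\delta_i A_i\big)\otimes B_1$ in the second case. Your degenerate-case remark is harmless but unnecessary, since the choice $C=A_1$ already works when $A_1=0$.
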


To ensure the duality between MAR and ECC-MAR representations, the condition that $\sum_{i=j}^p \Psi_i \otimes \Lambda_i$ is itself a Kronecker product is not sufficient. In fact, it is required that the resulting matrices can be decomposed in such a way to obtain the terms governing the adjustment mechanism, namely, it is required that $A \otimes B=\sum_{i=1}^p \Psi_i \otimes \Lambda_i$ have eigenvalues lower and equal to 1 in absolute value, so that they can be decomposed as $B=I+\Pi_1$ and $A=I+\Pi_2$, where $\Pi_{1,2}$ are rank deficient. 
\begin{lemma}\label{L3}
Assume that $A_1,\dots,A_n$ are all proportional to the same matrix, and also
$B_1,\dots,B_n$, so that $A_i=\lambda_i A_1$ and $B_i=\delta_i B_1$ for some scalars
$\lambda_i,\delta_i$. Let $s:=\sum_{i=1}^n \lambda_i\delta_i$ and assume $s\neq 0$.
If $|\text{eig}(A_1)|\le 1$ and $|\text{eig}(B_1)|\le |s|^{-1}$, then $|\text{eig}(C)|\le 1$ and
$|\text{eig}(D)|\le 1$, where $C\otimes D=\sum_{i=1}^n A_i\otimes B_i$.
\end{lemma}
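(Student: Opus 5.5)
The plan is to compute the sum $\sum_{i=1}^n A_i\otimes B_i$ explicitly under the two proportionality hypotheses, and then read off a convenient Kronecker factorization whose factors satisfy the eigenvalue bounds. By bilinearity of the Kronecker product,
\[
\sum_{i=1}^n A_i\otimes B_i=\sum_{i=1}^n (\lambda_i A_1)\otimes(\delta_i B_1)=\Big(\sum_{i=1}^n\lambda_i\delta_i\Big)\,A_1\otimes B_1 = s\,A_1\otimes B_1 .
\]
This is consistent with Lemma \ref{L2}: the sum is again a single Kronecker product. Since $s\neq 0$, the scalar can be placed in either factor. I would choose
\[
C:=A_1,\qquad D:=s\,B_1,
\]
so that $C\otimes D=\sum_i A_i\otimes B_i$.

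The eigenvalue bounds then follow at once. The eigenvalues of $C$ are those of $A_1$, so $|\text{eig}(C)|\le 1$ by hypothesis. The eigenvalues of $D=sB_1$ are $s$ times those of $B_1$, so
\[
|\text{eig}(D)|=|s|\,|\text{eig}(B_1)|\le |s|\,|s|^{-1}=1 .
\]
Both bounds hold.

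The main obstacle is the meaning of the statement, because a Kronecker factorization is only determined up to $C\mapsto cC$, $D\mapsto c^{-1}D$ with $c\neq 0$. The lemma cannot hold for every factorization: rescaling would break the bounds. So I would state explicitly that the claim is existential, namely that \emph{there exists} a factorization $C\otimes D$ with the stated eigenvalue bounds, and that the one given above works.

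I would then add a remark tying this to the intended use, a decomposition $B=I+\Pi_1$, $A=I+\Pi_2$. The whole family of admissible factorizations is $(cA_1,\,c^{-1}sB_1)$, and the bound holds for both factors exactly when $|c|\le 1/\rho(A_1)$ and $|c|\ge |s|\rho(B_1)$. Under the hypotheses this interval for $|c|$ is nonempty, since $c=1$ lies in it. This gives the normalization freedom needed later when unit eigenvalues are matched.
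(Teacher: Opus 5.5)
Your proof is correct and is the same as the paper's: factor the sum as $A_1\otimes(sB_1)$, take $C=A_1$ and $D=sB_1$, and use $|\text{eig}(sB_1)|=|s|\,|\text{eig}(B_1)|\le 1$. Your point that the lemma must be read existentially, because the factors are fixed only up to $C\mapsto cC$, $D\mapsto c^{-1}D$, clarifies something the paper leaves implicit; note that describing all factorizations as $(cA_1,\,c^{-1}sB_1)$ requires $A_1\neq 0$ and $B_1\neq 0$.
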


The stability conditions hold in the usual way, by analyzing the eigenvalues of the companion matrix.

From Lemmas \ref{L2} and \ref{L3}, the following Theorem immediately results: 

\begin{theorem}\label{T3}
     If $\Lambda_i=\lambda_i\Lambda_j$  and $\Psi_i=\delta_j \Psi_j$ $\forall i,j=1:p$, and if $\text{eig} \vert (\Lambda_1) \vert \leq 1$ and $\vert \text{eig}(\Psi_1) \vert \leq (\sum \lambda_i \delta_i)^{-1}$, then the C-MAR($p$) model in equation \eqref{C-MARp} has an ECC-MAR representation as in equation \eqref{ECC-MAROrderP}
\end{theorem}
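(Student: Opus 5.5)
The plan is to build the ECC-MAR($p$) form directly from the level representation of \eqref{C-MARp} displayed above. Lemma \ref{L2} supplies the Kronecker structure of the required sums, and Lemma \ref{L3} supplies the eigenvalue control needed to split off the identity matrices.

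Under the hypothesis $\Lambda_i=\lambda_i\Lambda_1$ and $\Psi_i=\delta_i\Psi_1$ (normalising $\lambda_1=\delta_1=1$), every partial sum appearing in the vectorised form factorises:
\[
\sum_{i=j}^p \Psi_i\otimes\Lambda_i=\Big(\sum_{i=j}^p\lambda_i\delta_i\Big)\,\Psi_1\otimes\Lambda_1 ,\qquad j=1,\dots,p .
\]
This is the content of Lemma \ref{L2}, applied to each tail sum. For $j\ge 2$, write $s_j:=\sum_{i=j}^p\lambda_i\delta_i$. The short-run terms then become $-s_j\,\Lambda_1\Delta X_{t-j+1}\Psi_1'$, after re-indexing the sum over $j$. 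I would absorb the scalar into one factor, setting $\Gamma_{1,i}:=-s_{i+1}\Lambda_1$ and $\Gamma_{2,i}:=\Psi_1$. This produces $\sum_{i=1}^{p-1}\Gamma_{1,i}\Delta X_{t-i}\Gamma_{2,i}'$ exactly as in \eqref{ECC-MAROrderP}.

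For the long-run term, put $s=s_1$, $B:=\Lambda_1$ and $A:=s\Psi_1$. Then $\sum_i\Lambda_iX_{t-1}\Psi_i'=BX_{t-1}A'$. By Lemma \ref{L3}, the hypotheses $|\mathrm{eig}(\Lambda_1)|\le1$ and $|\mathrm{eig}(\Psi_1)|\le|s|^{-1}$ give $|\mathrm{eig}(A)|,|\mathrm{eig}(B)|\le 1$. This is where the eigenvalue condition of the theorem is used; I read the theorem's $(\sum\lambda_i\delta_i)^{-1}$ as $|s|^{-1}$, with $s\neq0$ assumed. Next set $\Pi_1:=B-I_m$ and $\Pi_2:=A-I_n$. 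Whenever unit eigenvalues are present, in the cointegrated setting analogous to Assumption \ref{Ass_1}, these matrices are rank deficient. I then take rank factorisations $\Pi_1=\tau\gamma'$ and $\Pi_2=\varphi\theta'$, exactly as in the order-one case. Subtracting $X_{t-1}$ and expanding $(I_m+\tau\gamma')X_{t-1}(I_n+\theta\varphi')-X_{t-1}$ yields the three error-correction terms of \eqref{ECC-MAR}. Combined with the short-run terms, this gives \eqref{ECC-MAROrderP}.

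The Kronecker bookkeeping is routine. The delicate step is the passage from the eigenvalue bound to an admissible factorisation $\Pi=\tau\gamma'$ with the intended cointegration interpretation. This needs the unit eigenvalues of $A$ and $B$ to be semisimple, so that the rank of $\Pi_k$ equals the dimension minus the multiplicity of the unit root, with all other roots strictly inside the unit disk. Lemma \ref{L3} only gives the weak inequality. I would therefore state explicitly that the factorisation step inherits the conditions of Assumption \ref{Ass_1} for $(\Lambda_1,s\Psi_1)$. If no unit root is present, I would note that $\Pi_k$ is full rank and the representation holds trivially with square $\tau,\gamma$. A second point to check is the scalar normalisation ambiguity, since $(cB)\otimes(A/c)$ gives the same product. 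I would fix it by choosing the split that satisfies Lemma \ref{L3}, which is always possible by placing $s$ on $\Psi_1$ as above.
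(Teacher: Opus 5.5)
Your proposal is correct and follows the paper's route. The paper states that Theorem \ref{T3} follows immediately from Lemma \ref{L2}, which makes each tail sum $\sum_{i=j}^p\Psi_i\otimes\Lambda_i$ a single Kronecker product, and from Lemma \ref{L3}, whose eigenvalue bound permits the split $B=I+\Pi_1$, $A=I+\Pi_2$ followed by rank factorisation; this is exactly your argument. Your extra bookkeeping fills in details the paper leaves implicit: the explicit short-run coefficients $\Gamma_{1,i}=-s_{i+1}\Lambda_1$, $\Gamma_{2,i}=\Psi_1$; placing $s$ on $\Psi_1$ so the theorem's hypotheses (swapped relative to Lemma \ref{L3} as literally stated) do the work; and the observation that $|\mathrm{eig}|\le 1$ alone does not give rank deficiency, so the cointegration reading needs an Assumption \ref{Ass_1}-type condition while the algebraic representation holds regardless.
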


\section{Alternative Matrix Representation for Cointegrated Systems.}

Concepts related to cointegration and reduced-rank modeling have already been incorporated into the literature on matrix-valued time series. For example, \cite{xiao2022reduced} introduce the Reduced Rank Matrix Autoregressive model (RRMAR), which extends the MAR model in \eqref{MAR} by imposing reduced-rank restrictions on the autoregressive coefficient matrices. However, the first contribution to explicitly introduce cointegration into matrix-valued autoregressive models is \cite{li2024cointegrated}; see also the extension in \cite{hecq2024detecting}. In its simplest form, their model is
\begin{equation}\label{MECM}
    \Delta X_t = \Pi_1 X_{t-1}\Pi_2' + E_t
    = \tau\gamma' X_{t-1}\theta\varphi' + E_t,
\end{equation}
where $\Pi_1=\tau\gamma'$ and $\Pi_2=\varphi\theta'$ are reduced-rank matrices of ranks $r_1$ and $r_2$, respectively, with full-column-ranks $\tau,\gamma\in\mathbb{R}^{m\times r_1}$ and $\varphi,\theta\in\mathbb{R}^{n\times r_2}$.

The corresponding vectorized representation is a standard VECM with coefficient matrix constrained to have Kronecker-product form:
\begin{equation}\label{Vec_MECM}
    \text{vec}(\Delta X_t)
    =
    (\varphi\theta' \otimes \tau\gamma')\text{vec}(X_{t-1})+\text{vec}(E_t)
    =
    (\varphi\otimes\tau)(\theta'\otimes\gamma')\text{vec}(X_{t-1})+\text{vec}(E_t).
\end{equation}

To ensure that the vectorized process $x_t=\operatorname{vec}(X_t)$ is $I(1)$ and cointegrated, the following Johansen-type condition is imposed on the characteristic polynomial of the vectorized system.
\begin{assumption}\label{ASS_2}
Let
\begin{equation*}
     A(z)
    =
    (1-z)I_{mn}
    -z(\varphi\theta' \otimes \tau\gamma').
\end{equation*}
If $|A(z)|=0$, then either $|z|>1$ or $z=1$.
\end{assumption}

This construction differs fundamentally from the C-MAR proposed in the present paper for two main reasons. First, the model in \eqref{MECM} is postulated directly in error-correction form and is not derived from a MAR representation in levels. Second, its equilibrium structure is intrinsically bilinear and does not disentangle the equilibrium relations of the rows and columns.

To illustrate the first point, vectorizing \eqref{MECM} yields the VECM representation in \eqref{Vec_MECM}, where $\varphi\otimes\tau$ and $\theta'\otimes\gamma'$ are the adjustment and cointegration matrices, respectively. However, \eqref{MECM} does not imply a MAR representation in levels of the form \eqref{MAR}. Indeed, the MECM can be written in VAR form as
\begin{equation*}
    \text{vec}(X_t) = \big(\varphi\theta' \otimes \tau\gamma' + I_{mn}\big)\text{vec}(X_{t-1}) + \text{vec}(E_t).
\end{equation*}
Consider the following lemma.
\begin{lemma}\label{L4}
Given Assumption \ref{ASS_2}, the matrix
\[
\varphi\theta' \otimes \tau\gamma' + I_{mn}
\]
resulting from model \eqref{MECM} does not admit a single Kronecker product decomposition. More precisely, there do not exist matrices
\[
A^*\in\mathbb{R}^{n\times n},
\qquad
B^*\in\mathbb{R}^{m\times m}
\]
such that
\[
\varphi\theta' \otimes \tau\gamma' + I_{mn} = A^*\otimes B^*.
\]
\end{lemma}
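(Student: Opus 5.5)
The plan is to use the Van Loan--Pitsianis rearrangement operator, which turns Kronecker products into rank-one matrices. This is the same device that naturally underlies Lemma \ref{L1}. For $M\in\mathbb{R}^{mn\times mn}$ partitioned into $n\times n$ blocks $M_{ij}\in\mathbb{R}^{m\times m}$, let $\mathcal{R}(M)\in\mathbb{R}^{n^2\times m^2}$ be the matrix whose rows are $\operatorname{vec}(M_{ij})'$, ordered consistently with $\operatorname{vec}$. The key facts are:
\begin{itemize}
\item $\mathcal{R}$ is linear;
\item $\mathcal{R}(A\otimes B)=\operatorname{vec}(A)\operatorname{vec}(B)'$ for $A\in\mathbb{R}^{n\times n}$ and $B\in\mathbb{R}^{m\times m}$.
\end{itemize}
Consequently, $M$ admits a single Kronecker decomposition $M=A^*\otimes B^*$ if and only if $\operatorname{rank}\mathcal{R}(M)\le 1$. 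I would recall or briefly verify these facts first.

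Next, write $I_{mn}=I_n\otimes I_m$, so that
\[
\mathcal{R}\big(\varphi\theta'\otimes\tau\gamma'+I_{mn}\big)=\operatorname{vec}(I_n)\operatorname{vec}(I_m)'+\operatorname{vec}(\varphi\theta')\operatorname{vec}(\tau\gamma')'.
\]
A sum of two rank-one matrices $ab'+cd'$ has rank exactly $2$ when $\{a,c\}$ and $\{b,d\}$ are each linearly independent. So the task reduces to two checks: $\operatorname{vec}(I_n)$ and $\operatorname{vec}(\varphi\theta')$ are linearly independent, and likewise for $\operatorname{vec}(I_m)$ and $\operatorname{vec}(\tau\gamma')$.

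The key step is to use the ranks. By construction $\Pi_2=\varphi\theta'$ has rank $r_2$ with $0<r_2<n$, and $\Pi_1=\tau\gamma'$ has rank $r_1$ with $0<r_1<m$. These reduced ranks are the nondegenerate cointegration requirement, and Assumption \ref{ASS_2} guarantees that the vectorized process is genuinely $I(1)$ and cointegrated.
\begin{itemize}
\item Since $r_2>0$, we have $\varphi\theta'\neq 0$.
\item If $\varphi\theta'=cI_n$ for some scalar $c$, then either $c=0$, which is excluded, or $\varphi\theta'$ has rank $n>r_2$, a contradiction.
\item Hence $\operatorname{vec}(\varphi\theta')$ is not a multiple of $\operatorname{vec}(I_n)$. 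Because $\operatorname{vec}(I_n)\neq 0$, the two vectors are linearly independent.
\item The identical argument, using $0<r_1<m$, shows $\operatorname{vec}(\tau\gamma')$ and $\operatorname{vec}(I_m)$ are linearly independent.
\end{itemize}
Therefore $\mathcal{R}$ of the VAR matrix has rank $2$, and no $A^*,B^*$ exist.

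The main obstacle is not computational. It is making explicit that the setting of \eqref{MECM} under Assumption \ref{ASS_2} indeed entails strictly reduced but nonzero ranks $0<r_1<m$ and $0<r_2<n$. In the degenerate case $r_2=n$ or $r_1=m$, the argument fails. For example, $\varphi\theta'$ could be a multiple of $I_n$, and then the matrix collapses to a Kronecker product. I would state this explicitly, noting that full rank of $\Pi_1$ or $\Pi_2$ would not correspond to a cointegrated system with common trends.

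If one prefers to avoid the rearrangement operator, a fallback is the block argument: compare the off-diagonal and diagonal $m\times m$ blocks of $A^*\otimes B^*$, namely $a^*_{ij}B^*$, with those of $I_{mn}+\varphi\theta'\otimes\tau\gamma'$, namely $\delta_{ij}I_m+(\varphi\theta')_{ij}\tau\gamma'$. One then derives the same proportionality contradiction.
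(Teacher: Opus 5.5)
Your proof is correct and follows essentially the paper's route. The paper's linear-functional argument from Lemma \ref{L1}, extended to $P=\varphi\theta'$ and $Q=\tau\gamma'$, shows that a Kronecker decomposition forces $P$ or $Q$ to be a scalar multiple of the identity, and this is the dual form of your rank-two criterion for $\operatorname{vec}(I_n)\operatorname{vec}(I_m)'+\operatorname{vec}(P)\operatorname{vec}(Q)'$ under the rearrangement operator; both proofs then rule out the scalar-multiple case using $0<r_2<n$ and $0<r_1<m$, and you are right that these bounds come from the reduced-rank setup of \eqref{MECM} rather than from Assumption \ref{ASS_2} itself (the paper only asserts them).
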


Lemma \ref{L4} shows that the VAR representation associated with \eqref{MECM} does not admit a MAR representation in levels. Therefore, the term cointegrated MAR would be inappropriate. For this reason, we refer to \eqref{MECM} as a Matrix Error Correction Model (hereafter MECM), in order to distinguish it from the C-MAR developed in the previous section, which instead originates from a MAR.

The absence of a MAR representation is not merely a mathematical detail; it has important consequences for the interpretation of the equilibrium structure in the MECM. In this model, the stationary long-run component is the bilinear term $\gamma'X_t\theta$. Hence, the equilibrium condition is given by $\gamma'X_t\theta = 0,$, which in general cannot be decomposed into two separate one-sided equilibrium relations.

To further illustrate the point, consider the following theorem:
\begin{theorem}\label{Teo_4}
Under Assumption \ref{ASS_2}, the matrix-valued processes
\[
\gamma'X_t \in \mathbb{R}^{r_1\times n},
\qquad
X_t\theta \in \mathbb{R}^{m\times r_2}
\]
in \eqref{MECM} are both $I(1)$. 
\end{theorem}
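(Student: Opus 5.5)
We prove Theorem \ref{Teo_4} by computing the Granger representation of the vectorized MECM \eqref{Vec_MECM} and applying the Kronecker-structured selection matrices $I_n\otimes\gamma'$ and $\theta'\otimes I_m$ to it. The object is to show that the common-trend component survives both projections.

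Write $\alpha:=\varphi\otimes\tau$ and $\beta:=\theta\otimes\gamma$, so that $\Delta x_t=\alpha\beta'x_{t-1}+e_t$ with $x_t=\operatorname{vec}(X_t)$ and $e_t=\operatorname{vec}(E_t)$. Assumption \ref{ASS_2} is the Johansen root condition. We supplement it with the standard $I(1)$ requirement that $\alpha_\perp'\beta_\perp$ is nonsingular, which we take to be implicit in the assumption since otherwise $x_t$ would not be $I(1)$. Johansen's representation theorem then gives
\[
x_t = C\sum_{s=1}^t e_s + y_t + x_0^*, \qquad C=\beta_\perp(\alpha_\perp'\beta_\perp)^{-1}\alpha_\perp',
\]
where $y_t$ is $I(0)$. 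The orthogonal complements are explicit. For $\beta_\perp$ (of dimension $mn\times(mn-r_1r_2)$) we can take
\[
\beta_\perp=\left[\theta_\perp\otimes I_m,\; \theta\otimes\gamma_\perp\right],
\]
and $\alpha_\perp$ has the analogous form with $(\varphi,\tau)$ in place of $(\theta,\gamma)$.

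Next we compute $(I_n\otimes\gamma')x_t=\operatorname{vec}(\gamma'X_t)$. It is $I(1)$ if and only if $(I_n\otimes\gamma')C\neq 0$. Because $\alpha_\perp$ has full column rank and $(\alpha_\perp'\beta_\perp)^{-1}$ is nonsingular, this is equivalent to $(I_n\otimes\gamma')\beta_\perp\neq 0$. Now
\[
(I_n\otimes\gamma')(\theta_\perp\otimes I_m)=\theta_\perp\otimes\gamma',
\]
which is nonzero since $r_2<n$ makes $\theta_\perp$ nonempty and $\gamma\neq0$. Hence the random-walk component $(\theta_\perp\otimes\gamma')(\cdots)\sum e_s$ has nonsingular long-run variance along at least one direction, assuming $E_t$ has positive definite covariance. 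So $\gamma'X_t$ is $I(1)$.

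Symmetrically,
\[
(\theta'\otimes I_m)(\theta\otimes\gamma_\perp)=\theta'\theta\otimes\gamma_\perp\neq 0,
\]
because $r_1<m$. Therefore $X_t\theta=\operatorname{unvec}\big((\theta'\otimes I_m)x_t\big)$ is $I(1)$.

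The main obstacle is to justify $\beta_\perp$ and to make the step from "$(I_n\otimes\gamma')C\neq0$" to "$I(1)$" rigorous. For the first, we check that $\beta'\beta_\perp=0$, which follows from $\theta'\theta_\perp=0$ and $\gamma'\gamma_\perp=0$, and that $[\beta,\beta_\perp]$ is square and nonsingular by a dimension count: $r_1r_2+(n-r_2)m+r_2(m-r_1)=mn$. For the second, we use that a linear combination $v'x_t$ is $I(0)$ if and only if $v'C=0$. This holds because $\operatorname{sp}(C)^\perp=\operatorname{sp}(\beta)$ under Johansen's conditions. It therefore suffices to show that some column of $I_n\otimes\gamma$ lies outside $\operatorname{sp}(\beta)=\operatorname{sp}(\theta\otimes\gamma)$. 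This is immediate: $\operatorname{sp}(\theta\otimes\gamma)$ has dimension $r_1r_2<nr_1=\dim\operatorname{sp}(I_n\otimes\gamma)$. The same dimension argument handles $\theta\otimes I_m$, whose span has dimension $mr_2>r_1r_2$.
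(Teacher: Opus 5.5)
Your proof is correct and follows essentially the paper's own route: the Granger representation of the vectorized MECM with $\beta_\perp=[\theta_\perp\otimes I_m,\;\theta\otimes\gamma_\perp]$, followed by showing that $(I_n\otimes\gamma')\beta_\perp=[\theta_\perp\otimes\gamma',\,0]$ and $(\theta'\otimes I_m)\beta_\perp=[0,\,\theta'\theta\otimes\gamma_\perp]$ are nonzero. Your explicit addition of the condition that $\alpha_\perp'\beta_\perp$ is nonsingular is a genuine tightening, since Assumption~\ref{ASS_2} alone admits, for example, a nilpotent $\tau\gamma'$ (all roots at $z=1$, yet $x_t$ is $I(2)$), whereas the paper simply takes $x_t$ to be $I(1)$ at that step.
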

Theorem \ref{Teo_4} further clarifies why the equilibrium structure of the MECM is fundamentally different from that of the C-MAR. Since the one-sided processes $\gamma'X_t$ and $X_t\theta$ are themselves $I(1)$ rather than $I(0)$, they cannot be interpreted as equilibrium errors or mean-reverting deviations from a steady state. In a cointegrated system, only stationary combinations can play this role, because an equilibrium relation must remain bounded over time and pull the system back toward a long-run position. Here, instead, the only stationary object is the bilinear combination $\gamma'X_t\theta$. Therefore, the long-run equilibrium in the MECM is intrinsically joint, involving the row and column structures simultaneously, and cannot be decomposed into two separate equilibrium relations.

In light of the previous discussion, the C-MAR proposed in this paper should be considered conceptually distinct from the matrix cointegration models already available in the literature. Although both the MECM and the C-MAR are designed to model cointegrated matrix-valued time series, they differ in both structure and interpretation. The MECM is formulated directly in error-correction form and delivers an intrinsically bilinear equilibrium condition. By contrast, the C-MAR originates from a genuine MAR representation and gives rise to two separate equilibrium errors, which admit an independent interpretation in terms of row- and column long-run relations.

\section{Estimation}

We estimate the ECC--MAR model by maximum likelihood and implement the estimation
through an alternating algorithm, in the spirit of the iterative likelihood-based
procedures proposed by \cite{li2024cointegrated}.

We begin from the general ECC--MAR$(p)$ in equation \eqref{ECC-MAROrderP}. To obtain a likelihood-based estimator, we impose the additional assumption that the innovation matrix is Gaussian with separable covariance as in \cite{chen2021autoregressive}:
\begin{equation}
E_t \stackrel{i.i.d.}{\sim} MN_{m,n}(0,\Sigma_r,\Sigma_c),
\label{eq:matrix_normal_ass_general}
\end{equation}
where $\Sigma_r \in \mathbb{R}^{m\times m}$ and
$\Sigma_c \in \mathbb{R}^{n\times n}$ are positive definite. Equivalently, $\mathrm{vec}(E_t)\sim N\!\left(0,\Sigma_c\otimes\Sigma_r\right).$ Let
\begin{equation}
\mathcal{M}_t(\vartheta)
:=
\tau \gamma' X_{t-1}
+
X_{t-1}\theta\phi'
+
\tau\gamma'X_{t-1}\theta\phi'
+
\sum_{i=1}^{p-1}\Gamma_{1,i}\Delta X_{t-i}\Gamma_{2,i}',
\label{eq:conditional_mean_general}
\end{equation}
where $\vartheta
=
\big(
\tau,\gamma,\phi,\theta,\Gamma_{1,1},\Gamma_{2,1},\dots,\Gamma_{1,p-1},\Gamma_{2,p-1}
\big),$ and define the structural residual
\begin{equation}
\mathcal{U}_t(\vartheta)
:=
\Delta X_t-\mathcal{M}_t(\vartheta).
\label{eq:structural_residual_general}
\end{equation}
Conditional on the initial values and under \eqref{eq:matrix_normal_ass_general},
we have $\mathcal{U}_t(\vartheta)=E_t \sim MN_{m,n}(0,\Sigma_r,\Sigma_c).$

It follows that the conditional likelihood of the sample is
\begin{align}
L(\vartheta,\Sigma_r,\Sigma_c)
=
(2\pi)^{-T_0mn/2}
|\Sigma_r|^{-T_0n/2}
|\Sigma_c|^{-T_0m/2}
\exp\!\left\{
-\frac{1}{2}
\sum_{t=p}^{T}
\mathrm{tr}\!\left(
\Sigma_r^{-1}\mathcal{U}_t(\vartheta)\Sigma_c^{-1}\mathcal{U}_t(\vartheta)'
\right)
\right\},
\label{eq:sample_likelihood_general}
\end{align}
where $T_0 := T-p+1$. Taking logarithms yields the Gaussian log-likelihood
\begin{align}
\ell(\vartheta,\Sigma_r,\Sigma_c)
&=
-\frac{T_0mn}{2}\log(2\pi)
-\frac{T_0n}{2}\log|\Sigma_r|
-\frac{T_0m}{2}\log|\Sigma_c|
\nonumber\\
&\quad
-\frac{1}{2}
\sum_{t=p}^{T}
\mathrm{tr}\!\left(
\Sigma_r^{-1}\mathcal{U}_t(\vartheta)\Sigma_c^{-1}\mathcal{U}_t(\vartheta)'
\right).
\label{eq:loglik_matrix_general}
\end{align}

To estimate \eqref{ECC-MAROrderP}, we employ an alternating maximum likelihood algorithm.
The idea is to update the coefficients of the row structure while holding fixed the
coefficients of the column structure, and then to proceed symmetrically in the opposite
direction. 

Note that post-multiplying \eqref{ECC-MAROrderP} by $\varphi_\perp$ eliminates the
right-hand error-correction terms and yields
\begin{equation}
\Delta X_t\varphi_\perp
=
\tau \gamma'X_{t-1}\varphi_\perp
+
\sum_{i=1}^{p-1}\Gamma_{1,i}\Delta X_{t-i}\Gamma_{2,i}'\varphi_\perp
+
E_t\varphi_\perp.
\label{eq:left_projected_raw_text}
\end{equation}
The projected errors satisfy $E_t\varphi_\perp \sim MN\!\left(0,\Sigma_r,\varphi_\perp'\Sigma_c\varphi_\perp\right)$. Hence, defining $C_\varphi:=\varphi_\perp'\Sigma_c\varphi_\perp$ and right-whitened variables $\widetilde Y_t^{L}:=\Delta X_t\varphi_\perp C_\varphi^{-1/2},$ $\widetilde X_t^{L}:=X_{t-1}\varphi_\perp C_\varphi^{-1/2},$ $
\widetilde Z_{i,t}^{L}:=\Delta X_{t-i}\Gamma_{2,i}'\varphi_\perp C_\varphi^{-1/2},$
equation \eqref{eq:left_projected_raw_text} becomes
\begin{equation}\label{Auxiliary_1}
\widetilde Y_t^{L}
=
\tau\gamma'\widetilde X_t^{L}
+
\sum_{i=1}^{p-1}\Gamma_{1,i}\widetilde Z_{i,t}^{L}
+
\widetilde E_t^{L},
\end{equation}
with $\widetilde E_t^{L}\sim MN(0,\Sigma_r,I)$.

Similarly, pre-multiplying \eqref{ECC-MAROrderP} by $\tau_\perp'$ eliminates
the left-hand error-correction terms and gives
\begin{equation}
\tau_\perp'\Delta X_t
=
\tau_\perp'X_{t-1}\theta\varphi'
+
\sum_{i=1}^{p-1}\tau_\perp'\Gamma_{1,i}\Delta X_{t-i}\Gamma_{2,i}'
+
\tau_\perp'E_t.
\label{eq:right_projected_raw_text}
\end{equation}
Again, the projected errors satisfy $\tau_\perp'E_t \sim MN\!\left(0,\tau_\perp'\Sigma_r\tau_\perp,\Sigma_c\right).$ Therefore, defining $C_\tau:=\tau_\perp'\Sigma_r\tau_\perp$ and
$\widetilde Y_t^{R}:=\Delta X_t'\tau_\perp C_\tau^{-1/2},$ $\widetilde X_t^{R}:=X_{t-1}'\tau_\perp C_\tau^{-1/2},$ $\widetilde Z_{i,t}^{R}:=\Delta X_{t-i}'\Gamma_{1,i}'\tau_\perp C_\tau^{-1/2},$ we obtain
\begin{equation}\label{Auxiliary_2}
   \widetilde Y_t^{R}
=
\varphi\theta'\widetilde X_t^{R}
+
\sum_{i=1}^{p-1}\Gamma_{2,i}\widetilde Z_{i,t}^{R}
+
\widetilde E_t^{R}, 
\end{equation}
with $\widetilde E_t^{R}\sim MN(0,\Sigma_c,I)$.

Equations \eqref{Auxiliary_1} and \eqref{Auxiliary_2} can then be estimated using standard
Gaussian reduced-rank regression methods; see \citet{anderson1951estimating} and
\citet{johansen1995likelihood}. After whitening, the auxiliary matrix equations can be
read column by column as standard multivariate reduced-rank regressions with common
coefficient matrices. The corresponding likelihood therefore decomposes as a sum over
all time-column pairs $(t,j)$, so that the estimation can be based on pooled cross-products
of the transformed variables. Appendix B illustrates the Johansen estimation procedure step by step in the present setting.

The factorization of the long-run component into $(\tau,\gamma)$ and $(\varphi,\theta)$ is not unique, as in standard reduced-rank representations, so the economically meaningful objects are the associated row and column
cointegration spaces, with specific matrix representatives selected by normalization. Likewise, under the separable Gaussian assumption, the covariance factors $\Sigma_r$ and $\Sigma_c$ are identified only up to a common scale factor, since only the Kronecker product $\Sigma_c\otimes\Sigma_r$ is uniquely determined.

The preceding results motivate the following alternating maximum likelihood procedure.

\begin{enumerate}
\item \textbf{Initialization.}
Choose initial values for the column-side parameters
$\Sigma_c^{(0)}, \varphi^{(0)},\theta^{(0)},$ $\Gamma_{2,1}^{(0)},\dots,\Gamma_{2,p-1}^{(0)}$,
and compute the corresponding orthogonal complement $\varphi_\perp^{(0)}$.
As a practical initialization device, one may use estimates obtained from a standard
vector error-correction model fitted to a representative column of $X_t$. This serves only
as a warm start for the iterative procedure.

\item \textbf{Update the row structure.}
Given
$\varphi_\perp^{(k)},\theta^{(k)},\Gamma_{2,1}^{(k)},\dots,\Gamma_{2,p-1}^{(k)},\Sigma_c^{(k)}$,
construct the transformed variables in \eqref{Auxiliary_1} and estimate the row-side
parameters by Gaussian reduced-rank regression $
\big(\tau^{(k+1)},\gamma^{(k+1)},\Gamma_{1,1}^{(k+1)},\dots,\Gamma_{1,p-1}^{(k+1)},\Sigma_r^{(k+1)}\big).
$
Then compute the corresponding orthogonal complement $\tau_\perp^{(k+1)}$.

\item \textbf{Update the column structure.}
Given
$\tau_\perp^{(k+1)},\Gamma_{1,1}^{(k+1)},\dots,\Gamma_{1,p-1}^{(k+1)},\Sigma_r^{(k+1)}$,
construct the transformed variables in \eqref{Auxiliary_2} and estimate the column-side
parameters by Gaussian reduced-rank regression,
$
\big(\varphi^{(k+1)},\theta^{(k+1)},\Gamma_{2,1}^{(k+1)},\dots,\Gamma_{2,p-1}^{(k+1)},\Sigma_c^{(k+1)}\big).
$
Then compute the corresponding orthogonal complement $\varphi_\perp^{(k+1)}$.

\item \textbf{Evaluate the system likelihood.}
Compute the system's Gaussian log-likelihood in \eqref{eq:loglik_matrix_general}.

\item \textbf{Stopping rule.}
Repeat Steps 2--4 until $
\left|\ell^{(k+1)}-\ell^{(k)}\right|<\varepsilon,
$ for a prescribed tolerance level $\varepsilon>0$. As a practical safeguard, if the system
log-likelihood decreases at iteration $k+1$, the last update can be rejected and the
algorithm stopped at the previous iterate.
\end{enumerate}

Two identification remarks are in order. First, the cointegration
and adjustment matrices are identified only up to the usual rank-preserving
normalizations, so inference is formulated in terms of the associated cointegration
spaces. Second, under the separable Gaussian assumption
$\mathrm{vec}(E_t)\sim N(0,\Sigma_c\otimes\Sigma_r)$, the covariance factors
$\Sigma_r$ and $\Sigma_c$ are identified only up to a common scale normalization.

\subsection{Determination of the Cointegration Ranks}

The estimation procedure requires prior knowledge of the cointegration ranks
$r_1$ and $r_2$. By Theorem~\ref{T01}, these ranks are linked to the cointegration
rank $r$ of the vectorized system through
\begin{equation}
r = n r_1 + m r_2 - r_1 r_2,
\label{eq:rank_relation}
\end{equation}
where $m$ and $n$ are known. Since $r_1$ and $r_2$ are ranks, they must be
non-negative integers. The first step is therefore to estimate the overall cointegration rank $r$ of the
vectorized model $\mathrm{vec}(X_t)$ using the standard Johansen tests, such as the trace test or the maximum eigenvalue test. The asymptotic
properties of these tests are well established; see Chapter~11 of
\citet{johansen1995likelihood}, in particular Theorem~11.1.

In many cases, solving \eqref{eq:rank_relation} over the set of admissible integer values
yields a unique pair $(r_1,r_2)$. For example, if $m=4$, $n=3$, and $\hat r=8$, then
\eqref{eq:rank_relation} implies the unique admissible solution $(r_1,r_2)=(2,1)$.
In this case, the values of $(r_1,r_2)$ are completely determined by $\hat r$,
so the only possible source of error is an incorrect estimate of the vectorized
cointegration rank $r$.

When more than one admissible pair satisfies \eqref{eq:rank_relation}, the ambiguity
can be resolved using the stationarity implications of the ECC--MAR model. This is
formalized in the following proposition.

\begin{proposition}\label{P_rankpairs}
Let $\{X_t\}$ be generated by an ECC-MAR process with left and right
cointegration ranks $(r_1,r_2)$ and the corresponding cointegration spaces
$\mathcal C_\gamma\subseteq\mathbb R^{m}$ and $\mathcal C_\theta\subseteq\mathbb R^{n}$,
with $\dim(\mathcal C_\gamma)=r_1$ and $\dim(\mathcal C_\theta)=r_2$.
Let $(r_1^\ast,r_2^\ast)$ be an admissible pair that satisfies the same rank identity
$r = n r_1 + m r_2 - r_1 r_2 = n r_1^\ast + m r_2^\ast - r_1^\ast r_2^\ast$.

If $(r_1^\ast,r_2^\ast)\neq(r_1,r_2)$, then at least one of the following holds:
\begin{enumerate}
\item for any full column rank $\gamma^\ast\in\mathbb R^{m\times r_1^\ast}$,
not all components of $\gamma^{\ast\prime}X_t$ are $I(0)$;
\item for any full column rank $\theta^\ast\in\mathbb R^{n\times r_2^\ast}$,
not all components of $X_t\theta^\ast$ are $I(0)$.
\end{enumerate}
\end{proposition}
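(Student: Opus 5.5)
Fix the true pair $(r_1,r_2)$, let $x_t=\operatorname{vec}(X_t)$, and let $\mathcal B\subseteq\mathbb R^{mn}$ be the cointegration space of $x_t$. The plan is to describe $\mathcal B$ exactly, show that any alternative pair satisfying \emph{both} stationarity requirements would force a tensor-product subspace into $\mathcal B$, and then use the rank identity to rule this out.

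\emph{Step 1: the cointegration space.} By Theorem \ref{T01} and Theorem \ref{T2}, $\mathcal B$ is spanned by $[I_n\otimes\gamma,\ \theta\otimes\gamma_\perp]$ and has dimension $r=nr_1+mr_2-r_1r_2$. I would first check that this equals
\[
\mathcal B=\big(\mathbb R^n\otimes\mathcal C_\gamma\big)+\big(\mathcal C_\theta\otimes\mathbb R^m\big).
\]
One inclusion is immediate from the columns of $\beta$. Equality then follows from the dimension formula: the intersection of the two summands is $\mathcal C_\theta\otimes\mathcal C_\gamma$, so the sum has dimension $nr_1+mr_2-r_1r_2$. Equivalently, the orthogonal complement of $\mathcal B$ is $\mathcal C_\theta^\perp\otimes\mathcal C_\gamma^\perp$, of dimension $(n-r_2)(m-r_1)$, and it spans the common trends. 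The fact used below is that a linear combination $c'x_t$ is $I(0)$ if and only if $c\in\mathcal B$, which is the standard property of an $I(1)$ process with cointegration space $\mathcal B$.

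\emph{Step 2: translate the two conditions.} Suppose, towards a contradiction, that neither conclusion of the proposition holds. Then there exist full column rank $\gamma^\ast$ and $\theta^\ast$ with $\gamma^{\ast\prime}X_t$ and $X_t\theta^\ast$ both $I(0)$.
\begin{itemize}
\item The $(i,j)$ entry of $\gamma^{\ast\prime}X_t$ is $(e_j\otimes\gamma^\ast_i)'x_t$, where $\gamma^\ast_i$ is the $i$-th column of $\gamma^\ast$. Its stationarity gives $\mathbb R^n\otimes\mathcal C_{\gamma^\ast}\subseteq\mathcal B$.
\item The same argument applied to $X_t\theta^\ast$ gives $\mathcal C_{\theta^\ast}\otimes\mathbb R^m\subseteq\mathcal B$.
\end{itemize}

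\emph{Step 3 (key step): from tensor inclusions to subspace inclusions.} I need to show that $\mathbb R^n\otimes V\subseteq\mathcal B$ implies $V\subseteq\mathcal C_\gamma$. Take $v\in V$. Then $e_j\otimes v\in\mathcal B$ for every $j$, so $e_j\otimes v$ is orthogonal to $w\otimes u$ for all $w\in\mathcal C_\theta^\perp$ and $u\in\mathcal C_\gamma^\perp$. Since $(e_j\otimes v)'(w\otimes u)=w_j\,(v'u)$, choosing $j$ with $w_j\neq0$ (possible because $\mathcal C_\theta^\perp\neq0$, as $r_2<n$) yields $v'u=0$ for all $u\in\mathcal C_\gamma^\perp$. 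Hence $v\in\mathcal C_\gamma$. The symmetric argument, which uses $r_1<m$, shows that $\mathcal C_{\theta^\ast}\otimes\mathbb R^m\subseteq\mathcal B$ implies $\mathcal C_{\theta^\ast}\subseteq\mathcal C_\theta$. Applied to Step 2, this gives
\[
r_1^\ast\le r_1,\qquad r_2^\ast\le r_2 .
\]

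\emph{Step 4: the rank identity gives the contradiction.} Define $f(a,b)=nm-(m-a)(n-b)$. It is strictly increasing in $a$ whenever $b<n$, and strictly increasing in $b$ whenever $a<m$. If $r_1^\ast\le r_1$, $r_2^\ast\le r_2$, and the two pairs differ, then
\[
f(r_1^\ast,r_2^\ast)<f(r_1,r_2)=r,
\]
which contradicts the assumed identity $f(r_1^\ast,r_2^\ast)=r$. Therefore at least one of the two conclusions must hold.

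The main obstacle is the characterisation of $\mathcal B$ in Step 1 together with the ``$c'x_t$ is $I(0)$ if and only if $c\in\mathcal B$'' property. Both rely on Theorem \ref{T2} and on the semisimplicity in Assumption \ref{Ass_1}, which makes the common trends exactly $\mathcal C_\theta^\perp\otimes\mathcal C_\gamma^\perp$. Once that is in place, the remaining steps are linear algebra.
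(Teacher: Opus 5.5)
Your proof is correct and follows the same route as the paper: assume both one-sided transforms are componentwise $I(0)$, deduce $r_1^\ast\le r_1$ and $r_2^\ast\le r_2$, and let the rank identity force $(r_1^\ast,r_2^\ast)=(r_1,r_2)$. The paper only asserts that stationarity of $\gamma^{\ast\prime}X_t$ places the columns of $\gamma^\ast$ in $\mathcal C_\gamma$, and that the rank identity then forces equality; your Steps 1--4 supply the missing justification for both claims, namely the description $\mathcal B^\perp=\mathcal C_\theta^\perp\otimes\mathcal C_\gamma^\perp$, the orthogonality argument using all $n$ columns together with $r_2<n$ (and symmetrically $r_1<m$), and the strict monotonicity of $f$.
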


Proposition~\ref{P_rankpairs} applies only when more than one admissible pair remains.
In that case, each candidate model is estimated and one checks whether the transformed
processes $\hat\gamma'X_t$ and $X_t\hat\theta$ are stationary in all components. The
correct pair $(r_1,r_2)$ is identified as the one for which both sets of transformed
series are stationary, while any candidate that overstates one of the two ranks must
generate at least one non-stationary component. Stationarity can be assessed by standard
unit root tests, such as the Dickey--Fuller test; see \citet{dickey1979distribution}.

In summary, the determination of $(r_1,r_2)$ proceeds in two steps. First, estimate
the vectorized rank $r$ and enumerate all admissible integer pairs that
satisfy \eqref{eq:rank_relation}. Second, if this yields more than one admissible pair, estimate
the competing specifications and select the one for which $\hat\gamma'X_t$ and
$X_t\hat\theta$ are stationary in all components.

\subsection{Testing Hypotheses on Cointegration and Adjustment Matrices}

From an economic point of view, the cointegration matrices $\gamma$ and $\theta$
describe the long-run equilibrium structure of the row and column spaces, respectively,
while the adjustment matrices $\tau$ and $\varphi$ describe the speed with which
deviations from these equilibrium relations are corrected. This interpretation of the
ECC--MAR coefficients makes it possible to associate parameter restrictions with
economically meaningful hypotheses.

A first class of economically relevant hypotheses concerns uniform restrictions on the
cointegration matrices:
\begin{equation}
\gamma_c = H_\gamma \widetilde{\gamma},
\qquad
\theta_c = H_\theta \widetilde{\theta},
\label{eq:test_uniform_cointegration}
\end{equation}
where $H_\gamma$ and $H_\theta$ are design matrices. These restrictions are imposed on
the entire cointegration space and are useful, for example, to test whether a variable is
long-run excludable, through a zero-row restriction, or whether a long-run homogeneity
relation holds across all equilibrium relations. Economically, these tests assess whether
a given variable contributes to the long-run structure at all or whether certain
transformed variables, such as spreads, ratios, or real quantities, provide a more
meaningful description of equilibrium behavior.

A second class of hypotheses concerns whether a known vector belongs to the
cointegration space, namely, whether
\begin{equation}
\gamma_c = (g,\widetilde{\gamma}),
\qquad
\theta_c = (t,\widetilde{\theta}),
\label{eq:test_known_theta}
\end{equation}
for known vectors $g$ and $t$. These restrictions are useful when economic theory
suggests a specific equilibrium relation, for example, a spread, a unit-elasticity relation,
or the stationarity of a single variable. In this case, the test asks whether the proposed
vector belongs to the stationarity space spanned by the estimated cointegration relations.

A third class of hypotheses concerns restrictions on the adjustment matrices:
\begin{equation}
\tau_c = H_\tau \widetilde{\tau},
\qquad
\varphi_c = H_\varphi \widetilde{\varphi}.
\label{eq:test_adjustment}
\end{equation}
Of particular interest is the case of zero-row restrictions, which corresponds to
weak exogeneity. If a row of $\tau$ or $\varphi$ is zero, the corresponding
variable does not respond to disequilibria in the long run. Economically, such variables
can be interpreted as drivers of the common stochastic trends, whereas variables with
non-zero adjustment coefficients are those that bear the burden of restoring equilibrium.
Hence, tests on the adjustment matrices help distinguish between the variables that
push the system and those that pull it back toward its long-run path.

These three classes of restrictions cover the hypotheses that are most relevant in
empirical applications of the ECC--MAR model. Many additional restrictions can be
handled. For a detailed treatment of such
cases in the standard C-VAR model, see Chapters~10 and~11 of
\citet{juselius2006cointegrated}.

The ECC--MAR model admits a natural likelihood-based testing framework through the
auxiliary regressions \eqref{Auxiliary_1} and \eqref{Auxiliary_2}. As shown above,
after whitening these auxiliary systems are standard Gaussian reduced-rank regressions,
so that inference can be conducted by likelihood ratio tests, exactly as in the classical
cointegrated VAR model. The only difference is that, in the present matrix setting,
the relevant sample moment matrices are pooled across the transformed column-wise
observations, exactly as in the estimation step. Under the null, the corresponding
likelihood ratio statistics are asymptotically chi-square, with degrees of freedom
determined by the number of imposed restrictions. Appendix~C reports the explicit
construction of each test statistic.

\section{Simulations}

We conducted a Monte Carlo study to evaluate three aspects of the proposed methodology: (i) finite-sample estimation accuracy, (ii) rank selection, and (iii) empirical performance of the restriction tests.

Data are generated from the ECC-MAR model in \eqref{ECC-MAR}. The matrices $\tau$ and $\varphi$ are constructed with the first $m-r_1$ and $n-r_2$ rows set to zero, respectively, while the remaining entries are drawn from independent standard normal distributions. The matrices $\gamma$ and $\theta$ are generated independently of standard normal distributions. This construction ensures $\mathrm{rank}(\tau\gamma')=r_1$ and $\mathrm{rank}(\varphi\theta')=r_2$, thus inducing $r_1$ row-wise and $r_2$ column-wise cointegration relations. To obtain a $I(1)$ process, we retain only parameter draws satisfying the stability conditions on the non-unit eigenvalues of $I_{r_1}+\gamma'\tau$ and $I_{r_2}+\theta'\phi$. Innovations are generated as i.i.d. matrix-valued white noise with $E[E_t]=0$ and $\mathrm{Var}(\mathrm{vec}(E_t))=I_{m\times n}$. From $X_0=0$, the process is generated recursively from \eqref{ECC-MAR}. A burn-in of 100 observations is used throughout.

We vary the sample size $T$, the matrix dimensions $(m,n)$, and the cointegration ranks $(r_1,r_2)$. Following \cite{li2024cointegrated}, we consider $(m,n)\in\{(4,3),(6,5),(8,7)\}$. For $(m,n)=(4,3)$ we set $(r_1,r_2)\in\{(1,1),(2,2),(3,2)\}$, while for the other two dimensions we consider $(r_1,r_2)\in\{(1,1),(2,2),(3,3)\}$.

\subsection{Finite-Sample Performance of the ECC-MAR Estimator}

We compare the finite-sample accuracy of the proposed ECC-MAR estimator with that of a standard CVAR estimated on the vectorized process $\mathrm{vec}(X_t)$. In this subsection, the ranks $r_1$, $r_2$, and $r$ are assumed known and fixed at their true values, so the analysis focuses exclusively on the estimation of the long-run structure.

Since individual coefficient matrices are not uniquely identified, we assess performance through the cointegration space. Let $\beta_T$ denote the true cointegration matrix implied by the DGP, constructed from the true $\gamma$ and $\theta$ according to \eqref{cointegration}. Let $\hat\beta_{\mathrm{ECC}}$ and $\hat\beta_{\mathrm{CVAR}}$ denote the corresponding estimators obtained from the ECC-MAR and CVAR models. For each estimator, we consider the projection matrix $P_\beta=\beta(\beta'\beta)^{-1}\beta'.$ For each Monte Carlo replication, the accuracy of the estimation is measured by the spectral norm distance $\|\hat P_{\beta}-P_T\|$, where $P_T$ is the projector associated with the true cointegration space.

\begin{figure}[!ht]
\centering
\resizebox{\textwidth}{0.42\textheight}{%
\begin{minipage}{\textwidth}

\begin{subfigure}{0.42\textwidth}
\centering
\includegraphics[width=\linewidth]{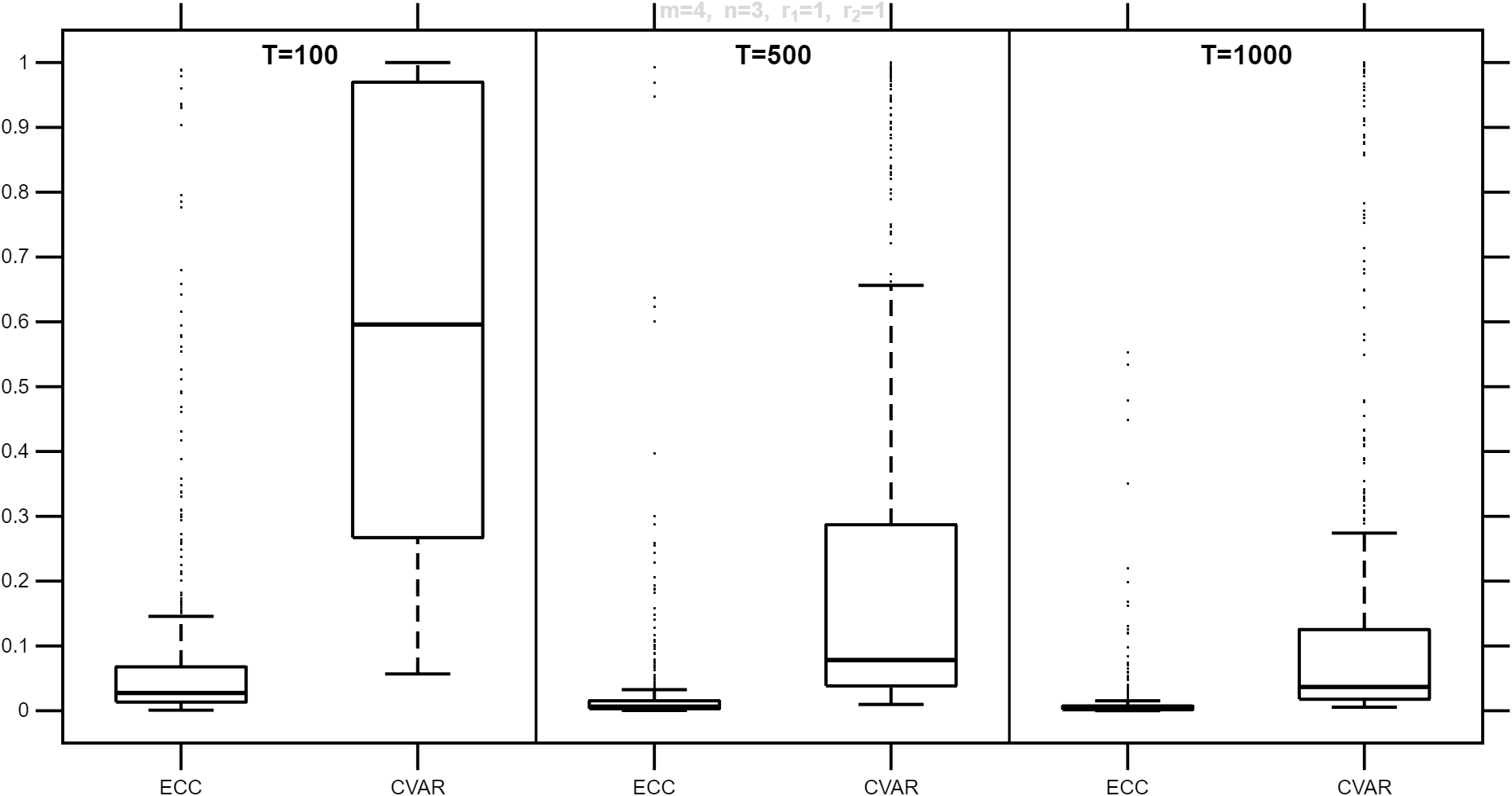}
\caption{$m=4,\ n=3,\ r_1=1,\ r_2=1$}
\end{subfigure}\hfill
\begin{subfigure}{0.42\textwidth}
\centering
\includegraphics[width=\linewidth]{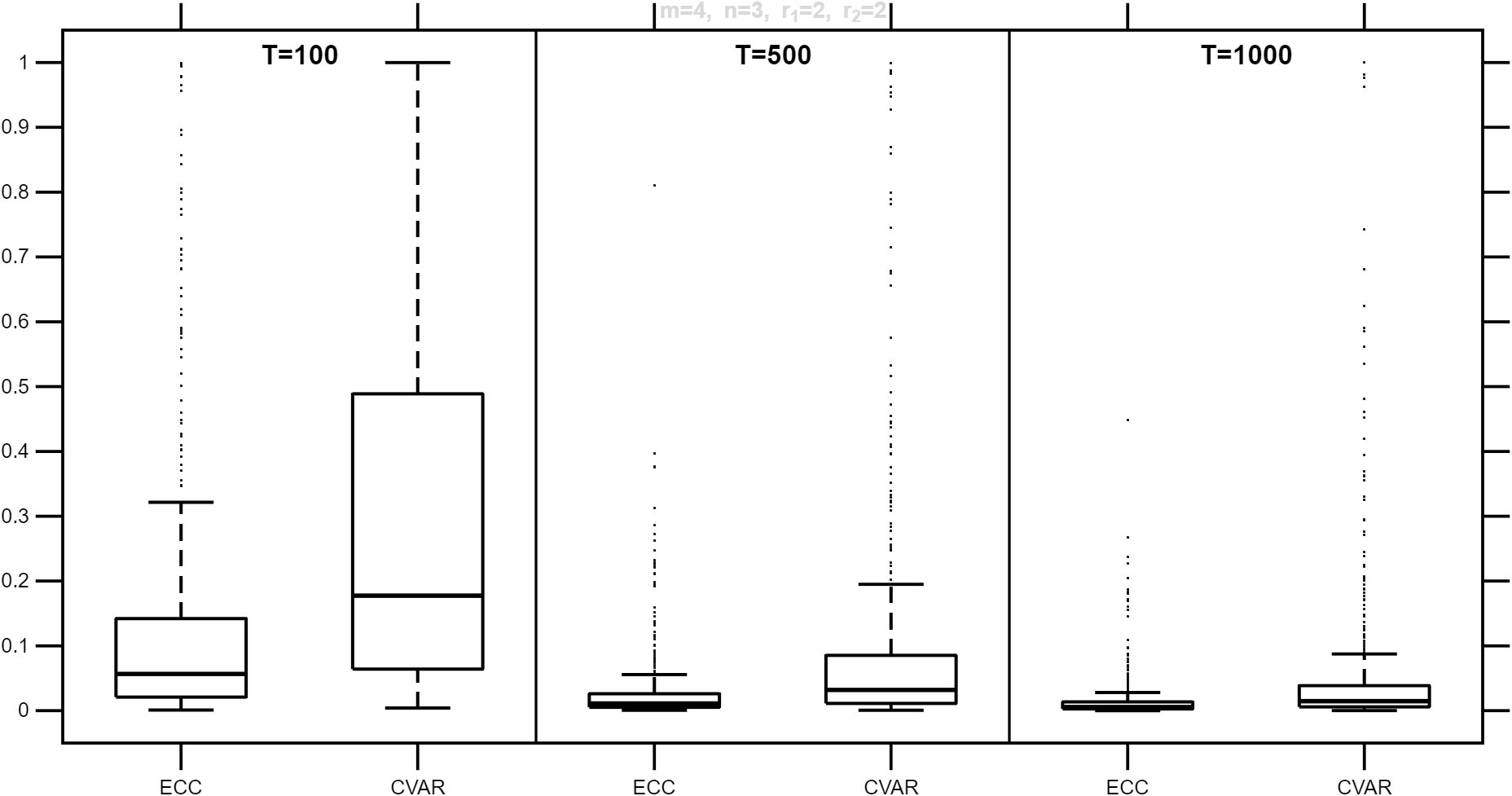}
\caption{$m=4,\ n=3,\ r_1=2,\ r_2=2$}
\end{subfigure}

\vspace{0.15cm}

\begin{subfigure}{0.42\textwidth}
\centering
\includegraphics[width=\linewidth]{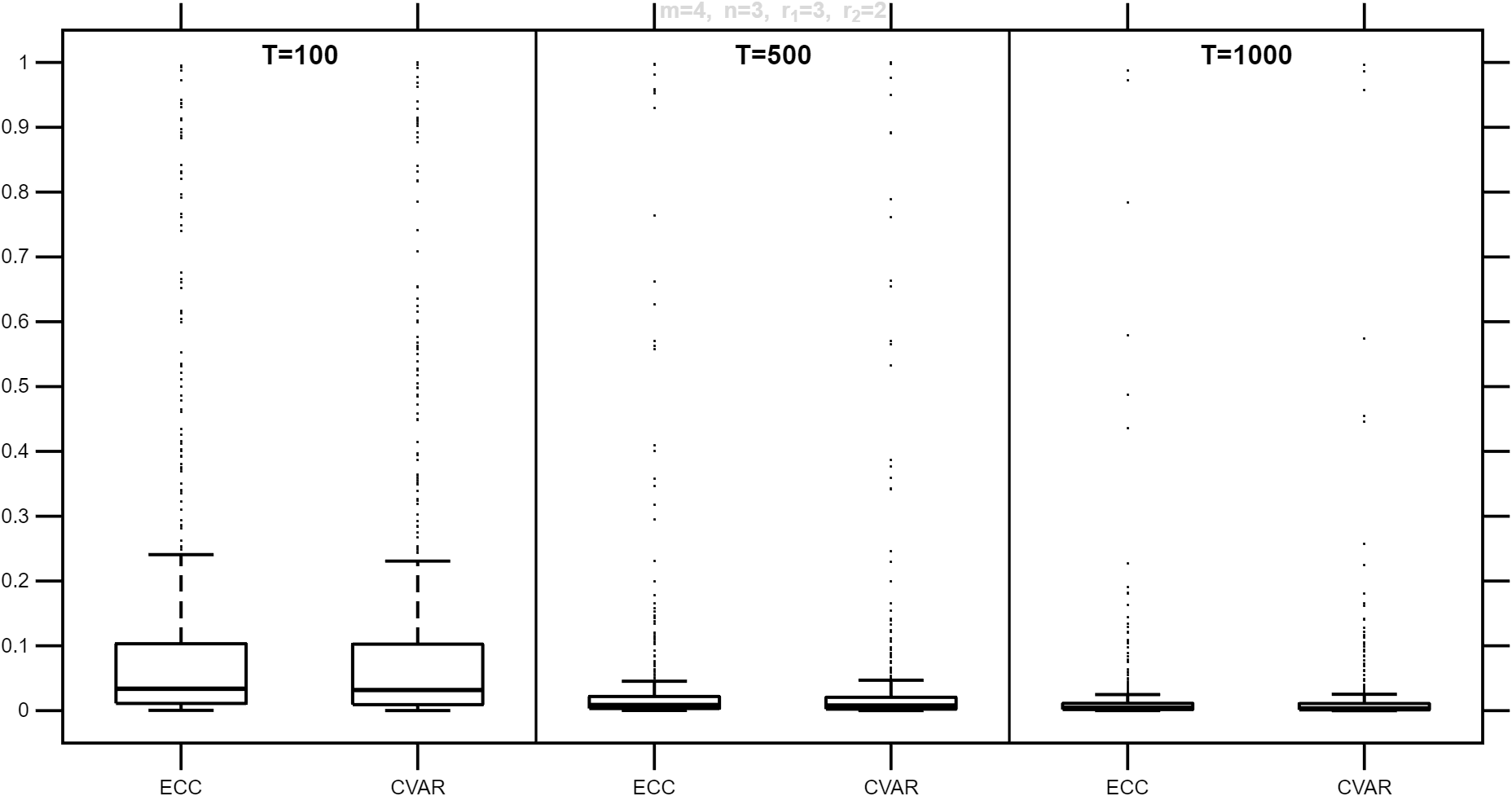}
\caption{$m=4,\ n=3,\ r_1=3,\ r_2=2$}
\end{subfigure}\hfill
\begin{subfigure}{0.42\textwidth}
\centering
\includegraphics[width=\linewidth]{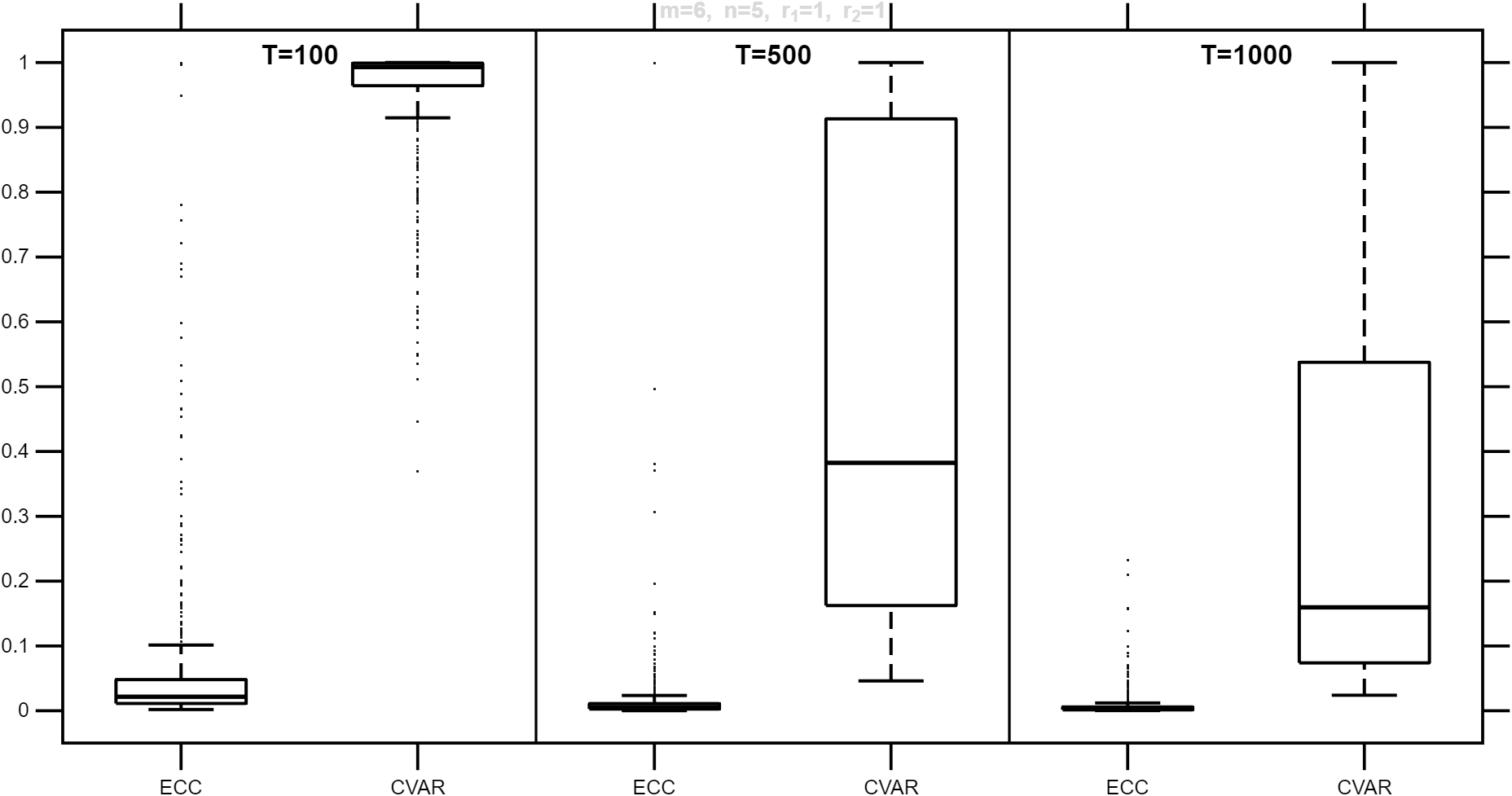}
\caption{$m=6,\ n=5,\ r_1=1,\ r_2=1$}
\end{subfigure}

\vspace{0.15cm}

\begin{subfigure}{0.42\textwidth}
\centering
\includegraphics[width=\linewidth]{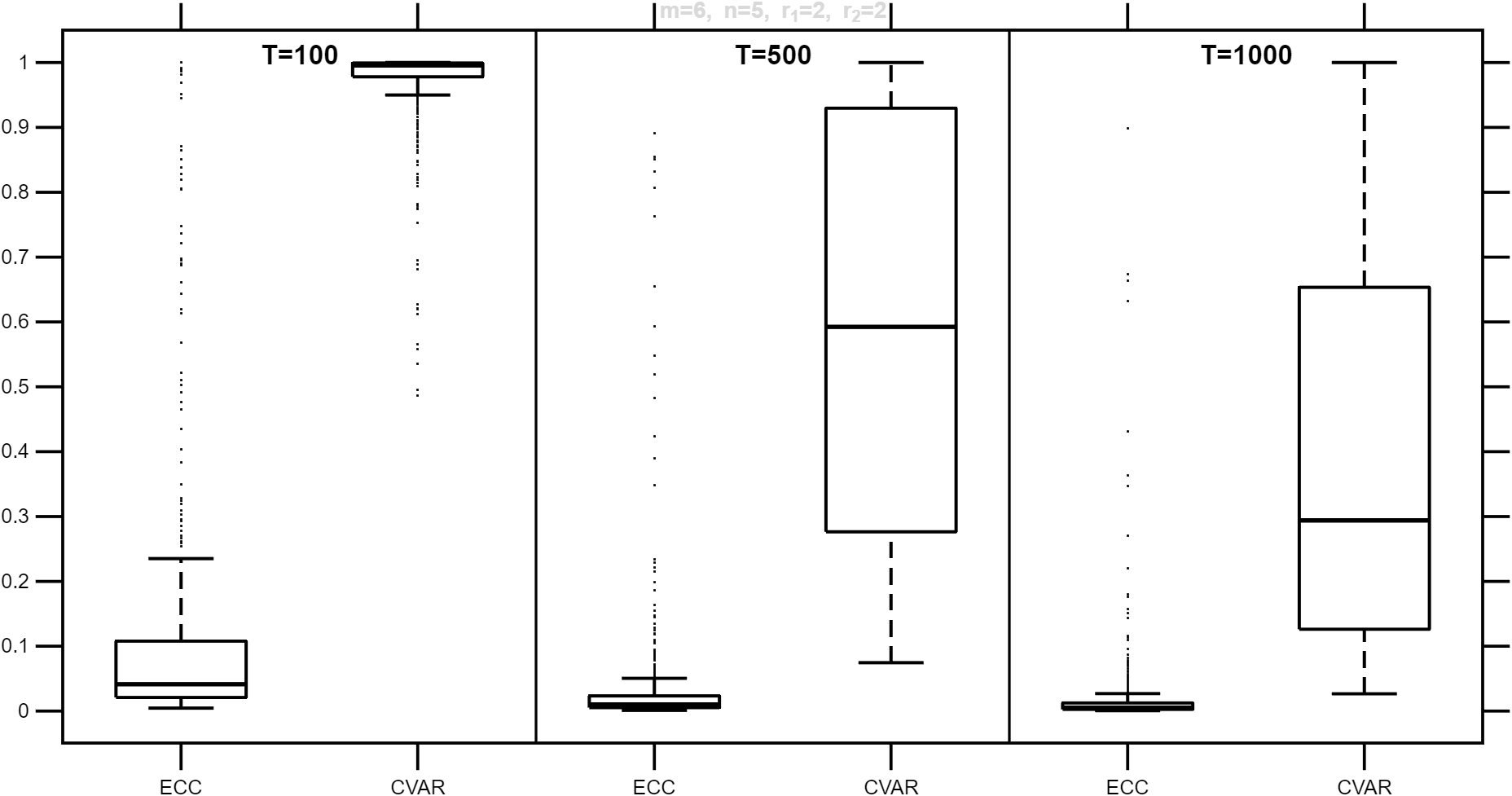}
\caption{$m=6,\ n=5,\ r_1=2,\ r_2=2$}
\end{subfigure}\hfill
\begin{subfigure}{0.42\textwidth}
\centering
\includegraphics[width=\linewidth]{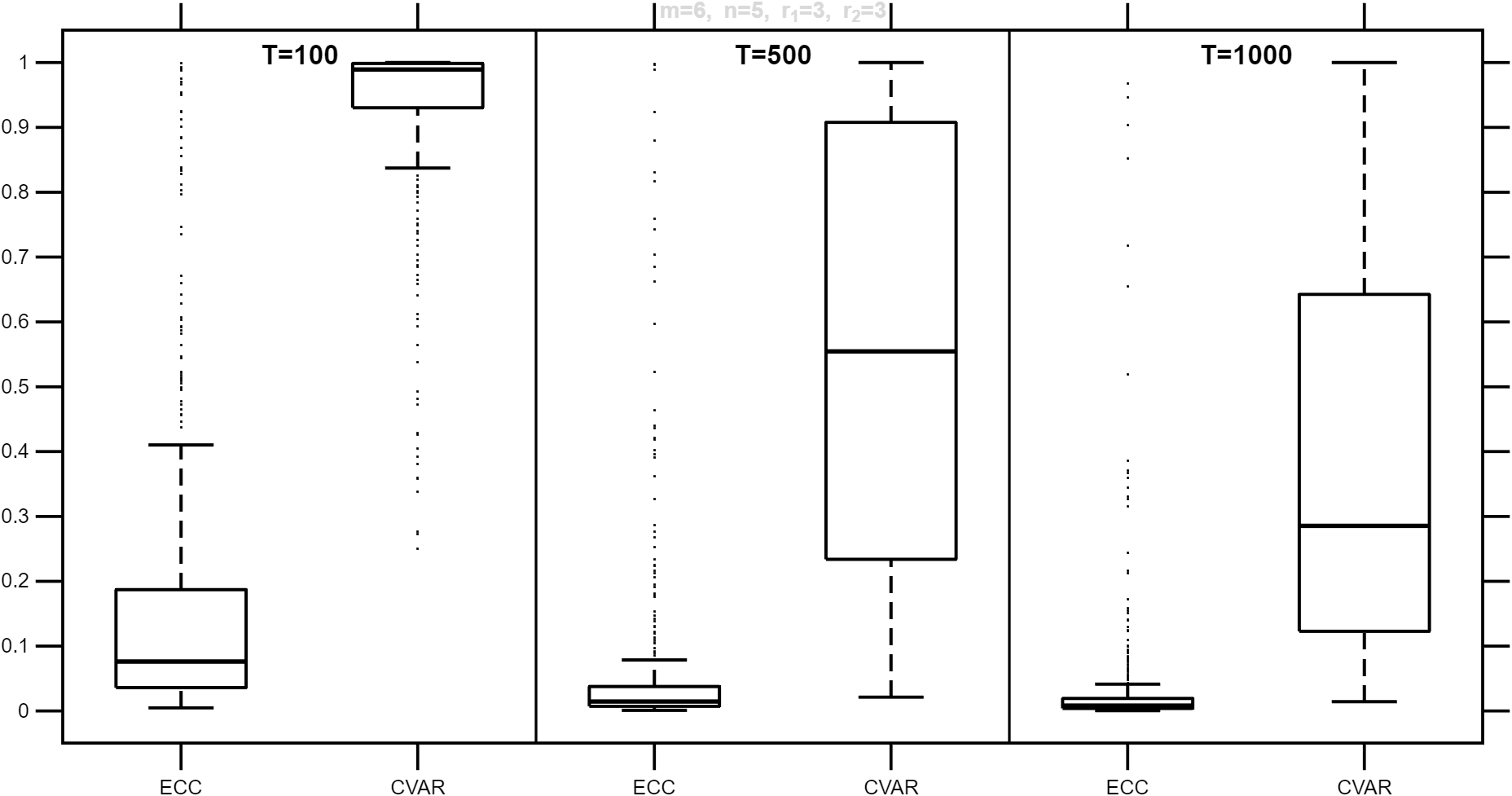}
\caption{$m=6,\ n=5,\ r_1=3,\ r_2=3$}
\end{subfigure}

\vspace{0.15cm}

\begin{subfigure}{0.42\textwidth}
\centering
\includegraphics[width=\linewidth]{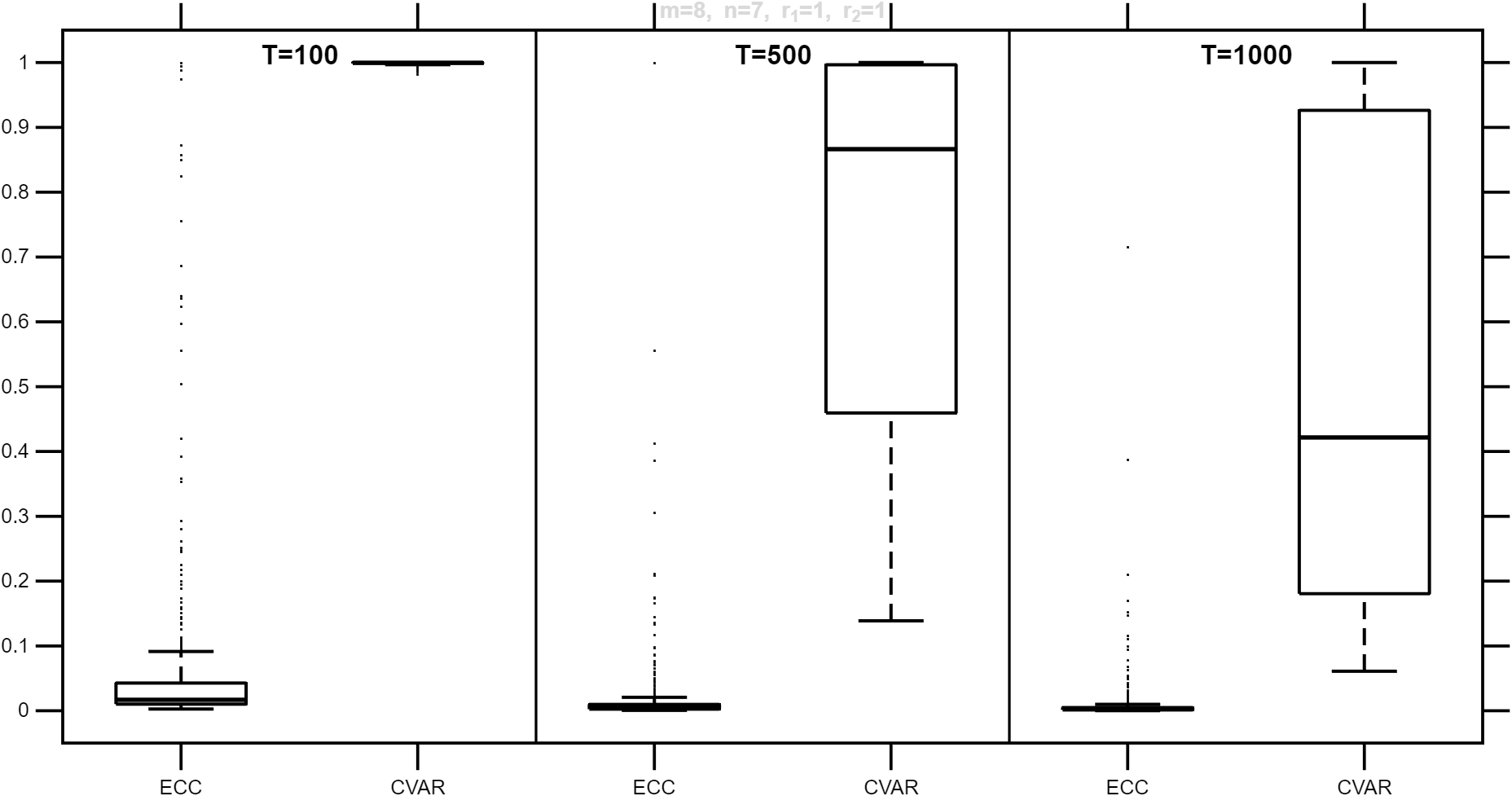}
\caption{$m=8,\ n=7,\ r_1=1,\ r_2=1$}
\end{subfigure}\hfill
\begin{subfigure}{0.42\textwidth}
\centering
\includegraphics[width=\linewidth]{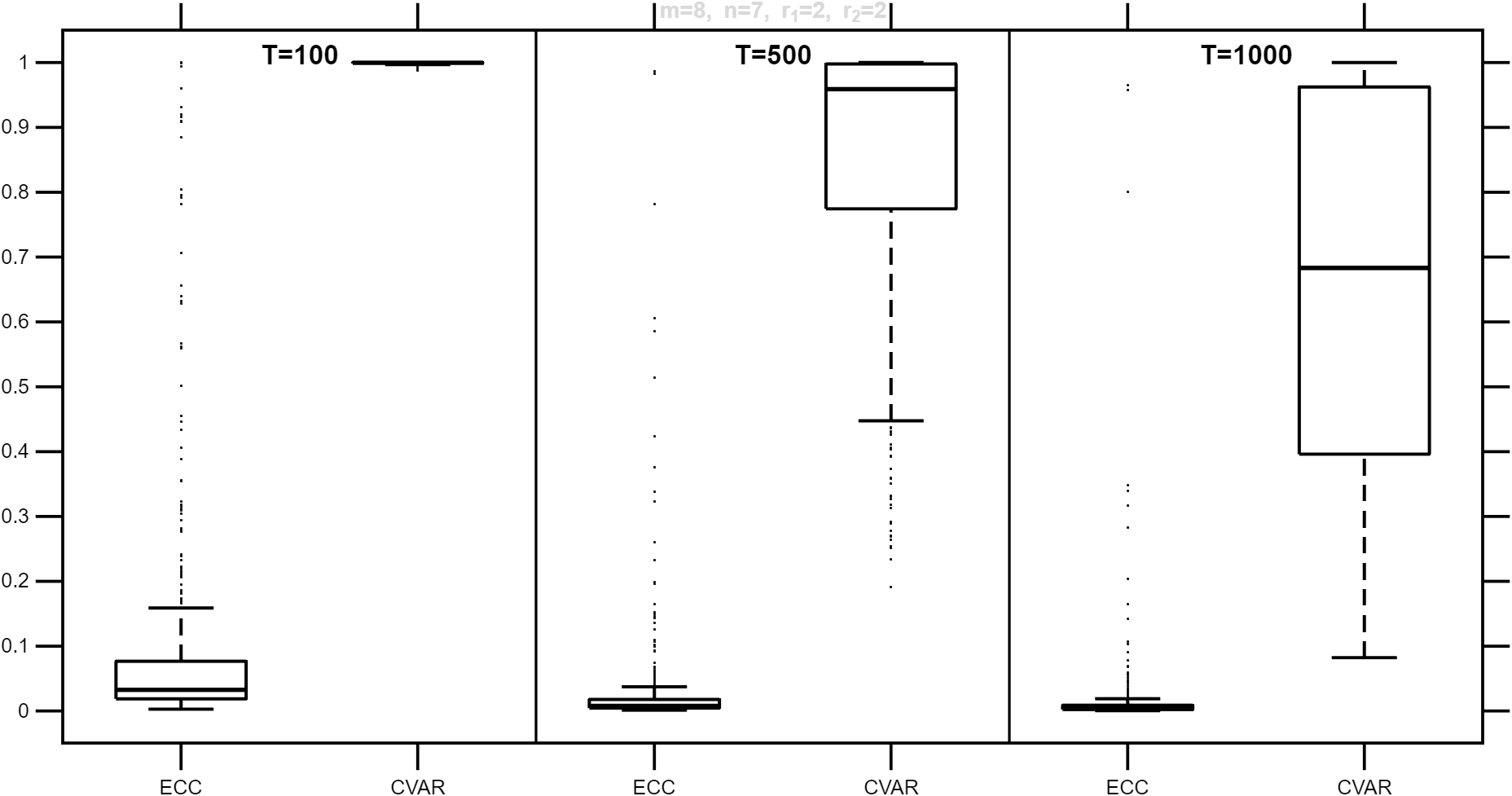}
\caption{$m=8,\ n=7,\ r_1=2,\ r_2=2$}
\end{subfigure}

\vspace{0.2cm}

\makebox[\textwidth][c]{%
\begin{subfigure}{0.5\textwidth}
\centering
\includegraphics[width=\linewidth]{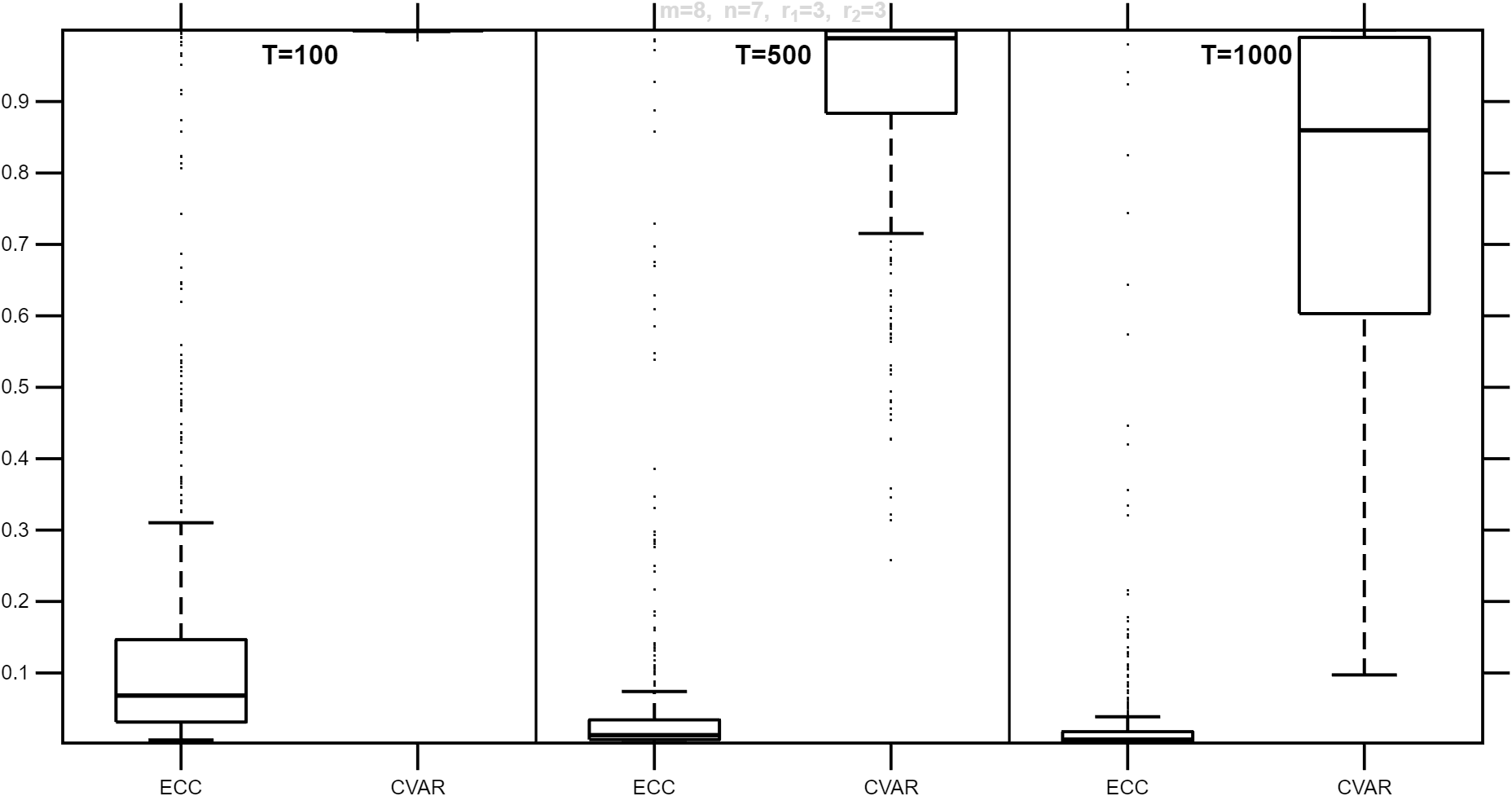}
\caption{$m=8,\ n=7,\ r_1=3,\ r_2=3$}
\end{subfigure}}

\end{minipage}
}

\caption{Boxplots of the distance between the estimated and true cointegration spaces over 500 Monte Carlo replications. For each configuration of matrix dimensions and cointegration ranks, results are reported for three sample sizes: $T=100$ (first area), $T=500$ (second area), and $T=1000$ (third area). Within each area, two boxplots are shown, corresponding to the ECC-MAR estimator (left) and the CVAR estimator (right).}
\label{fig:mc_boxplots_5x2}
\end{figure}

Figure~\ref{fig:mc_boxplots_5x2} summarizes the results. As expected, both estimators improve with the sample size, with a clear reduction in dispersion from $T=100$ to $T=1000$. In general, the ECC-MAR estimator consistently outperforms the unrestricted CVAR in recovering the cointegration space. The difference between the two estimators is larger when the cointegration ranks are low, and it tends to shrink as $r_1$ and $r_2$ increase. Equivalently, the gain from exploiting the matrix structure is more pronounced when the number of common stochastic trends, $mn-r=(m-r_1)(n-r_2)$, is large, whereas it becomes smaller when $r$ approaches $mn$.

This pattern is clearly visible within the same matrix dimension. For example, in the $(m,n)=(4,3)$ designs, the gap between ECC-MAR and CVAR is evident for $(r_1,r_2)=(1,1)$, whereas the performance of the two estimators are comparable for $(r_1,r_2)=(3,2)$. A similar pattern emerges in the other configurations. For instance, for $(m,n)=(8,7)$, the advantage of ECC-MAR is more pronounced when $(r_1,r_2)=(1,1)$ than when $(r_1,r_2)=(3,3)$.

\subsection{Identification of the Cointegration Ranks}

The first step is to determine the total cointegration rank $r$ of the vectorized process $\mathrm{vec}(X_t)$. This can be done using standard CVAR procedures, such as the Johansen trace test. Since our focus is on the identification of the structural ranks $(r_1,r_2)$ in the ECC-MAR model, we do not further investigate the determination of $r$. Moreover, misspecification of $r$ affects both the ECC-MAR and the unrestricted CVAR benchmark in a similar way. Consequently, in this subsection we treat $r$ as known.

Given $r$, the problem reduces to identifying the admissible pair $(r_1,r_2)$ consistent with \eqref{eq:rank_relation}. In four of the nine designs considered in the simulation study, this mapping is one-to-one, so the structural ranks are uniquely determined and no additional step is required. This happens for $(m,n,r)=(4,3,6)$, $(4,3,11)$, $(6,5,10)$, and $(8,7,14)$, corresponding respectively to $(r_1,r_2)=(1,1),(3,2),(1,1)$ and $(1,1)$. However, in the remaining five cases, the same total rank is compatible with two different pairs $(r_1,r_2)$.

Following Section~4.1, for each admissible pair $(r_1,r_2)$ we estimate the ECC-MAR model under the corresponding rank restrictions and obtain $\hat\gamma$ and $\hat\theta$. We then test the stationarity of the implied cointegrating components by applying Augmented Dickey--Fuller tests to the estimated series $\{\hat\gamma'X_t\}$ and $\{X_t\hat\theta\}$. The admissible pair is selected when these transformed processes are found to be stationary for one candidate but not for the other.

\begin{table}[!ht]
\centering

\renewcommand{\arraystretch}{1.2}
\begin{tabular}{lcccc}
\hline
$(m,n,r)$ & Outcome & $T=100$ & $T=500$ & $T=1000$ \\
\hline
$(4,3,10)$
& $(r_1,r_2)=(2,2)$ & 49.2\% & 69.2\% & 73.6\%  \\
& $(r_1,r_2)=(3,1)$ & 2.8\% & 2\% & 0.6\% \\
& Und. 1: Both & 14.2\% & 20\% & 19.8\% \\
& Und. 2: None & 33.8\% & 8.8\% & 6\%\\
\hline
$(6,5,18)$
& $(r_1,r_2)=(2,2)$ & 58.2\% & 83.6\%  & 88.2\%  \\
& $(r_1,r_2)=(3,1)$ & 0.4\% & 0.6\% & 0.4\% \\
& Und. 1: Both & 1.4\% & 2.4\% & 3\% \\
& Und. 2: None & 40\% & 13.4\% & 8.4\%  \\
\hline
$(6,5,24)$
& $(r_1,r_2)=(3,3)$  & 55.4\% & 79.6\% & 83.6\% \\
& $(r_1,r_2)=(4,2)$ & 2.2\% & 1.4\% & 1\% \\
& Und. 1: Both & 5\% & 8.2\% & 9\% \\
& Und. 2: None & 37.4\% & 10.8\% & 6.4\% \\
\hline
$(8,7,26)$
& $(r_1,r_2)=(2,2)$  & 57.8\% & 87.4\% & 91.8\% \\
& $(r_1,r_2)=(3,1)$ & 0\% & 0.2\% & 0.6\% \\
& Und. 1: Both & 0\% & 0.6\% & 0.6\% \\
& Und. 2: None & 42.2\% & 11.8\% & 7\%  \\
\hline
$(8,7,36)$
& $(r_1,r_2)=(3,3)$  & 56.4\% & 80.6\% & 87.6\% \\
& $(r_1,r_2)=(4,2)$ & 1.4\% & 1.4\% & 1.4\% \\
& Und. 1: Both & 0.6\% & 3.8\% & 3.2\% \\
& Und. 2: None & 41.6\% & 14.2\% & 7.8\% \\
\hline
\end{tabular}
\caption{Empirical frequencies (in percent) of rank-identification outcomes based on ADF stationarity checks over 500 Monte Carlo replications. For each configuration $(m,n,r)$ and sample size $T$, the table reports the percentage of times the procedure selects one of the admissible rank pairs $(r_1,r_2)$. The label \emph{Und.\,1} denotes an undefined outcome in which both admissible rank pairs satisfy the stationarity conditions, while \emph{Und.\,2} denotes an undefined outcome in which neither admissible rank pair satisfies the stationarity conditions.}
\label{tab:rank_identification_percentages}
\end{table}

Table~\ref{tab:rank_identification_percentages} reports the results for the five ambiguous configurations. In all cases, the correct rank pair is the outcome most frequently selected, and its frequency increases markedly with the sample size. By contrast, the incorrect admissible pair is only chosen rarely, often with frequencies close to zero even at $T=100$.

The main difficulty is the occurrence of undefined outcomes. Two cases may arise. In \emph{Und.\,1}, both admissible pairs pass the stationarity checks; in \emph{Und.\,2}, neither pair does. The second type is dominant in small samples, but decreases sharply as $T$ increases, reflecting the low power of ADF tests in short samples. For example, in configuration $(8,7,36)$ its frequency drops from 41.6\% with $T=100$ to 7.8\% with $T=1000$.

The behavior of \emph{Und.\,1} is more subtle. Its frequency tends to increase with the sample size, most visible in the case $(4,3,10)$, where it increases from 14.2\% to 19.8\%. This can occur when the model is estimated under a rank larger than the true one. In that case, the additional estimated cointegrating vector cannot belong exactly to the true cointegration space, but as the sample grows, it may become nearly collinear with it. Equivalently, the estimated cointegrating space may become increasingly close to the true one while preserving the full column rank. As a result, the extra linear combination can appear progressively more stationary, increasing the probability that the ADF test rejects the unit-root null.

We examine this mechanism in the configuration $(m,n,r)=(4,3,10)$, where the increase in \emph{Und.\,1} is strongest. Let $\gamma$ be the true cointegration matrix, $\gamma_\perp$ its orthogonal complement, and $\gamma_T$ the misspecified estimate obtained under incorrect rank $r_1=3$. Define $Q_{\gamma_\perp}=\mathrm{orth}(\gamma_\perp),$ $
Q_{\gamma,T}=\mathrm{orth}(\gamma_T),$ and $M_T = Q_{\gamma,T}'Q_{\gamma_\perp},$ $
d_T=\|M_T\|_F.$ Since both matrices have orthonormal columns, $d_T$ measures the extent to which the estimated cointegrating space is not orthogonal to the true common-trend space. Smaller values of $d_T$ therefore indicate that the misspecified estimated space is closer to the true cointegration space. In the Monte Carlo experiment, the event $d_{1000}<d_{100}$ occurs in 89.8\% of replications, supporting the interpretation above.

In general, four stylized facts emerge. First, the correct rank pair is the outcome that is the most frequently selected even for small sample sizes, and its frequency increases with the sample size. Second, the incorrect admissible pair is only rarely selected. Third, \emph{Und.\,1} tends to become more frequent as $T$ grows. Fourth, \emph{Und.\,2} is mainly a small-sample phenomenon and decreases substantially with larger samples. Hence, the proposed procedure performs satisfactorily in general, although undefined cases remain its main limitation. In such cases, one may rely on an auxiliary criterion, such as an information criterion or prior structural knowledge, to select among the remaining admissible pairs.

\subsection{Finite-Sample Performance of Hypothesis Testing}

Next, we study the finite-sample behavior of the proposed hypothesis tests for the ECC-MAR model. Unlike the previous Monte Carlo experiments, where the DGP parameters were randomly generated, here we consider a fixed design in order to assess size and power more directly. 

We consider an ECC-MAR model with $(m,n)=(4,3)$ and $(r_1,r_2)=(2,2)$, with parameters
\small
\[
\tau =
\begin{bmatrix}
0 & 0 \\
0 & 0 \\
-0.5 & 0 \\
0 & -0.5
\end{bmatrix},
\qquad
\gamma =
\begin{bmatrix}
-1 & 0 \\
1 & -0.5 \\
0.5 & -0.5 \\
0 & 1
\end{bmatrix},
\qquad
\phi =
\begin{bmatrix}
0 & 0 \\
-0.2 & 0 \\
0 & -0.2
\end{bmatrix},
\qquad
\theta =
\begin{bmatrix}
-1 & -0.5 \\
1 & 0.5 \\
0 & 1
\end{bmatrix}.
\]
\normalsize

For the size analysis, we consider three true null hypotheses: (i) the first row of $\tau$ is zero, corresponding to a weak exogeneity restriction; (ii) $[\,1\;\;1\;\;0\,]\theta = 0$; and (iii) $[\,0\;\;-0.5\;\;-0.5\;\;1\,]' \in \mathrm{span}(\gamma)$.
\begin{table}[!ht]
\centering
\renewcommand{\arraystretch}{1.2}
\begin{tabular}{lccc}
\hline
Null hypothesis & $T=100$ & $T=500$ & $T=1000$ \\
\hline
$\tau_{(1)}=0$ (weak exogeneity) & 5.6\% & 6\% & 5.8\% \\
$[\,1\;\;1\;\;0\,]\theta = 0$ & 8.8\% & 7.2\% & 4.8\% \\
$[\,0\;\;-0.5\;\;-0.5\;\;1\,]' \in \mathrm{span}(\gamma)$ & 7\% & 5\% & 4.8\% \\
\hline
\end{tabular}
\caption{Empirical rejection frequencies of the proposed hypothesis tests over 100 Monte Carlo replications. All tests are conducted at the 5\% nominal significance level.}
\label{tab:hypothesis_tests}
\end{table}
Table~\ref{tab:hypothesis_tests} shows that the tests have a satisfactory finite-sample size. For restrictions on cointegration matrices, rejection frequencies move toward the nominal level as $T$ increases. The weak exogeneity test is already well sized even in the smallest sample with an empirical rejection frequency that fluctuates around the nominal level and do not exhibit a systematic trend with \(T\).

We then examine power under false null hypotheses. Specifically, we test $\tau_{(3)}=0$, the linear restrictions $[\,1\;\;1.1\;\;0\,]\theta=0$, $[\,1\;\;1.3\;\;0\,]\theta=0$, and $[\,1\;\;1.5\;\;0\,]\theta=0$, and the membership restrictions $[\,0\;\;-0.5\;\;-0.6\;\;1\,]'$, $[\,0\;\;-0.5\;\;-0.8\;\;1\,]'$, and $[\,0\;\;-0.5\;\;-1\;\;1\,]' \in \mathrm{span}(\gamma)$.
\begin{table}[!ht]
\centering
\renewcommand{\arraystretch}{1.2}
\begin{tabular}{lccc}
\hline
Alternative hypothesis & $T=100$ & $T=500$ & $T=1000$ \\
\hline
$\tau_{(3)}=0$  & 100\% & 100\% &  100\%\\
\hline
$[\,1\;\;1.1\;\;0\,]\theta = 0$  & 69.8\% & 100\% & 100\% \\
$[\,1\;\;1.3\;\;0\,]\theta = 0$ & 97.8\%  & 100\% & 100\% \\
$[\,1\;\;1.5\;\;0\,]\theta = 0$  & 99.6\% & 100\% & 100\% \\
\hline
$[\,0\;\;-0.5\;\;-0.6\;\;1\,]' \in \mathrm{span}(\gamma)$  & 97.6\% & 100\% & 100\% \\
$[\,0\;\;-0.5\;\;-0.8\;\;1\,]' \in \mathrm{span}(\gamma)$  &  100\%& 100\% & 100\% \\
$[\,0\;\;-0.5\;\;-1\;\;1\,]' \in \mathrm{span}(\gamma)$ & 100\% & 100\% & 100\% \\
\hline
\end{tabular}
\caption{Empirical rejection frequencies (in percent) of the proposed hypothesis tests under false null hypotheses (power analysis) over 100 Monte Carlo replications. All tests are conducted at the 5\% nominal significance level.}
\label{tab:hypothesis_tests_power}
\end{table}
The power results in Table~\ref{tab:hypothesis_tests_power} are very strong. For $T\geq 500$, empirical power is one for all alternatives. At $T=100$, power is still 1 for the weak exogeneity test and above 97\% for the membership tests on $\mathrm{span}(\gamma)$. The only case in which power is more moderate is the restriction on $\theta$ under the closest alternative, $[\,1\;\;1.1\;\;0\,]\theta=0$, where the rejection frequency is 69.8\%. As expected, power increases monotonically as the alternative moves farther from the null, reaching 99.6\% for hypothesis $[\,1\;\;1.5\;\;0\,]\theta = 0$.

Overall, the simulations indicate that the proposed tests combine satisfactory size with very strong power, especially in moderate and large samples.

\section{Empirical Illustration}

To illustrate the empirical usefulness of the proposed C-MAR model, we consider a macroeconomic application with $m=2$ variables observed for $n=3$ European countries. Specifically, we study the joint dynamics of consumer inflation expectations ($\pi_e$) and industrial production ($\mathrm{In}$) for Belgium (B), Germany (D), and Austria (A).

From an economic perspective, stronger industrial production is expected to signal robust demand conditions, inducing firms to raise prices and consumers to revise inflation expectations upward. We therefore expect a long-run relation of the form $\pi_e=\varphi\,\mathrm{In}$ with $\varphi>0$. Across countries, the common monetary policy and the harmonized institutional environment suggest similar long-run dynamics in both inflation expectations and industrial production.

The analysis is based on monthly data from January 2000 to December 2019, thus excluding the Covid-19 period and the Russia--Ukraine war. Inflation expectations are proxied by consumer survey data from FRED, while industrial production indices are taken from Eurostat; all series are seasonally adjusted. Figures~\ref{F1} and~\ref{F2} plot the data.

\begin{figure}
    \centering
    \includegraphics[width=1\linewidth]{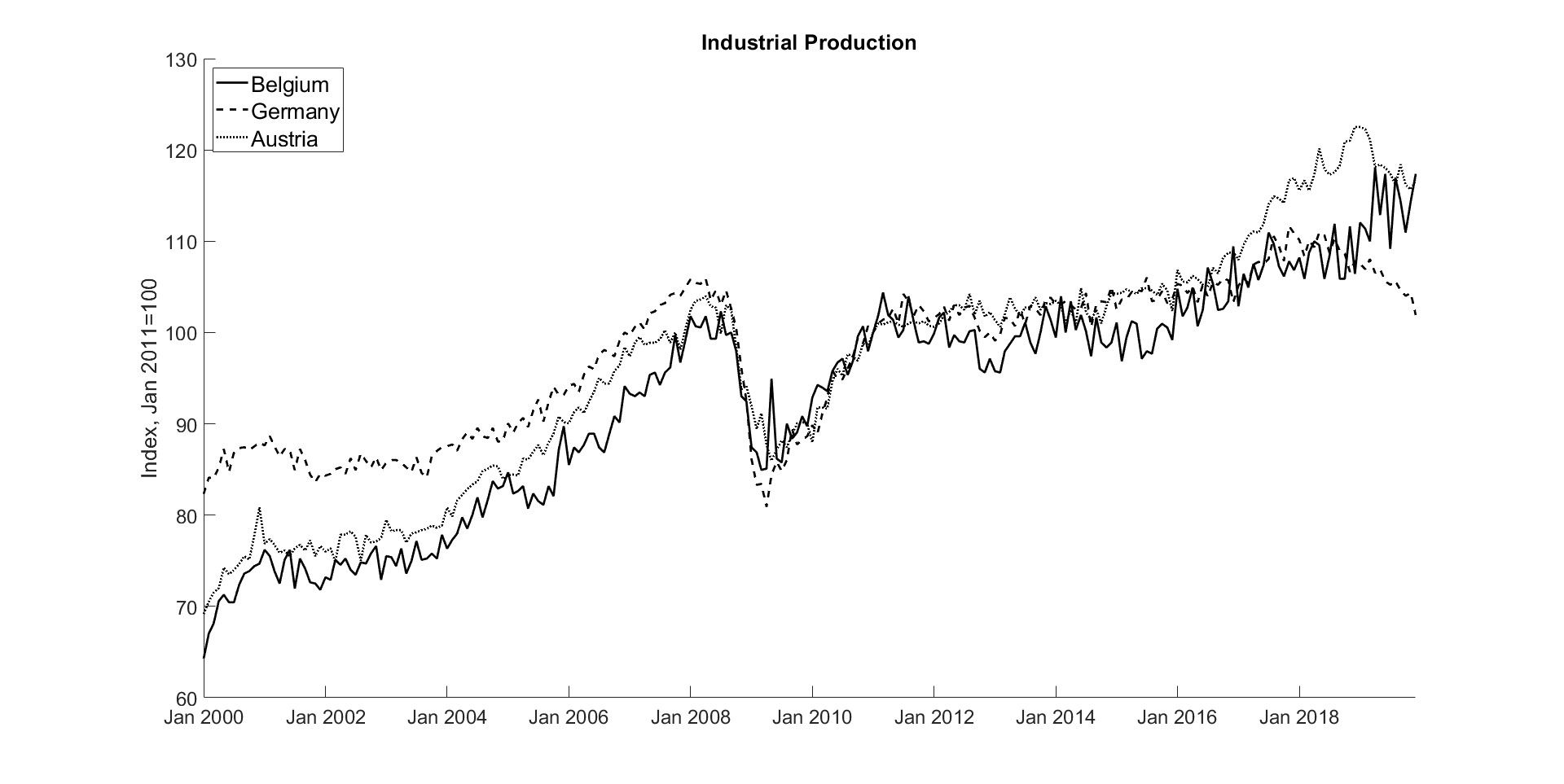}
    \caption{Index: Production in industry for Belgium, Germany, and Austria. Monthly data from Jan-2000 to Dec-2019. Seasonally Adjusted.
}
    \label{F1}
\end{figure}

\begin{figure}
    \centering
    \includegraphics[width=1\linewidth]{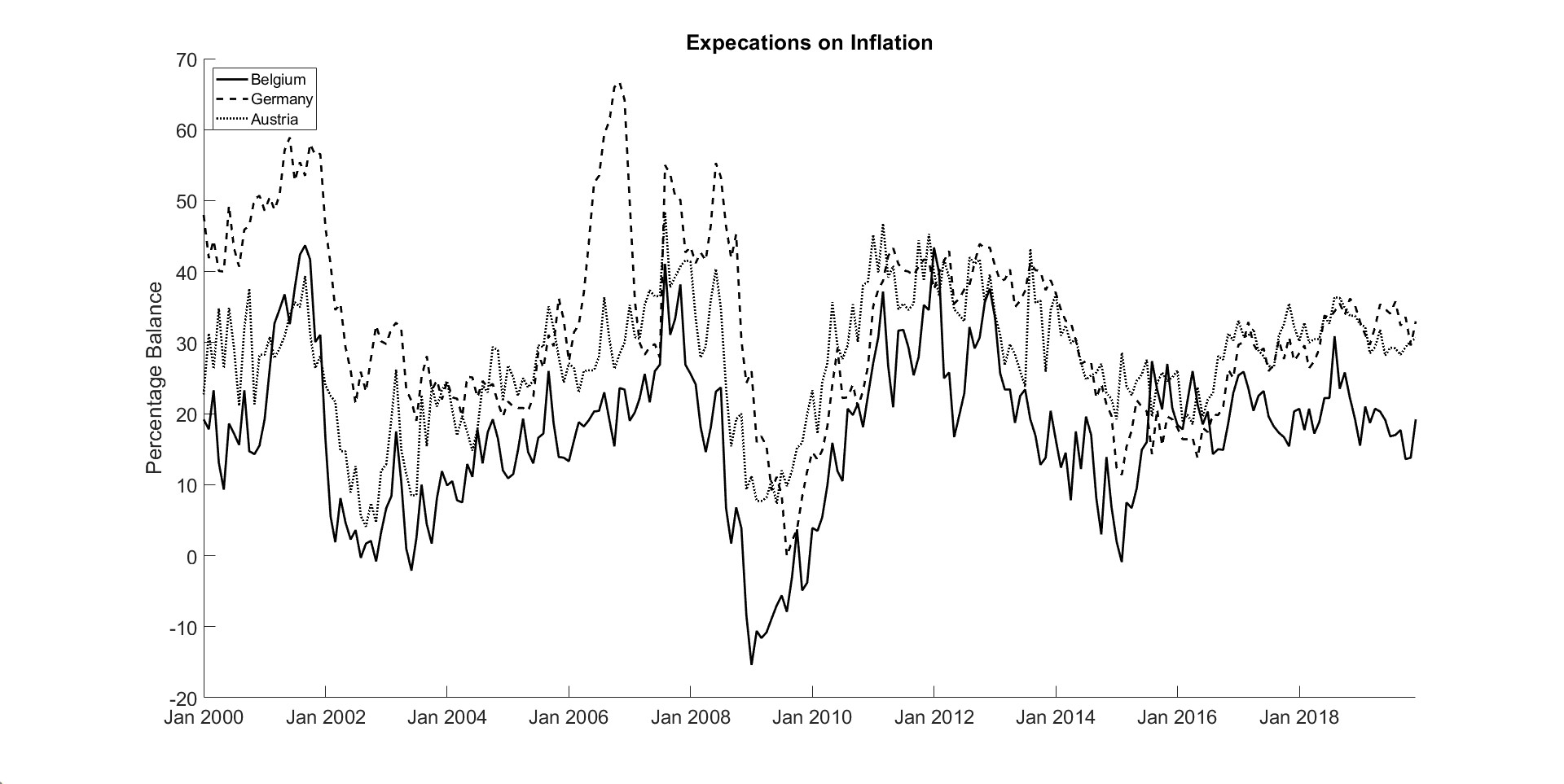}
    \caption{Consumer Opinion Surveys: Consumer Prices: Future Tendency for Belgium, Germany, and Austria. Monthly data from Jan-2000 to Dec-2019. Seasonally Adjusted.
}
    \label{F2}
\end{figure}

A preliminary analysis indicates that all six variables are $I(1)$, that the appropriate lag order is $p=1$, and that the total cointegration rank of the vectorized system is $r=4$. For brevity, the Dickey--Fuller tests, the BIC-based lag-order selection, and the Johansen trace test results are reported in Appendix~D.

By Theorem~\ref{T01}, the vectorized cointegration rank satisfies $r = nr_1 + mr_2 - r_1r_2.$ Since here $m=2$ and $r=4$, it follows that $r_1=1$ and $r_2=1$. Estimating the model by the procedure of Section~4 yields the unrestricted estimates
\begin{equation*}
    \hat{\gamma}=\begin{bmatrix}
        -0.294 \\
    1
    \end{bmatrix} 
     \qquad \hat{\tau} = \begin{bmatrix}
        0.0006 \\
   -0.0907
    \end{bmatrix}
    \qquad
    \hat{\theta}= \begin{bmatrix}
       -1 \\
    0.0535 \\
   0.912
    \end{bmatrix} \qquad \hat{\varphi}=\begin{bmatrix}
        0.3526 \\
    -0.0242 \\
    -0.0737
    \end{bmatrix}. 
\end{equation*}

These estimates imply a long-run relation of the form $\pi_e = 0.294\,\mathrm{In},$ which is consistent with the prior expectation of a positive long-run association between inflation expectations and industrial production.

We next test weak exogeneity by imposing row restrictions on the adjustment matrices $\tau$ and $\varphi$. The results are reported in Table~\ref{T9}.

Consistent with our hypotheses, the results reveal a long-term relationship between price expectations and industrial production of the form $\pi_e = 0.294 \text{In}$. The positive coefficient confirms the anticipated association.  

Next, we test for restrictions on the adjustment matrices $\varphi$ and $\tau$ to determine which variable adjusts in response to a disequilibrium. In other words, we test whether the $i^{\text{th}}$ row of $\varphi$ and $\tau$ is null. The results of the weak exogeneity test are recorded in table \ref{T9}.

\begin{table}[h!]
\centering
\begin{tabular}{@{}ccc@{}}
\toprule
\textbf{Row} & \textbf{Statistic} & \textbf{p-value} \\ \midrule
$\pi_{e}$ & 20.1937 & 6.9984e-06 \\ 
$\text{In}$ & 0.0086 & 0.9260\\ 
$B$ & 35.5164 & 2.5291e-09 \\ 
$D$ & 0.6053 & 0.4366 \\ 
$A$ & 4.8998 & 0.0269
 \\ \bottomrule
\end{tabular}
\caption{Weak Exogeneity Test Results. The table presents the test statistics and corresponding p-values for testing the null hypothesis of weak exogeneity for various parameters (rows) in the model.}
\label{T9}
\end{table}

The results indicate that industrial production is weakly exogenous with respect to $\gamma$, while inflation expectations adjust to restore the equilibrium relation. This is consistent with the interpretation of industrial production as the driving stochastic trend and inflation expectations as the adjusting variable.

Turning to the cross-country dimension, Germany appears to be weakly exogenous, suggesting that it acts as the driving trend within the country equilibrium relation, while Belgium and Austria adjust to restore the long-run balance. This is economically plausible given Germany’s dominant role in the euro-area economy.

We next assess the adequacy of the estimated C-MAR specification by examining whether the estimated long-run relations are indeed stationary. If $\hat\gamma$ and $\hat\theta$ correctly identify the row-wise and column-wise cointegration spaces, then the transformed processes $\hat\gamma'X_t$ and $X_t\hat\theta$ should be $I(0)$, even though the entries of $X_t$ are individually $I(1)$. We test stationarity of the five resulting linear combinations with a Dickey--Fuller test. The results, reported in Table~\ref{T7}, reject the unit-root null in all five cases, supporting the interpretation of $\hat\gamma'X_t=0$ and $X_t\hat\theta=0$ as equilibrium relations.
\begin{table}[h!]
\centering
\begin{tabular}{@{}ccc@{}}
\toprule
\textbf{Row} & \textbf{Statistic} & \textbf{p-value} \\ \midrule
$\pi_{e,B} -0.294 \text{In}_\text{B}$ & -2.6723 & 0.008 \\ 
$\pi_{e,D} -0.294 \text{In}_\text{D}$ & \text{-2.8406
} & \text{0.0048
} \\ 
$\pi_{e,A} -0.294 \text{In}_\text{A}$ & \text{-4.8443} & \text{$<$1.0000e-03
} \\ 
$\text{In}_\text{B} -0.0535 \text{In}_\text{D} -0.912 \text{In}_\text{A}$ & \text{-7.8869} & \text{$<$1.0000e-03} \\ 
$\pi_{e,B} -0.0535 \pi_{e,D} -0.912 \pi_{e,A}$ & \text{-3.9111} & \text{$<$1.0000e-03} \\ \bottomrule
\end{tabular}
\caption{Dickey-Fuller Test Results for the estimated cointegration relationships: $\hat{\gamma}'X_t$ and $X_t \hat{\theta}$. The table reports the test statistic and corresponding p-value computed using the test specification without a constant or trend term.}
\label{T7}
\end{table}

Overall, the C-MAR specification provides a coherent description of the data by jointly modeling row and column cointegration. The empirical results support a positive long-run relationship between inflation expectations and industrial production, with industrial production acting as the common driving trend. At the country level, the estimates point to a long-run equilibrium across countries, with Germany behaving as the dominant stochastic trend and Belgium and Austria adjusting toward equilibrium.

\subsection{Comparison with Standard VECM}

For comparison, we estimate a standard VECM for the six vectorized series. The estimated model is
\small
\begin{equation*}
    \Delta \begin{bmatrix}
        \text{In}_B \\
        \text{In}_D \\
        \text{In}_A \\
        \pi_{e,B} \\
        \pi_{e,D} \\
        \pi_{e,A}
    \end{bmatrix}_t=
    \begin{bmatrix}
    3.1   & -4.1  & -8.9  & -2.2 \\
   -0.1  &  3 &   1.3 &    2.3 \\
   -0.8  &  0.2  & -2.1  & 0.5 \\
    0.2 & -12.6 &  -9.1 &  -2.4 \\
   -0.5  &  0.7 & -11.7  &  0.1 \\
   -0.5  &  1 &   -1.8 &   -8.5  \\
    \end{bmatrix}
    \begin{bmatrix}
    -0.95  & -0.2  &  1 &  0 & 0& 0 \\
    2.08  & -3.59  & 0& 1& 0&0 \\
    3.28 & -4.88&0& 0& 1& 0 \\
    1.37 &-2.59 & 0& 0& 0& 1
    \end{bmatrix}\begin{bmatrix}
        \text{In}_B \\
        \text{In}_D \\
        \text{In}_A \\
        \pi_{e,B} \\
        \pi_{e,D} \\
        \pi_{e,A}
    \end{bmatrix}_{t-1}+\varepsilon_t
\end{equation*}
\normalsize

Unlike the C-MAR model, the unrestricted VECM does not separate the row and column cointegration mechanisms, making economic interpretation substantially more difficult. In particular, the signs and magnitudes linking industrial production and inflation expectations vary markedly across cointegrating vectors and across normalizations, so that the implied long-run relationships are not stable across countries. For example, one normalization may suggest a positive association between industrial production and inflation expectations in one country, while another normalization implies a negative relation for another country. This lack of robustness makes the economic interpretation of the unrestricted VECM considerably less transparent than that of the matrix model.

This instability is not surprising. As shown in Section~3, the unrestricted vector representation mixes the two underlying structures, whereas the C-MAR model isolates them explicitly through separate row-wise and column-wise cointegration components. As a result, the VECM coefficients do not admit a decomposition with the same structural interpretation as in the matrix specification.

The same issue arises when considering the adjustment mechanism. While the C-MAR model identifies industrial production as the driving force behind inflation expectations and Germany as the driving force in the cross-country relation, the unrestricted VECM does not yield a similarly clear pattern. In fact, the null of a zero row in the adjustment matrix is rejected for all six variables, as shown in Table~\ref{T11}.

\begin{table}[h!]
\centering
\begin{tabular}{@{}ccc@{}}
\toprule
\textbf{Row} & \textbf{Statistic} & \textbf{p-value} \\ \midrule
In$_B$ & 50.1537 & 3.3536e-10 \\ 
In$_D$ & 11.8111 & 0.0188\\ 
In$_A$ & 17.9833 & 0.0012 \\ 
$\pi_{e,B}$ &18.1288 & 0.0012 \\ 
$\pi_{e,D}$ & 17.9159& 0.0013 \\ 
$\pi_{e,A}$ & 33.2618 & 1.0558e-06
 \\ \bottomrule
\end{tabular}
\caption{Weak Exogeneity Test Results. The table presents the test statistics and corresponding p-values for testing the null hypothesis of weak exogeneity.}
\label{T11}
\end{table}

A further distinction concerns the interpretation of weak exogeneity itself. In the C-MAR framework, weak exogeneity is formulated at the structural level, that is, for entire row or column components such as industrial production, inflation expectations, or country effects. By contrast, in the standard VECM it can only be tested series by series. Hence, one cannot test whether inflation expectations as a whole are weakly exogenous, but only whether each of $\pi_{e,B}$, $\pi_{e,D}$, and $\pi_{e,A}$ is weakly exogenous separately.

Overall, the comparison highlights the main empirical advantage of the C-MAR approach: by preserving the matrix structure of the data, it yields a much more coherent and economically interpretable description of both the long-run relations and the adjustment dynamics than the unrestricted VECM.

\section{Conclusions}

This paper extends the analysis of matrix autoregressive models to cointegrated systems, thereby contributing to the growing literature on models for matrix-valued observations.

We argue that the use of a matrix-based model, rather than a standard vector autoregressive framework, is motivated by two main advantages: a reduction in estimation complexity and a clearer interpretation of the dynamics through the separation of row-wise and column-wise effects. Although previous contributions on cointegration in MAR settings addressed the first objective, they did not provide cointegration coefficients with a clear and economically meaningful interpretation within the MAR framework.

To fill this gap, we propose a novel cointegrated MAR model featuring parameters with direct economic content: cointegration matrices, which describe the long-run equilibrium relations toward which the system is attracted, and adjustment matrices, which measure the speed at which deviations from equilibrium are corrected. The presence of interpretable coefficients makes it possible to formulate economically meaningful restrictions and to test them using standard statistical procedures.

In contrast to earlier work, we do not start from an error-correction specification whose coefficients admit a Kronecker-product decomposition. Instead, we begin with a MAR representation and then derive the corresponding error-correction form. A key strength of our approach is that the matrix structure is preserved in both the autoregressive and the error-correction representations. By contrast, when the error-correction representations proposed in earlier contributions are vectorized, the resulting autoregressive coefficients generally do not admit a Kronecker-product decomposition.

\bibliographystyle{apalike}
\bibliography{sample}

\appendix

\section{Proofs}

\begin{proof}[\textbf{Proof of Theorem \ref{T01}}]
The MAR(1) model can be vectorized as
\[
x_t=\text{vec}(X_t)= (\Psi\otimes\Lambda)\text{vec}(X_{t-1})+\text{vec}(E_t)
=Kx_{t-1}+e_t,
\]
where $K=\Psi\otimes\Lambda$.

Set
\[
c_1:=m-r_1,\qquad c_2:=n-r_2,\qquad c:=c_1c_2.
\]
Since the eigenvalue $1$ of $\Lambda$ is semisimple with algebraic multiplicity $c_1$,
there exists an invertible matrix $S$ such that
\[
S^{-1}\Lambda S=
\begin{pmatrix}
I_{c_1} & 0\\
0 & A
\end{pmatrix},
\]
where every eigenvalue of $A$ has modulus strictly smaller than one.
Similarly, since the eigenvalue $1$ of $\Psi$ is semisimple with algebraic multiplicity $c_2$,
there exists an invertible matrix $T$ such that
\[
T^{-1}\Psi T=
\begin{pmatrix}
I_{c_2} & 0\\
0 & B
\end{pmatrix},
\]
where every eigenvalue of $B$ has modulus strictly smaller than one.

Therefore,
\[
K=\Psi\otimes\Lambda
=(T\otimes S)
\left[
\begin{pmatrix}
I_{c_2} & 0\\
0 & B
\end{pmatrix}
\otimes
\begin{pmatrix}
I_{c_1} & 0\\
0 & A
\end{pmatrix}
\right]
(T\otimes S)^{-1}.
\]
Hence $K$ is similar to
\[
\widetilde K:=P\otimes L=
\begin{pmatrix}
I_c & 0 & 0 & 0\\
0 & I_{c_2}\otimes A & 0 & 0\\
0 & 0 & B\otimes I_{c_1} & 0\\
0 & 0 & 0 & B\otimes A
\end{pmatrix},
\]
where $\widetilde K$ is obtained from $P\otimes L$ by a simultaneous permutation of rows and columns. It follows that the eigenvalues of $K$ are:
\begin{itemize}
    \item exactly $c=(m-r_1)(n-r_2)$ eigenvalues equal to $1$, coming from the block $I_c$;
    \item all remaining eigenvalues belonging to the blocks
    $B\otimes I_{c_1}$, $I_{c_2}\otimes A$, and $B\otimes A$.
\end{itemize}
Since every eigenvalue of $A$ and $B$ has modulus strictly smaller than one, every eigenvalue
of these three blocks also has modulus strictly smaller than one.

Now consider the characteristic polynomial of the vectorized VAR(1),
\[
\Phi(z)=I_{mn}-zK.
\]
Because the roots of $\det\Phi(z)=0$ are of the form $z=\nu^{-1}$, where $\nu$ is an eigenvalue of $K$,
we obtain:
\begin{itemize}
    \item if $\nu=1$, then $z=1$;
    \item if $|\nu|<1$, then $|z|=|\nu|^{-1}>1$.
\end{itemize}
Therefore
\[
\det\Phi(z)=0 \quad \Longrightarrow \quad z=1 \ \text{or} \ |z|>1.
\]
Moreover, since $K$ is similar to a block diagonal matrix with the block $I_c$,
the eigenvalue $1$ of $K$ is semisimple.

Hence the vectorized system satisfies the Johansen $I(1)$ condition for a simple VAR(1) (see Theorem 4.2 in \cite{johansen1995likelihood}):
the only unit root is at $z=1$, it is of first order, and all other roots lie outside the unit circle.
Equivalently,
\[
\Delta x_t = \Pi x_{t-1}+e_t,
\qquad
\Pi:=K-I_{mn},
\]
is a cointegrated error-correction representation.

Finally, since
\[
\dim\ker(\Pi)=\dim\ker(K-I_{mn})=c=(m-r_1)(n-r_2),
\]
the rank-nullity theorem yields
\[
\text{rank}(\Pi)=mn-c=mn-(m-r_1)(n-r_2).
\]
Therefore the vectorized process $x_t$ is $I(1)$ and cointegrated in the sense of Johansen (see Theorem 4.2 in \cite{johansen1995likelihood}),
with cointegration rank
\[
mn-(m-r_1)(n-r_2),
\]
and therefore
\[
(m-r_1)(n-r_2)
\]
common stochastic trends.
\end{proof}

\begin{proof}[\textbf{Proof of Lemma \ref{L1}}]

We first show that a necessary condition for
\[
\Psi \otimes \Lambda - I = A^* \otimes B^*
\]
is that either \(\Psi\) or \(\Lambda\) is a scalar multiple of the identity.

Suppose that there exist matrices
\[
A^* \in \mathbb{R}^{n\times n},
\qquad
B^* \in \mathbb{R}^{m\times m}
\]
such that
\[
\Psi \otimes \Lambda - I_{mn} = A^* \otimes B^*.
\]
Since
\[
I_{mn} = I_n \otimes I_m,
\]
this can be rewritten as
\[
\Psi \otimes \Lambda - I_n \otimes I_m = A^* \otimes B^*.
\]

Assume that \(\Psi\) is not a scalar multiple of \(I_n\). Then \(\Psi\) and \(I_n\) are linearly independent in \(\mathbb{R}^{n\times n}\). Therefore, there exist linear functionals
\[
f,g : \mathbb{R}^{n\times n} \to \mathbb{R}
\]
such that
\[
f(\Psi)=1,\qquad f(I_n)=0,
\]
and
\[
g(\Psi)=0,\qquad g(I_n)=1.
\]

Now apply the linear map
\[
f\otimes \mathrm{id}_{M_m}
\]
to both sides of
\[
\Psi \otimes \Lambda - I_n \otimes I_m = A^* \otimes B^*,
\]
where \(\mathrm{id}_{M_m}\) denotes the identity map on \(\mathbb{R}^{m\times m}\). Using the identity
\[
(f\otimes \mathrm{id}_{M_m})(X\otimes Y)=f(X)\,Y,
\]
we obtain
\[
\Lambda = f(A^*)\,B^*.
\]

Similarly, applying \(g\otimes \mathrm{id}_{M_m}\) yields
\[
-I_m = g(A^*)\,B^*.
\]

It follows that \(B^*\) is a scalar multiple of \(I_m\), and hence the previous identity implies that \(\Lambda\) is also a scalar multiple of \(I_m\). Therefore, if \(\Psi\) is not a scalar multiple of the identity, then \(\Lambda\) must be a scalar multiple of the identity.

We now show that Assumption \ref{Ass_1} implies that neither $\Psi$ nor $\Lambda$
is a scalar multiple of the identity. Indeed, suppose by contradiction that $\Psi=aI_n$ for some $a\in\mathbb{R}$.
Then all eigenvalues of $\Psi$ are equal to $a$. Hence the eigenvalue $1$ has either
multiplicity $n$ (if $a=1$) or multiplicity $0$ (if $a\neq 1$). This contradicts
Assumption \ref{Ass_1}, which requires the eigenvalue $1$ of $\Psi$ to have multiplicity
$n-r_2$, with $0<r_2<n$. 

The same argument shows that
$\Lambda$ cannot be a scalar multiple of the identity.

\end{proof}

\begin{proof}[\textbf{Proof of Theorem \ref{Teo_3}}]
We first consider $Y_t=\gamma'X_t$. From
\[
X_t=\Lambda X_{t-1}\Psi'+E_t
\]
and the decomposition $\Lambda=I_m+\tau\gamma'$, we obtain
\begin{align*}
Y_t
&=\gamma'X_t \\
&=\gamma'\Lambda X_{t-1}\Psi' + \gamma'E_t \\
&=\bigl(\gamma' + \gamma'\tau\gamma'\bigr)X_{t-1}\Psi' + \gamma'E_t \\
&=\bigl(I_{r_1}+\gamma'\tau\bigr)\gamma'X_{t-1}\Psi' + \gamma'E_t \\
&=: A_\gamma Y_{t-1}\Psi' + U_t,
\end{align*}
where
\[
A_\gamma := I_{r_1}+\gamma'\tau,
\qquad
U_t:=\gamma'E_t.
\]
Hence $\{Y_t\}$ is itself a MAR(1) process.

Since $\tau\gamma'$ and $\gamma'\tau$ have the same nonzero eigenvalues, the eigenvalues of
$A_\gamma=I_{r_1}+\gamma'\tau$ are exactly the non-unit eigenvalues of
$\Lambda=I_m+\tau\gamma'$. By assumption, all such eigenvalues have modulus strictly smaller
than one, and therefore
\[
\rho(A_\gamma)<1.
\]
where $\rho(\cdot)$ denotes the spectral radius. Moreover, $\Psi$ has unit eigenvalues and all remaining eigenvalues strictly inside the unit disk,
hence
\[
\rho(\Psi)=1.
\]
Therefore
\[
\rho(A_\gamma)\rho(\Psi)<1.
\]
By Proposition 1 in Chen, Xiao and Yang (2021), the MAR(1) process
\[
Y_t=A_\gamma Y_{t-1}\Psi'+U_t
\]
is stationary and causal. Hence $Y_t=\gamma'X_t$ is $I(0)$. 

We now consider $Z_t=X_t\theta$. Using $\Psi=I_n+\varphi\theta'$, we have
\begin{align*}
Z_t
&=X_t\theta \\
&=\Lambda X_{t-1}\Psi'\theta + E_t\theta \\
&=\Lambda X_{t-1}\bigl(\theta+\theta\varphi'\theta\bigr) + E_t\theta \\
&=\Lambda X_{t-1}\theta\bigl(I_{r_2}+\varphi'\theta\bigr) + E_t\theta \\
&=: \Lambda Z_{t-1} B_\theta' + V_t,
\end{align*}
where
\[
B_\theta' := I_{r_2}+\varphi'\theta,
\qquad
V_t:=E_t\theta.
\]
Thus $\{Z_t\}$ is again a MAR(1) process.

Now $\Psi' = I_n+\theta\varphi'$, and $\theta\varphi'$ and $\varphi'\theta$ have the same nonzero
eigenvalues. Hence the eigenvalues of
\[
B_\theta' = I_{r_2}+\varphi'\theta
\]
are exactly the non-unit eigenvalues of $\Psi'$, and therefore of $\Psi$. By assumption, they all
have modulus strictly smaller than one, so
\[
\rho(B_\theta')<1.
\]
Since $\Lambda$ has unit eigenvalues and all other eigenvalues strictly inside the unit disk,
\[
\rho(\Lambda)=1.
\]
Therefore
\[
\rho(\Lambda)\rho(B_\theta')<1.
\]
Applying again Proposition 1 in Chen, Xiao and Yang (2021), the MAR(1) process
\[
Z_t=\Lambda Z_{t-1}B_\theta' + V_t
\]
is stationary and causal. Hence $Z_t=X_t\theta$ is $I(0)$. 

This proves the claim.
\end{proof}

\begin{proof}[\textbf{Proof of Theorem \ref{T2}}]
Starting from \eqref{VECMECC-MAR}, we observe that by choosing
\[
H = (\theta_\perp \otimes \gamma_\perp),
\]
it holds that $\Pi H = 0$. Since $\Pi=\alpha\beta'$ with $\alpha$ of full column rank,
this implies $\mathcal N(\Pi)=\mathcal N(\beta')$ and hence $\beta' H=0$, where $\mathcal N(\cdot)$ denotes the null space of a matrix.
We observe that $H$ is of full column rank $(n-r_2)(m-r_1)$ as a consequence
of the definition of the orthogonal complement.
Following Theorem \ref{T01}, it follows that the rank of $\beta_\perp$ is exactly
$(n-r_2)(m-r_1)$, so that $\beta = H_\perp$.
Since
\[
\left[ (I_n \otimes \gamma), (\theta \otimes \gamma_\perp) \right]' H = 0,
\]
one can legitimately select
\[
\beta = \left[ (I_n \otimes \gamma), (\theta \otimes \gamma_\perp) \right].
\]
 
Given the choice of $\beta$, the corresponding adjustment matrix
can be obtained as $\alpha = \Pi \bar\beta = \Pi \beta(\beta'\beta)^{-1}$.
Partition $\beta$ as $\beta=[\beta_1,\beta_2]$ with $\beta_1=I_n\otimes\gamma$, $\beta_2=\theta\otimes\gamma_\perp$.
Since $\gamma'\gamma_\perp=0$, it follows that $\beta_1'\beta_2=0$, so that
$\beta'\beta$ is block diagonal:
\[
\beta'\beta=\begin{bmatrix}
    I_n\otimes(\gamma'\gamma) & 0 \\ 0 & (\theta'\theta)\otimes(\gamma_\perp'\gamma_\perp)
\end{bmatrix},
\]
hence 
\[
\alpha=
\Big[
\Pi \beta_1 \big(I_n\otimes(\gamma'\gamma)\big)^{-1},
\;
\Pi \beta_2 \big((\theta'\theta)\otimes(\gamma_\perp'\gamma_\perp)\big)^{-1}
\Big].
\]
Using the expression $\Pi = I_n\otimes(\tau\gamma') + (\varphi\theta')\otimes I_m + (\varphi\theta')\otimes(\tau\gamma')$, we obtain
\[
\Pi\beta_1 (I_n\otimes(\gamma'\gamma)^{-1})
= I_n\otimes\tau
+ (\varphi\theta')\otimes(\bar\gamma+\tau),
\]
where $\bar\gamma=\gamma(\gamma'\gamma)^{-1}$.
Similarly,
\[
\Pi\beta_2\big((\theta'\theta)^{-1}\otimes(\gamma_\perp'\gamma_\perp)^{-1}\big)
= \varphi\otimes\bar\gamma_\perp.
\]
This proves the theorem.
\end{proof}

\begin{proof}[\textbf{Proof of Lemma \ref{L2}}]
    Assume $A_i=\lambda_i A_1$ for all $i$. Then, by bilinearity of the Kronecker product, $\sum_{i=1}^n A_i\otimes B_i
=
\sum_{i=1}^n (\lambda_i A_1)\otimes B_i
=
A_1\otimes\left(\sum_{i=1}^n \lambda_i B_i\right)$, which proves the first statement. The second statement follows analogously by
assuming $B_i=\delta_i B_1$ and factoring out $B_1$.
\end{proof}

\begin{proof}[\textbf{Proof of Lemma \ref{L3}}]
If $A_i=\lambda_i A_1$ and $B_i=\delta_i B_1$ for some scalars
$\lambda_i,\delta_i$, then
\[
\sum_{i=1}^n A_i\otimes B_i
=
\sum_{i=1}^n (\lambda_i A_1)\otimes(\delta_i B_1)
=
A_1\otimes\Big(B_1\sum_{i=1}^n \lambda_i\delta_i\Big)
=
A_1\otimes(B_1 s),
\]
so that $C=A_1$ and $D=B_1 s$. Since $\text{eig}(B_1 s)=s\,\text{eig}(B_1)$, we have
$|\text{eig}(D)|=|s|\,|\text{eig}(B_1)|\le 1$. 
\end{proof}

\begin{proof}[\textbf{Proof of Lemma \ref{L4}}]
Set
\[
P:=\varphi\theta' \in \mathbb{R}^{n\times n},
\qquad
Q:=\tau\gamma' \in \mathbb{R}^{m\times m}.
\]

By a straightforward extension of the proof of Lemma \ref{L1}, a necessary condition for the existence of matrices \(A^*\) and \(B^*\) such that
\[
P\otimes Q + I_{mn} = A^* \otimes B^*
\]
is that either \(P\) or \(Q\) is a scalar multiple of the identity.

Assumption \ref{ASS_2} implies that
\[
\operatorname{rank}(P)=r_2,\qquad \operatorname{rank}(Q)=r_1,
\qquad
0<r_2<n,\qquad 0<r_1<m.
\]
If \(P=aI_n\) for some \(a\in\mathbb{R}\), then either \(a=0\), in which case
\[
\operatorname{rank}(P)=0,
\]
or \(a\neq 0\), in which case
\[
\operatorname{rank}(P)=n.
\]
Both cases contradict \(0<r_2<n\). Hence \(P\) is not a scalar multiple of \(I_n\). By the same argument, \(Q\) is not a scalar multiple of \(I_m\).

\end{proof}
\begin{proof}[\textbf{Proof of Theorem \ref{Teo_4}}]

We first show that $\gamma'X_t$ and $X_t\theta$ have stationary first differences.
Premultiplying \eqref{MECM} by $\gamma'$ gives
\[
\Delta(\gamma'X_t)
=
\gamma'\tau\,\gamma'X_{t-1}\theta\,\varphi' + \gamma'E_t.
\]
Since $\gamma'X_{t-1}\theta$ is $I(0)$ and $E_t$ is white noise, it follows that
\[
\Delta(\gamma'X_t)
\]
is $I(0)$. Therefore, $\gamma'X_t$ is at most $I(1)$.

Similarly, postmultiplying \eqref{MECM} by $\theta$ yields
\[
\Delta(X_t\theta)
=
\tau\,\gamma'X_{t-1}\theta\,\varphi'\theta + E_t\theta.
\]
Again, since $\gamma'X_{t-1}\theta$ is $I(0)$ and $E_t$ is white noise, we obtain
\[
\Delta(X_t\theta)
\]
is $I(0)$. Hence $X_t\theta$ is at most $I(1)$.

It remains to show that neither $\gamma'X_t$ nor $X_t\theta$ is $I(0)$.
Since $x_t$ is $I(1)$ with cointegration matrix $\beta=\theta\otimes\gamma$, by the
Granger representation theorem it admits a common-trend representation whose nonstationary
space is spanned by a complement of $\beta$. A convenient choice is
\[
\beta_\perp
=
\Big[
\theta_\perp\otimes I_m
\;,\;
\theta\otimes\gamma_\perp
\Big],
\]
Therefore, the columns of $\beta_\perp$ span the common-trend space of $x_t$.

Now define
\[
y_t:=\text{vec}(\gamma'X_t)=(I_n\otimes\gamma')x_t.
\]
Applying the transformation $I_n\otimes\gamma'$ to the common-trend space gives
\[
(I_n\otimes\gamma')\beta_\perp
=
\Big[
\theta_\perp\otimes\gamma'
\;,\;
\theta\otimes\gamma'\gamma_\perp
\Big]
=
\Big[
\theta_\perp\otimes\gamma'
\;,\;
0
\Big].
\]
Since $\gamma$ has full column rank and $\theta_\perp$ has rank $n-r_2>0$, the matrix
\[
\theta_\perp\otimes\gamma'
\]
has rank
\[
r_1(n-r_2)>0.
\]
Hence $y_t=\text{vec}(\gamma'X_t)$ retains nonstationary common-trend components and therefore
cannot be $I(0)$. Since we already proved that $\Delta(\gamma'X_t)$ is $I(0)$, it follows
that $\gamma'X_t$ is $I(1)$.

Next define
\[
z_t:=\text{vec}(X_t\theta)=(\theta'\otimes I_m)x_t.
\]
Applying the transformation $\theta'\otimes I_m$ to the common-trend space yields
\[
(\theta'\otimes I_m)\beta_\perp
=
\Big[
\theta'\theta_\perp\otimes I_m
\;,\;
\theta'\theta\otimes\gamma_\perp
\Big]
=
\Big[
0
\;,\;
\theta'\theta\otimes\gamma_\perp
\Big].
\]
Since $\theta$ has full column rank, $\theta'\theta$ is nonsingular, and $\gamma_\perp$
has rank $m-r_1>0$. Therefore,
\[
\theta'\theta\otimes\gamma_\perp
\]
has rank
\[
r_2(m-r_1)>0.
\]
Thus $z_t=\text{vec}(X_t\theta)$ also retains nonstationary common-trend components and cannot
be $I(0)$. Since $\Delta(X_t\theta)$ is $I(0)$, it follows that $X_t\theta$ is $I(1)$.

\end{proof}

\begin{proof}[\textbf{Proof of Proposition \ref{P_rankpairs}}]
If $\gamma^{\ast\prime}X_t$ is $I(0)$ componentwise, then every column of $\gamma^\ast$
belongs to $\mathcal C_\gamma$, implying $r_1^\ast\le \dim(\mathcal C_\gamma)=r_1$.
Similarly, if $X_t\theta^\ast$ is $I(0)$ componentwise, then every column of $\theta^\ast$
belongs to $\mathcal C_\theta$, implying $r_2^\ast\le \dim(\mathcal C_\theta)=r_2$.
Hence, if both $\gamma^{\ast\prime}X_t$ and $X_t\theta^\ast$ were $I(0)$ componentwise,
we would have $r_1^\ast\le r_1$ and $r_2^\ast\le r_2$, which together with the rank
identity forces $(r_1^\ast,r_2^\ast)=(r_1,r_2)$. Therefore, for any
$(r_1^\ast,r_2^\ast)\neq(r_1,r_2)$, at least one of the two conditions must fail.
\end{proof}

\section{Estimation Steps}

This appendix describes the estimation of the row-side parameters in
\eqref{Auxiliary_1} following the standard Johansen maximum likelihood logic, adapted
to the present matrix setting. The derivation is the same as in the vector I(1) model,
except that here all sample moments are pooled over both the time index $t$ and the
column index $j$ of the whitened auxiliary system. See Chapter 7 in \citet{juselius2006cointegrated} for the vector case, and \citet{li2024cointegrated} for the
same pooled-column idea in matrix autoregressive models. 

Consider the whitened auxiliary regression
\begin{equation}
\widetilde Y_t^{L}
=
\tau\gamma'\widetilde X_t^{L}
+
\sum_{i=1}^{p-1}\Gamma_{1,i}\widetilde Z_{i,t}^{L}
+
\widetilde E_t^{L},
\qquad
\widetilde E_t^{L}\sim MN(0,\Sigma_r,I),
\label{B.1}
\end{equation}
where $\widetilde Y_t^{L},\widetilde X_t^{L}\in\mathbb{R}^{m\times q_2}$ and
$q_2=n-r_2$. Reading \eqref{B.1} by columns gives, for $j=1,\dots,q_2$,
\begin{equation}
\widetilde y_{tj}^{L}
=
\tau\gamma'\widetilde x_{tj}^{L}
+
\Psi_1 \widetilde z_{tj}^{L}
+
\widetilde\varepsilon_{tj}^{L},
\label{B.2}
\end{equation}
where $\Psi_1$ stacks the short-run matrices $\Gamma_{1,1},\dots,\Gamma_{1,p-1}$.

\paragraph{Step 1. Concentrating out the short-run dynamics.}
Following the Frisch--Waugh theorem, regress $\widetilde y_{tj}^{L}$ and
$\widetilde x_{tj}^{L}$ on the short-run regressors $\widetilde z_{tj}^{L}$ over all
pooled observations $(t,j)$. Let the corresponding residuals be denoted by
$r_{0,tj}^{L}$ and $r_{1,tj}^{L}$. Then the concentrated long-run model is
\begin{equation}
r_{0,tj}^{L}
=
\tau\gamma' r_{1,tj}^{L}
+
u_{tj}^{L}.
\label{B.3}
\end{equation}

\paragraph{Step 2. Pooled sample moment matrices.}
Let $N_L=T_0 q_2$ denote the total number of pooled observations. Define
\begin{equation}
S_{00}^{L}
=
\frac{1}{N_L}\sum_{t}\sum_{j} r_{0,tj}^{L} r_{0,tj}^{L\prime},
\qquad
S_{11}^{L}
=
\frac{1}{N_L}\sum_{t}\sum_{j} r_{1,tj}^{L} r_{1,tj}^{L\prime},
\label{B.4}
\end{equation}
and
\begin{equation}
S_{01}^{L}
=
\frac{1}{N_L}\sum_{t}\sum_{j} r_{0,tj}^{L} r_{1,tj}^{L\prime},
\qquad
S_{10}^{L}=(S_{01}^{L})'.
\label{B.5}
\end{equation}
These are the pooled counterparts of the usual Johansen moment matrices. The only
difference from the vector case is the additional summation over the column index $j$.
This pooling is justified because, after whitening, \eqref{B.2} is a collection of
standard Gaussian reduced-rank regressions with common coefficient matrix
$\tau\gamma'$.

\paragraph{Step 3. Obtaining the cointegration matrix.}
The unrestricted maximum likelihood estimator of the cointegration matrix $\gamma$ is
obtained from the generalized eigenvalue problem
\begin{equation}
\left|
\lambda S_{11}^{L}
-
S_{10}^{L}(S_{00}^{L})^{-1}S_{01}^{L}
\right|
=0,
\label{B.8}
\end{equation}
Let $\widehat\lambda_1\geq\cdots\geq\widehat\lambda_m$ denote the eigenvalues and
$\widehat v_1,\dots,\widehat v_m$ the corresponding eigenvectors. The estimate
$\widehat\gamma$ is obtained by taking the $r_1$ eigenvectors associated with the
largest $r_1$ eigenvalues. 

A convenient statistical normalization is
\begin{equation}
\widetilde\gamma' S_{11}^{L}\widetilde\gamma = I_{r_1}.
\label{B.11}
\end{equation}
Any economically meaningful normalization can then be imposed afterwards. The
normalized matrix is the estimator of the row-side cointegration matrix $\gamma$. 

\paragraph{Step 4. Obtaining the adjustment matrix.}
Given $\widehat\gamma$, the maximum likelihood estimator of the row adjustment matrix
is
\begin{equation}
\widehat\tau
=
S_{01}^{L}\widehat\gamma
\left(\widehat\gamma' S_{11}^{L}\widehat\gamma\right)^{-1}.
\label{B.12}
\end{equation}

\paragraph{Step 5. Recovery of the short-run matrices.}
Once $(\widehat\tau,\widehat\gamma)$ have been obtained, the short-run block
$\Psi_1=[\Gamma_{1,1},\dots,\Gamma_{1,p-1}]$ can be recovered from the pooled
regression of $\widetilde y_{tj}^{L}$ on $\widetilde x_{tj}^{L}$ and
$\widetilde z_{tj}^{L}$ as
\begin{equation}
\widehat\Psi_1
=
\left(M_{yz}^{L}-\widehat\tau\widehat\gamma' M_{xz}^{L}\right)
\left(M_{zz}^{L}\right)^{-1},
\label{B.14}
\end{equation}
where
\begin{equation}
M_{yz}^{L}
=
\frac{1}{N_L}\sum_t\sum_j \widetilde y_{tj}^{L}\widetilde z_{tj}^{L\prime},
\qquad
M_{xz}^{L}
=
\frac{1}{N_L}\sum_t\sum_j \widetilde x_{tj}^{L}\widetilde z_{tj}^{L\prime},
\qquad
M_{zz}^{L}
=
\frac{1}{N_L}\sum_t\sum_j \widetilde z_{tj}^{L}\widetilde z_{tj}^{L\prime}.
\label{B.15}
\end{equation}
The individual matrices $\widehat\Gamma_{1,i}$ are then read from the corresponding
blocks of $\widehat\Psi_1$.

\paragraph{The estimation of Auxiliary\_2.}
The same five-step procedure applies to \eqref{Auxiliary_2} after replacing
$(\tau,\gamma,\Gamma_{1,i})$ with $(\varphi,\theta,\Gamma_{2,i})$ and using the
pooled moment matrices constructed from the whitened right-side system.

\section{Likelihood Ratio Rests in the ECC--MAR Model}

This appendix reports the step-by-step implementation of the three classes of hypothesis
tests discussed in Section~4.2. The procedure is the same as in the standard C-VAR model,
except that in the present matrix setting the relevant sample moment matrices are pooled
across the transformed column-wise observations.

Throughout, the row-side tests are based on the concentrated version of
\eqref{Auxiliary_1},
$$
r_{0,tj}^{L}=\tau\gamma' r_{1,tj}^{L}+u_{tj}^{L},
$$
obtained after partialling out the short-run regressors by Frisch--Waugh. Let
$$
N_L := T_0 q_2
$$
denote the total number of pooled observations and define
$$
S_{ab}^{L}
=
\frac{1}{N_L}\sum_t\sum_j r_{a,tj}^{L} r_{b,tj}^{L\prime},
\qquad a,b\in\{0,1\}.
$$
The column-side tests are constructed analogously from the concentrated version of
\eqref{Auxiliary_2},
$$
r_{0,tj}^{R}=\varphi\theta' r_{1,tj}^{R}+u_{tj}^{R},
$$
with
$$
N_R := T_0 q_1,
\qquad
S_{ab}^{R}
=
\frac{1}{N_R}\sum_t\sum_j r_{a,tj}^{R} r_{b,tj}^{R\prime}.
$$

\subsection*{C.1 Uniform restrictions on the cointegration matrices}

We first consider the row-side hypothesis
$$
H_0^\gamma:\quad \gamma = H_\gamma \widetilde\gamma,
$$
where $H_\gamma$ is an $m\times s_1$ design matrix.

\paragraph{Step 1.}
Estimate the unrestricted row-side model and compute the pooled moment matrices
$S_{00}^{L}$, $S_{11}^{L}$, $S_{01}^{L}$, and $S_{10}^{L}$.

\paragraph{Step 2.}
Solve the unrestricted generalized eigenvalue problem
$$
\left|
\lambda S_{11}^{L}
-
S_{10}^{L}(S_{00}^{L})^{-1}S_{01}^{L}
\right|=0
$$
and order the unrestricted eigenvalues
$$
\widehat\lambda_1\geq \cdots \geq \widehat\lambda_m.
$$

\paragraph{Step 3.}
Under $H_0^\gamma$, solve the restricted eigenvalue problem
$$
\left|
\lambda H_\gamma' S_{11}^{L} H_\gamma
-
H_\gamma' S_{10}^{L}(S_{00}^{L})^{-1}S_{01}^{L} H_\gamma
\right|=0
$$
and order the restricted eigenvalues
$$
\widehat\lambda_1^{\,c}\geq \cdots \geq \widehat\lambda_{s_1}^{\,c}.
$$

\paragraph{Step 4.}
Form the likelihood ratio statistic
$$
LR_\gamma
=
N_L
\left[
\sum_{i=1}^{r_1}\ln(1-\widehat\lambda_i^{\,c})
-
\sum_{i=1}^{r_1}\ln(1-\widehat\lambda_i)
\right].
$$

\paragraph{Step 5.}
Under the null,
$$
LR_\gamma \overset{d}{\longrightarrow} \chi^2\!\left(r_1(m-s_1)\right).
$$

The column-side test for
$$
H_0^\theta:\quad \theta = H_\theta \widetilde\theta
$$
is obtained by replacing $(\gamma,\tau,S_{ab}^{L},N_L,r_1,m,s_1)$ with
$(\theta,\varphi,S_{ab}^{R},N_R,r_2,n,s_2)$. 

\subsection*{C.2 Testing whether a known vector belongs to the cointegration space}

We next consider the row-side hypothesis
$$
H_0^g:\quad \gamma=(g,\widetilde\gamma),
$$
where $g$ is an $m\times r_g$ matrix of known vectors.

\paragraph{Step 1.}
Estimate the unrestricted row-side model and compute the unrestricted pooled eigenvalues
$\widehat\lambda_1,\dots,\widehat\lambda_{r_1}$ from the eigenvalue problem in
Section~C.1.

\paragraph{Step 2.}
Under $H_0^g$, the process
$$
g' r_{1,tj}^{L}
$$
is stationary. Concentrate it out from both $r_{0,tj}^{L}$ and $g_\perp' r_{1,tj}^{L}$
using the Frisch--Waugh theorem, obtaining residuals
$$
r_{0.g,tj}^{L},\qquad r_{1.g,tj}^{L},
$$
where $g_\perp$ denotes the orthogonal complement of $g$.

\paragraph{Step 3.}
Construct the concentrated pooled moment matrices
$$
S_{ab.g}^{L}
=
\frac{1}{N_L}\sum_t\sum_j r_{a.g,tj}^{L} r_{b.g,tj}^{L\prime},
\qquad a,b\in\{0,1\}.
$$

\paragraph{Step 4.}
Solve the auxiliary eigenvalue problem associated with the known vectors:
$$
\left|
\rho S_{00}^{L}
-
S_{01}^{L}g(g'S_{11}^{L}g)^{-1}g'S_{10}^{L}
\right|=0,
$$
and denote the resulting eigenvalues by
$$
\widehat\rho_1\geq\cdots\geq\widehat\rho_{r_g}.
$$

\paragraph{Step 5.}
Solve the restricted eigenvalue problem on the concentrated model:
$$
\left|
\lambda S_{11.g}^{L}
-
S_{10.g}^{L}(S_{00.g}^{L})^{-1}S_{01.g}^{L}
\right|=0,
$$
and denote the resulting eigenvalues by
$$
\widehat\lambda^{\,c}_{1.g}\geq\cdots\geq\widehat\lambda^{\,c}_{r_1-r_g,g}.
$$

\paragraph{Step 6.}
Form the likelihood ratio statistic
$$
LR_g
=
N_L
\left[
\sum_{i=1}^{r_g}\ln(1-\widehat\rho_i)
+
\sum_{i=1}^{r_1-r_g}\ln(1-\widehat\lambda^{\,c}_{i.g})
-
\sum_{i=1}^{r_1}\ln(1-\widehat\lambda_i)
\right].
$$

\paragraph{Step 7.}
Under the null,
$$
LR_g \overset{d}{\longrightarrow} \chi^2\!\left((m-r_1)r_g\right).
$$

The corresponding column-side hypothesis
$$
H_0^t:\quad \theta=(t,\widetilde\theta)
$$
is handled analogously by replacing $(g,g_\perp,S_{ab}^{L},N_L,r_1,m)$ with
$(t,t_\perp,S_{ab}^{R},N_R,r_2,n)$.

\subsection*{C.3 Restrictions on the adjustment matrices}

Finally, consider the row-side hypothesis
$$
H_0^\tau:\quad \tau = H_\tau \widetilde\tau,
$$
where $H_\tau$ is an $m\times s_\tau$ design matrix. Zero-row restrictions are obtained
as a special case and correspond to weak exogeneity.

\paragraph{Step 1.}
Start from the concentrated row-side model
$$
r_{0,tj}^{L}=\tau\gamma' r_{1,tj}^{L}+u_{tj}^{L}.
$$

\paragraph{Step 2.}
Premultiply the system by $H_\tau'$ and $H_{\tau,\perp}'$ and partition the dependent
variable accordingly. Using Gaussian conditioning, write the model for the
$H_\tau' r_{0,tj}^{L}$ block conditional on the $H_{\tau,\perp}' r_{0,tj}^{L}$ block.

\paragraph{Step 3.}
Since the conditioning block is stationary, concentrate it out by Frisch--Waugh from both
the dependent variable and the regressors, obtaining residuals
$$
r_{0.H,tj}^{L},\qquad r_{1.H,tj}^{L}.
$$

\paragraph{Step 4.}
Construct the conditioned pooled moment matrices
$$
S_{ab.H}^{L}
=
\frac{1}{N_L}\sum_t\sum_j r_{a.H,tj}^{L} r_{b.H,tj}^{L\prime},
\qquad a,b\in\{0,1\}.
$$

\paragraph{Step 5.}
Solve the restricted eigenvalue problem
$$
\left|
\lambda S_{11.H}^{L}
-
S_{10.H}^{L}(S_{00.H}^{L})^{-1}S_{01.H}^{L}
\right|=0
$$
and order the eigenvalues
$$
\widehat\lambda_{1,\tau}^{\,c}\geq\cdots\geq\widehat\lambda_{r_1,\tau}^{\,c}.
$$

\paragraph{Step 6.}
Using the unrestricted eigenvalues
$\widehat\lambda_1,\dots,\widehat\lambda_{r_1}$ from Section~C.1, form
$$
LR_\tau
=
N_L
\left[
\sum_{i=1}^{r_1}\ln(1-\widehat\lambda_{i,\tau}^{\,c})
-
\sum_{i=1}^{r_1}\ln(1-\widehat\lambda_i)
\right].
$$

\paragraph{Step 7.}
Under the null,
$$
LR_\tau \overset{d}{\longrightarrow} \chi^2\!\left(r_1(m-s_\tau)\right).
$$
In the important special case of zero-row restrictions, this is the likelihood ratio test for
weak exogeneity of the corresponding row variables.

The test for restrictions on the column-side adjustment matrix
$$
H_0^\varphi:\quad \varphi = H_\varphi \widetilde\varphi
$$
is obtained analogously from the concentrated version of \eqref{Auxiliary_2} by replacing
$(\tau,\gamma,S_{ab}^{L},N_L,r_1,m,s_\tau)$ with
$(\varphi,\theta,S_{ab}^{R},N_R,r_2,n,s_\varphi)$. 

\section{Preliminary Time-Series Analysis for the Empirical Application}

This appendix reports the preliminary univariate and multivariate time-series evidence underlying the empirical application in Section~6. In particular, we document the integration properties of the variables, the lag-order selection, and the determination of the total cointegration rank of the vectorized system.

We begin by testing the order of integration of the six series, both in levels and in first differences, using Dickey--Fuller tests without deterministic terms. Table~\ref{T1} reports the results for the variables in levels, while Table~\ref{T4} reports the results for the first differences.

Overall, the evidence is consistent with all variables being integrated of order one. At the 1\% significance level, the null of a unit root cannot be rejected in levels, whereas it is strongly rejected for the first differences across all series.

\begin{table}[ht]
\centering
\begin{tabular}{@{}lcccccc@{}}
\toprule
\multirow{2}{*}{} & \multicolumn{2}{c}{Industrial Production} & \multicolumn{2}{c}{Expectations on Inflation} \\ \cmidrule(lr){2-3} \cmidrule(lr){4-5}
                         & Statistic         & p-value               & Statistic          & p-value                 \\ \midrule
Belgium                  & 1.0676             & 0.9254               & -1.9596              & 0.0481                   \\
Germany                  & 0.7244            & 0.8704                & -1.2128  & 0.2069                  \\
Austria                  & 2.0068      & 0.9894                & -1.1840 &0.2175                  \\ \bottomrule
\end{tabular}
\caption{Dickey-Fuller Test Results for the variables in levels. The table reports the test statistic and corresponding p-value computed using the test specification without a constant or trend term.}
\label{T1}
\end{table}

\begin{table}[ht]
\centering
\begin{tabular}{@{}lcccccc@{}}
\toprule
\multirow{2}{*}{} & \multicolumn{2}{c}{Industrial Production} & \multicolumn{2}{c}{Expectations on Inflation} \\ \cmidrule(lr){2-3} \cmidrule(lr){4-5}
                         & Statistic         & p-value               & Statistic          & p-value                 \\ \midrule
Belgium                      & -23.8722                & $<$1.0000e-03              & -16.4998      & $<$1.0000e-03                   \\
Germany                          & -17.8777                 & $<$1.0000e-03              & -14.5646     & $<$1.0000e-03                   \\
Austria                    & -18.5007                & $<$1.0000e-03              & -21.3487   & $<$1.0000e-03                          \\ \bottomrule
\end{tabular}
\caption{Dickey-Fuller Test Results for the first differences of the variables. The table reports the test statistic and corresponding p-value computed using the test specification without a constant or trend term.}
\label{T4}
\end{table}

To determine the dynamic order of the system, we fit VAR models in levels with lag orders ranging from 0 to 4 and compare them using the Bayesian Information Criterion (BIC). The results are reported in Table~\ref{T5}.

The BIC is minimized at lag order 1, implying a VAR(1) representation. This corresponds to a VECM with no lagged differences, and therefore to an ECC-MAR specification with no short-run terms.
\begin{table}[ht]
\centering
\begin{tabular}{@{}ccccc@{}}
\toprule
VAR Order (\( k \)) & BIC Value \\ \midrule
0                   & 9.2603e+03   \\
1                   & 6.9924e+03   \\
2                   & 7.0555e+03   \\
3                   & 7.1528e+03  \\
4                   & 7.2581e+03   \\ \bottomrule
\end{tabular}
\caption{BIC values for VAR($k$) models. The table reports the Bayesian Information Criterion (BIC) for vector autoregression models fitted to the levels with lag orders ranging from 0 to 4.}
\label{T5}
\end{table}

Having established that the variables are \(I(1)\), we next determine the total cointegration rank of the vectorized process \(\mathrm{vec}(X_t)\) using the Johansen trace test. The results are reported in Table~\ref{T6}.

The null hypothesis \(r\leq 3\) is rejected, whereas \(r\leq 4\) cannot be rejected. We therefore set the total cointegration rank of the vectorized system equal to \(r=4\).

\begin{table}[ht]
\centering
\begin{tabular}{@{}lccc@{}}
\toprule
\multicolumn{1}{c}{\( H_0 \)} & Statistic & Critical Value & p-Value \\ \midrule
\( r = 0 \)    & 214.1299   & 95.7541        & 0.0010  \\
\( r \leq 1 \) & 128.6881   & 69.8187        & 0.0010  \\
\( r \leq 2 \) & 73.5017    & 47.8564        & 0.0010  \\
\( r \leq 3 \) & 33.9020    & 29.7976        & 0.0160  \\
\( r \leq 4 \) & 12.9282    & 15.4948        & 0.1180  \\
\( r \leq 5 \) & 3.1230     & 3.8415         & 0.0772  \\ \bottomrule
\end{tabular}
\caption{Trace Test Results for Cointegration Rank Determination. The table reports the test statistic, critical value, and p-value for the null hypotheses \(H_0: r \leq i\) (with \(i = 0, 1, \ldots, 5\)).}
\label{T6}
\end{table}

Taken together, the preliminary analysis supports the specification adopted in the main text: the six variables are treated as \(I(1)\), the lag order is set to one, and the total vectorized cointegration rank is fixed at \(r=4\).

\end{document}